\numberwithin{equation}{section}
\newtheorem{theorem}{Theorem}[section]
\newtheorem{lemma}[theorem]{Lemma}
\newtheorem{proposition}[theorem]{Proposition}
\newtheorem{prop}[theorem]{Proposition}
\newtheorem{cor}[theorem]{Corollary}
\newtheorem{example}[theorem]{Example}
\newtheorem{remark}[theorem]{Remark}
\newtheorem{definition}[theorem]{Definition}
\def\eps{\varepsilon }
\def\D{\partial }
\newcommand{\na}{{\nabla}}
\renewcommand{\div}{{\rm div}}
\newcommand{\RR}{\mathbb{R}}
\newcommand{\cO}{\mathcal{O}}
\newcommand{\cS}{\mathcal{S}}
\newcommand{\cE}{\mathcal{E}}
\newcommand{\dt}{\frac{d}{dt}}
\newcommand{\CalB}{\mathcal{B}}
\newcommand{\CalF}{\mathcal{F}}
\newcommand{\CalS}{\mathcal{S}}
\newcommand{\const}{\text{\rm constant}}
\newcommand{\e}{{\epsilon}}
\newcommand{\Span}{{\rm span }  \,}
\def\bb1{{1\!\!1}}
\def\bU{{\bar{U}}}
\def\bW{{\bar{W}}}
\def\tx{\tilde x}
\def\R{\Re e}
\def\I{\Im m}
\def\cD{\mathcal{D}}
\newcommand{\iprod}[1]{{\langle{#1}\rangle}_0}
\newcommand{\wprod}[1]{\langle{#1}\rangle}
\newcommand{\txi}{{\tilde \xi}}
\begin{document}

\title [Stability of multi-dimensional boundary layers]
{Long-time stability of multi-dimensional noncharacteristic viscous
boundary layers}

\author[T. Nguyen and K. Zumbrun]{Toan Nguyen and Kevin Zumbrun}

\date{\today}

\thanks{This work was supported in part by the National Science Foundation award number DMS-0300487.}

\address{Department of Mathematics, Indiana University, Bloomington, IN 47402}
\email{nguyentt@indiana.edu}
\address{Department of Mathematics, Indiana University, Bloomington, IN 47402}
\email{kzumbrun@indiana.edu}

\maketitle

\begin{abstract}
We establish long-time stability of multi-dimensional
noncharacteristic boundary layers of a class of
hyperbolic--parabolic systems including the compressible
Navier--Stokes equations with inflow [outflow] boundary conditions,
under the assumption of strong spectral, or uniform Evans,
stability. Evans stabiity has been verified for small-amplitude
layers by Gu\`es, M\'etivier, Williams, and Zumbrun.  For
large-amplitude layers, it may be efficiently checked numerically,
as done in the one-dimensional case by Costanzino, Humpherys,
Nguyen, and Zumbrun.
\end{abstract}

\tableofcontents


\section{Introduction}
We consider a boundary layer, or stationary solution,
\begin{equation}\label{profile}
\tilde U=\bU(x_1), \quad \lim_{z\to +\infty} \bU(z)=U_+, \quad
\bU(0)=\bar U_0
\end{equation}
of a system of conservation laws on the quarter-space
\begin{equation}\label{sys}
\tilde U_t +  \sum_jF^j(\tilde U)_{x_j} = \sum_{jk}(B^{jk}(\tilde
U)\tilde U_{x_k})_{x_j}, \quad x\in \mathbb{R}^{d}_+ =
\{x_1>0\},\quad t>0,
\end{equation}
$\tilde U,F^j\in \mathbb{R}^n$, $B^{jk}\in\mathbb{R}^{n \times n}$,
with initial data $\tilde U(x,0)=\tilde U_0(x)$ and Dirichlet type
boundary conditions specified in \eqref{inBC}, \eqref{outBC} below.
A fundamental question
connected to the physical motivations from aerodynamics
%
is whether or not such boundary layer solutions are {\it stable} in
the sense of PDE, i.e., whether or not a sufficiently small
perturbation of $\bU$ remains close to $\bU$, or converges
time-asymptotically to $\bU$, under the evolution of \eqref{sys}.
That is the question we address here.

\subsection{Equations and assumptions}
We consider the general hyperbolic-parabolic system of conservation
laws \eqref{sys} in conserved variable $\tilde U$, with
$$\tilde U = \begin{pmatrix}\tilde u\\
\tilde v\end{pmatrix}, \quad B=\begin{pmatrix}0 & 0 \\
b^{jk}_1 & b^{jk}_2\end{pmatrix},$$
$\tilde u\in \RR^{n-r}$, and $\tilde v\in \RR^{r}$, where $$
\Re\sigma
\sum_{jk} b_2^{jk}\xi_j\xi_k \ge
\theta |\xi|^2>0, \quad \forall \xi \in \RR^n\backslash \{0\}.$$

Following \cite{MaZ4,Z3,Z4}, we assume that equations \eqref{sys}
can be written,
alternatively, after a triangular change of coordinates
\begin{equation}\label{Wcoord}
\tilde W:=\tilde W(\tilde U) =\begin{pmatrix}\tilde w^I(\tilde u)\\
 \tilde w^{II}(\tilde u, \tilde v)\end{pmatrix},
\end{equation}
in {\em the quasilinear, partially symmetric hyperbolic-parabolic
form}
\begin{equation}\label{symmetric}\tilde A^0 \tilde W_t + \sum_j\tilde A^j\tilde W_{x_j} = \sum_{jk}(\tilde
B^{jk} \tilde W_{x_k})_{x_j} + \tilde G,
\end{equation}
where, defining $\tilde W_+:=\tilde W(U_+)$,
\medskip

(A1) $\tilde A^j(\tilde W_+),\tilde A^0,\tilde A^1_{11}$ are
symmetric, $\tilde A^0$ block diagonal, $\tilde A^0\ge \theta_0>0$,
\medskip

(A2) for each $\xi \in \RR^d\setminus \{0\}$, no eigenvector of
$\sum_j\xi_j\tilde A^j(\tilde A^0)^{-1}(\tilde W_+)$ lies in the
kernel of $\sum_{jk}\xi_j\xi_k\tilde B^{jk}(\tilde A^0)^{-1}(\tilde
W_+)$,
\medskip

(A3) $\tilde B^{jk}=\begin{pmatrix}0 & 0 \\ 0 & \tilde
b^{jk}\end{pmatrix}$, $\sum\tilde b^{jk}\xi_j\xi_k\ge
\theta|\xi|^2$, and $\tilde G=\begin{pmatrix}0\\\tilde g
\end{pmatrix}$ with $\tilde g(\tilde W_x,\tilde W_x)=\cO(|\tilde
W_x|^2).$
\medskip

Along with the above structural assumptions, we make the following
technical hypotheses:
\medskip

(H0) $F^j, B^{jk}, \tilde A^0, \tilde A^j, \tilde B^{jk}, \tilde
W(\cdot), \tilde g(\cdot,\cdot) \in C^{s}$, with 
$s\ge [(d-1)/2]+5$ in
our analysis of linearized stability, and 
$s\ge s(d):=[(d-1)/2]+7$ in
our analysis of nonlinear stability.
\medskip

(H1) $\tilde A_1^{11}$ is either strictly positive or strictly
negative, that is, either $\tilde A_1^{11}\ge \theta_1>0,$ or
$\tilde
A^{11}_1\le -\theta_1<0$. (We shall 
call these cases the {\em inflow case} or {\em outflow case},
correspondingly.)
\medskip

(H2) The eigenvalues of $dF^1(U_+)$ are 
distinct and nonzero.

\medskip

(H3) The eigenvalues of $\sum_j dF^j_\pm \xi_j$ 
have constant multiplicity with respect to $\xi\in \RR^d$, $\xi\ne 0$.

\medbreak

(H4)
The set of branch points of the eigenvalues of
$(\tilde A^1)^{-1}(i\tau \tilde A^0+ \sum_{j\ne 1} i\xi_j\tilde A^j)_\pm$,
$\tau \in \RR$, $\tilde \xi\in \RR^{d-1}$
is the (possibly intersecting) union of
finitely many smooth curves $\tau=\eta_q^\pm(\tilde \xi)$, on which
the branching eigenvalue has constant multiplicity $s_q$ (by
definition $\ge 2$).

\medbreak

\noindent Condition (H1) corresponds to hyperbolic--parabolic
noncharacteristicity, while
(H2) is the condition for the hyperbolicity at $U_+$ of the associated
first-order hyperbolic system obtained by dropping second-order terms. The
assumptions (A1)-(A3) and (H0)-(H2) are satisfied for gas dynamics
and MHD with van der Waals equation of state under inflow or outflow
conditions; see discussions in \cite{MaZ4,CHNZ,GMWZ5,GMWZ6}.
Condition (H3) holds always for gas dynamics, but fails always for
MHD in dimension $d\ge 2$.
Condition (H4) is a technical requirement of the analysis
introduced in \cite{Z2}.  It is satisfied always in dimension $d=2$
or for rotationally invariant systems in dimensions $d\ge 2$, for
which it serves only to define notation; in particular, it holds
always for gas dynamics.

We also assume:

(B) Dirichlet boundary conditions in $\tilde W$-coordinates:
\begin{equation}\label{inBC}
(\tilde w^I, \tilde w^{II})(0,\tx,t)=\tilde h(\tx,t):=(\tilde
h_1,\tilde h_2)(\tx,t)\end{equation} for the inflow case, and
\begin{equation}\label{outBC}
\tilde w^{II}(0,\tx,t)=\tilde h(\tx,t)\end{equation} for the outflow
case, with $x = (x_1,\tx)\in \RR^d$.
\\

This is sufficient for the main physical applications; the situation
of more general, Neumann and mixed-type boundary conditions on the
parabolic variable $v$ can be treated as discussed in
\cite{GMWZ5,GMWZ6}.

\begin{example}\label{aeroexam}
\textup{ The main example we have in mind consists of {\it laminar
solutions} $(\rho, u, e)(x_1,t)$ of the compressible Navier--Stokes
equations
\begin{equation}
\label{NSeq} \left\{ \begin{aligned}
 & \D_t \rho +  \div (\rho u) = 0
 \\
 &\D_t(\rho  u) + \div(\rho u^tu)+ \na p =
\eps \mu \Delta u + \eps(\mu+\eta) \nabla \div u
 \\
 &
 \D_t(\rho E) + \div\big( (\rho E  +p)u\big)=
\eps\kappa \Delta T +
\eps \mu \div\big( (u\cdot \nabla) u\big) \\
& \qquad \qquad \qquad \qquad \qquad \qquad + \eps(\mu+\eta)
\nabla(u\cdot \div u),
 \end{aligned}\right.
\end{equation}
$x\in \RR^d$, on a half-space $x_1>0$, where $\rho$ denotes density,
$u\in \RR^d$ velocity, $e$ specific internal energy,
$E=e+\frac{|u|^2}{2}$ specific total energy, $p=p(\rho, e)$
pressure, $T=T(\rho, e)$ temperature, $\mu>0$ and $|\eta|\le \mu$
first and second coefficients of viscosity, $\kappa>0$ the
coefficient of heat conduction, and $\eps>0$ (typically small) the
reciprocal of the Reynolds number, with no-slip {\it suction-type}
boundary conditions on the velocity,
$$
 u_{j}(0, x_2, \dots, x_d)=0, \, j\ne 1
\quad \hbox{\rm and} \quad  u_1(0,x_2, \dots, x_d)= V(x)< 0,
$$
and prescribed temperature, $ T(0,x_2, \dots, x_d)= T_{wall}(\tilde x).  $
Under the standard assumptions $p_\rho$, $T_e>0$, this can be seen
to satisfy all of the hypotheses (A1)--(A3),
(H0)--(H4), (B) in the {\it outflow case} \eqref{outBC}; 
indeed these are satisfied also under much weaker van
der Waals gas assumptions \cite{MaZ4,Z3,CHNZ,GMWZ5,GMWZ6}. In
particular, boundary-layer solutions are of noncharacteristic type,
scaling as $(\rho, u, e)= (\bar \rho, \bar u, \bar e)(x_1/\eps)$,
with layer thickness $\sim \eps$ as compared to the $\sim
\sqrt{\eps}$ thickness of the characteristic type found for an
impermeable boundary. }

\textup{ This corresponds to the situation of an airfoil with
microscopic holes through which gas is pumped from the surrounding
flow, the microscopic suction imposing a fixed normal velocity while
the macroscopic surface imposes standard temperature conditions as
in flow past a (nonporous) plate. This configuration was suggested
by Prandtl and tested experimentally by G.I. Taylor as a means to
reduce drag by stabilizing laminar flow; see \cite{S,Bra}. It was
implemented in the NASA F-16XL experimental aircraft program in the
1990's with reported 25\% reduction in drag at supersonic speeds
\cite{Bra}.\footnote{ See also NASA site
http://www.dfrc.nasa.gov/Gallery/photo/F-16XL2/index.html} Possible
mechanisms for this reduction are smaller thickness $\sim
\eps<<\sqrt{\eps}$ of noncharacteristic boundary layers as compared
to characteristic type, and greater stability, delaying the
transition from laminar to turbulent flow. In particular, stability
properties appear to be quite important for the understanding of
this phenomenon. For further discussion, including the related
issues of matched asymptotic expansion, multi-dimensional effects,
and more general boundary configurations, see \cite{GMWZ5}. }
\end{example}

\begin{example}\label{aeroexamin}
\textup{Alternatively, we may consider the compressible Navier--Stokes 
equations \eqref{NSeq} with {\it blowing-type} boundary conditions
$$
 u_{j}(0, x_2, \dots, x_d)=0, \, j\ne 1
\quad \hbox{\rm and} \quad  u_1(0,x_2, \dots, x_d)= V(x)> 0,
$$
and prescribed temperature and pressure
$$
\begin{aligned}
 T(0,x_2, \dots, x_d)&= T_{wall}(\tilde x),
\quad
 p(0,x_2, \dots, x_d)= p_{wall}(\tilde x)\\   
\end{aligned}
$$
(equivalently, prescribed temperature and density).
Under the standard assumptions $p_\rho$, $T_e>0$
on the equation of state (alternatively,
van der Waals gas assumptions), this can be seen
to satisfy hypotheses (A1)--(A3),
(H0)--(H4), (B) in the {\it inflow case} \eqref{inBC}. 
}
\end{example}

\begin{lemma}[\cite{MaZ3,Z3,GMWZ5,NZ}]\label{lem-profile-decay}
Given (A1)-(A3) and (H0)-(H2), a standing wave solution
\eqref{profile} of \eqref{sys}, (B) satisfies
\begin{equation}\label{layerdecay}
\Big|(d/dx_1)^k(\bU - U_+)\Big|\le C e^{-\theta x_1},
\quad 0\le k\le s+1,
\end{equation}
as $x_1\to +\infty$, $s$ as in (H0). Moreover, a solution, if it
exists, is in the inflow or strictly parabolic case unique; in the
outflow case it is locally unique.
\end{lemma}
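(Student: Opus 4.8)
The plan is to study the profile ODE obtained by inserting $\tilde U = \bU(x_1)$ into \eqref{sys}, integrating once in $x_1$, and analyzing the resulting first-order system as a dynamical system in the variable $z = x_1$, with the rest point $U_+$. First I would pass to the partially symmetric $\tilde W$-coordinates \eqref{Wcoord}, where the steady equation becomes $\tilde A^1 \bW' = (\tilde B^{1k}\bW')' + \tilde G$; using the block structure (A3) ($\tilde B^{jk}$ supported on the $v$-block, $\tilde G$ quadratic in derivatives) one integrates the $u$-component directly and rewrites the system as $\bW' = \mathcal{F}(\bW)$ on the reduced set where the $v$-derivative is solved for. The linearization of this reduced vector field at $U_+$ is, up to the $u/v$ block splitting, governed by $(\tilde b^{11})^{-1}\tilde A^1_{11}(W_+)$ on the parabolic block and by $\tilde A^1_{11}(W_+)$ — equivalently $dF^1(U_+)$ after the change of variables — on the hyperbolic block. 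The exponential bound \eqref{layerdecay} then follows from the stable-manifold theorem together with a count showing the linearization at $U_+$ is hyperbolic (no center directions): on the $v$-block positivity of $\tilde b^{11}$ from (A3) plus symmetry from (A1) gives a spectral gap, and on the $u$-block hypothesis (H2), that the eigenvalues of $dF^1(U_+)$ are distinct and nonzero, rules out zero eigenvalues. The uniform rate $\theta$ and the bound through $k \le s+1$ derivatives come from bootstrapping: once $|\bW - W_+| \le Ce^{-\theta x_1}$ is known, differentiating $\bW' = \mathcal{F}(\bW)$ repeatedly and using $\mathcal{F} \in C^s$ (H0) propagates the decay to all derivatives up to order $s+1$.

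For the uniqueness and local-uniqueness statements I would count dimensions of the stable manifold of $U_+$ against the number of boundary conditions imposed at $x_1 = 0$. In the inflow case the full vector $\tilde W(0,\cdot)$ is prescribed by \eqref{inBC}, so the boundary layer is the unique trajectory in the stable manifold of $U_+$ hitting that prescribed point (if it exists); the strictly parabolic case is analogous with no hyperbolic block. In the outflow case only $\tilde w^{II}(0,\cdot)$ is prescribed by \eqref{outBC}, leaving the $\tilde w^I$ data free, so one only gets local uniqueness: the dimension of the stable subspace must match the number of outflow boundary conditions plus the dimension of the free data, which requires a Lopatinski-type transversality count using (H1) (noncharacteristicity fixes the dimension of the stable subspace via the signature of $dF^1(U_+)$) and (H2). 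I would organize this as: (i) identify the stable subspace $S_+$ of the linearization and compute $\dim S_+$ from (H1)–(H2); (ii) invoke the stable manifold theorem to get a smooth stable manifold $\mathcal{M}_+$ tangent to $S_+$; (iii) intersect $\mathcal{M}_+$ with the affine boundary-condition set and check the intersection is transverse and (locally) a single point.

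The main obstacle I anticipate is the hyperbolicity of the linearization at $U_+$ in the outflow case together with the precise dimension count: one must verify that the reduced profile vector field has no purely imaginary or zero eigenvalues at $U_+$, and the hyperbolic block is exactly where noncharacteristicity (H1) — not mere hyperbolicity — is essential, since it is what prevents a zero eigenvalue coming from the $u$-block and thus guarantees the stable manifold has the right codimension to meet the boundary data cleanly. The estimates themselves are routine ODE bootstrapping; the bookkeeping of which coordinates are constrained at the boundary, and reconciling the "unique/locally unique" dichotomy with the inflow/outflow distinction in (H1) and (B), is the delicate part, and this is presumably why the lemma is attributed to the sequence of references \cite{MaZ3,Z3,GMWZ5,NZ} rather than proved from scratch here.
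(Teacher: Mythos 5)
The paper itself does not prove this lemma: it defers entirely to Lemma 1.3 of \cite{NZ}, and the argument used there (and in \cite{MaZ3,Z3,GMWZ5}) is in outline exactly what you propose — integrate the standing-wave equation once in $x_1$, reduce to a first-order profile ODE, check that (H1)--(H2) together with the symmetric structure make the endstate a hyperbolic rest point, and deduce \eqref{layerdecay} from the stable manifold theorem plus bootstrapping of derivatives using (H0). So your overall route is the standard one.

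Two points in your sketch are imprecise enough to flag. First, after one integration the $u$-rows are \emph{algebraic}, not dynamical: since the first block row of $B^{11}$ vanishes, integration gives $F^1_I(\bU)\equiv F^1_I(U_+)$, and (H1) (invertibility of the $11$ block) lets you solve $\bar u=\phi(\bar v)$; the profile ODE is then the $r$-dimensional equation in $\bar v$ alone, and there is no ``hyperbolic block'' of the flow whose linearization is $\tilde A^1_{11}$ or $dF^1$. Hyperbolicity of the rest point is seen by noting that the linearization of the reduced equation is $(\tilde b^{11})^{-1}\bar A_*$, where $\bar A_*$ is the Schur complement of $\tilde A^1_{11}$ in $\tilde A^1(\tilde W_+)$: it is similar to a symmetric matrix (by (A1) and positivity of $\tilde b^{11}$), hence has real spectrum, and is nonsingular because $\det\tilde A^1=\det\tilde A^1_{11}\,\det\bar A_*\neq 0$ by (H1)--(H2); that is the precise way symmetry and the hypotheses enter, rather than separate spectral statements for $u$- and $v$-blocks. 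Second, the uniqueness dichotomy does not require a Lopatinski-type transversality or stable-manifold dimension count. In the inflow (or strictly parabolic) case the full trace $\tilde W(0)$, hence $\bar v(0)$, is prescribed, and uniqueness is simply uniqueness for the ODE initial-value problem; in the outflow case only $\tilde w^{II}(0)$ is prescribed, and recovering $\bar v(0)$ from it through the slaving relation $\bar u=\phi(\bar v)$ is an implicit-function-theorem step valid only locally — which is exactly why the conclusion there is only local uniqueness.
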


\begin{proof}
See Lemma 1.3, \cite{NZ}.
\end{proof}

\subsection{The Evans condition and strong spectral stability}
The linearized equations of \eqref{sys}, (B) about $\bar U$ are
\begin{equation}\label{lind}
 U_t = LU :=
\sum_{j,k}( B^{jk} U_{x_k})_{x_j} - \sum_{j}( A^{j} U)_{x_j}
\end{equation}
with initial data $U(0)=U_0$ and boundary conditions in (linearized)
$\tilde W$-coordinates of
$$
W(0,\tx,t):=( w^I, w^{II})^T(0,\tx,t)=h
$$
for the inflow case, and
$$
 w^{II}(0,\tx,t)=h
$$
for the outflow case, with $x = (x_1,\tx)\in \RR^d$, where
$W:=(\partial \tilde W/\partial U)(\bar U)U$.

A necessary condition for linearized stability is weak spectral
stability, defined as nonexistence of unstable spectra $\Re \lambda
>0$ of the linearized operator $L$ about the wave. As described in
Section \ref{evanssubsub}, this is equivalent to nonvanishing for
all $\tilde \xi\in \RR^{d-1}$, $\Re \lambda>0$ of the {\it Evans
function}
$$
D_L(\tilde \xi, \lambda)
$$
(defined in \eqref{eq:Evans}), a Wronskian associated with the
Fourier-transformed eigenvalue ODE.

\begin{definition}\label{strongspectral}
\textup{ We define {\it strong spectral stability} as {\it uniform
Evans stability}:
\begin{equation}\tag{D}
|D_L(\tilde \xi, \lambda)|\ge \theta(C)>0
\end{equation}
for $(\tilde \xi, \lambda)$ on bounded subsets $C\subset \{\tilde
\xi\in \RR^{d-1}, \, \Re \lambda \ge 0\}\setminus\{0\}$. }
\end{definition}

For the class of equations we consider, this is equivalent to the
uniform Evans condition of \cite{GMWZ5,GMWZ6}, which includes an
additional high-frequency condition that for these equations is
always satisfied
(see Proposition 3.8, \cite{GMWZ5}).
A fundamental result proved in \cite{GMWZ5} is that small-amplitude
noncharacteristic boundary-layers are always strongly spectrally
stable.\footnote{
The result of \cite{GMWZ5} 
applies also to more general types of boundary conditions
and in some situations to systems with variable multiplicity
characteristics, including, in some parameter ranges, MHD.}

\begin{prop}[\cite{GMWZ5}]\label{specstab}
Assuming (A1)-(A3),
(H0)-(H3), (B)
for some fixed endstate (or compact set of endstates) $U_+$,
boundary layers with amplitude
$$
\|\bar U-U_+\|_{L^\infty[0,+\infty]}
$$
sufficiently small satisfy the strong spectral stability condition
(D).
\end{prop}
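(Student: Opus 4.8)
\emph{Proof proposal.} The plan is to deduce (D) for small-amplitude layers from the corresponding property of the \emph{constant-coefficient} linearized problem at the endstate $U_+$, by a continuity argument in the amplitude $\e:=\|\bU-U_+\|_{L^\infty}$. Since the set $C$ in (D) is bounded, no high-frequency analysis is required; it suffices to treat $(\txi,\lambda)$ in a fixed compact annular region bounded away from the origin, and $(\txi,\lambda)$ near the origin, separately.

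First recall that $D_L(\txi,\lambda)$ is, up to a fixed analytic normalization, the Wronskian pairing the subspace of solutions of the Fourier--Laplace transformed eigenvalue ODE that decay as $x_1\to+\infty$ against the subspace cut out by the linearized boundary conditions at $x_1=0$. By Lemma \ref{lem-profile-decay} the coefficients of this ODE differ from their $x_1=+\infty$ limits by a term of size $\cO(\e)$ decaying like $e^{-\theta x_1}$; hence, after the standard conjugation to constant coefficients modulo an exponentially small perturbation (the Gap/Tracking Lemma of \cite{MaZ4,Z2}), on any compact set $K\subset\{\R\lambda\ge0,\ (\txi,\lambda)\ne0\}$ bounded away from the origin the decaying manifold at $+\infty$ is an $\cO(\e)$ perturbation of the stable subspace of the limiting coefficient matrix, analytically in $(\txi,\lambda)$ and continuously in $\e$, while the boundary subspace is independent of $\e$. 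Thus $D_L(\txi,\lambda)\to\Delta_0(\txi,\lambda)$ as $\e\to0$, uniformly on $K$, where $\Delta_0$ is the Evans/Lopatinski determinant of the constant-coefficient hyperbolic--parabolic boundary value problem obtained by freezing all coefficients at $U_+$, with Dirichlet conditions \eqref{inBC} (resp.\ \eqref{outBC}). One then shows $\Delta_0\ne0$ on $K$: the parabolic/fast directions never obstruct, and the remaining hyperbolic obstruction is precisely the Kreiss--Lopatinski determinant of the reduced first-order system $U_t+\sum_j A^j_+U_{x_j}=0$ with the induced boundary conditions, which by (H1)--(H2) is noncharacteristic with the correct count of prescribed incoming modes (all of them in the inflow case, the correct complement in the outflow case), hence nonvanishing on $\{\R\lambda\ge0\}\setminus\{0\}$ by well-posedness of the associated IBVP (cf.\ \cite{GMWZ5}, and \cite{Z3,CHNZ} for the one-dimensional reduction). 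By compactness, $|D_L|\ge\theta(K)>0$ for $\e$ small.

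It remains to treat $(\txi,\lambda)$ near the origin, where the decay rates of the eigenvalue ODE degenerate and the above convergence fails. Here one rescales $(\txi,\lambda)=\rho(\hat\xi,\hat\lambda)$ with $\rho\downarrow0$, $|(\hat\xi,\hat\lambda)|=1$, $\R\hat\lambda\ge0$, and rescales $x_1$ by $\rho$, passing to the slow/low-frequency regime of \cite{Z2,GMWZ5}. In these coordinates the suitably renormalized Evans function extends continuously to $\rho=0$ and there factors as a product of a nonvanishing parabolic/fast Wronskian and the \emph{reduced inviscid} Lopatinski determinant $\Delta_{\rm red}(\hat\xi,\hat\lambda)$ of the limiting constant-coefficient first-order IBVP at $U_+$. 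The content of the proposition is then that $\Delta_{\rm red}$ does not vanish on the hemisphere $\{|(\hat\xi,\hat\lambda)|=1,\ \R\hat\lambda\ge0\}$; this uses (H2)--(H3), so that the characteristic structure is that of a uniformly hyperbolic system with smooth, constant-multiplicity modes, together with the precise form of the boundary conditions in $\tilde W$-coordinates, and in the small-amplitude regime one may, if needed, reduce it via a further homotopy in $\e$ down to a model profile where $\Delta_{\rm red}$ is explicit. Combining this with continuity in $\e$ and compactness of the hemisphere, and noting that for each fixed $\rho_0>0$ the region $\rho_0\le|(\txi,\lambda)|\le R$ is covered by the previous step, yields $|D_L|\ge\theta(C)>0$ on all of $C$ for $\e$ sufficiently small, uniformly over $C$ and over the given compact set of endstates, which is exactly (D).

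The main obstacle is the low-frequency step: establishing the limiting factorization of the Evans function as $(\txi,\lambda)\to0$ and, above all, verifying that $\Delta_{\rm red}$ is nonvanishing for the actual boundary conditions. In the inflow case this is essentially automatic since all incoming characteristics are prescribed; the genuinely delicate case is the outflow case, where the reduced hyperbolic problem is not maximally dissipative and one must exploit the detailed structure of the noncharacteristic profile (or the homotopy in $\e$ mentioned above) to pin down $\Delta_{\rm red}$. All of the mid-frequency estimates are, by contrast, routine consequences of the conjugation lemmas and the analytic dependence of the stable subspace on parameters.
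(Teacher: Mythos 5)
First, note that the paper itself does not prove this proposition: it is imported verbatim from \cite{GMWZ5} (see also \cite{R3} in one dimension), so the only fair comparison is with the argument of that reference, whose content is a degenerate Kreiss-symmetrizer/energy construction exploiting the symmetrizability (A1)--(A2) and the dissipative form of the boundary conditions (B). Your outline (conjugation to the constant-coefficient problem at $U_+$ plus a continuity argument in the amplitude away from the frequency origin, then a rescaling and factorization of the Evans function into a fast parabolic factor times a reduced inviscid Lopatinski determinant near the origin) is indeed the standard skeleton of such proofs, and the observation that the profile decay rate in Lemma \ref{lem-profile-decay} is uniform in amplitude (because $dF^1(U_+)$ is noncharacteristic by (H1)--(H2)) makes the continuity step away from the origin plausible.

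However, as written the proposal has genuine gaps exactly where the cited result does its work. (i) The nonvanishing of the limiting constant-coefficient determinant $\Delta_0$ on compacta is asserted ``by well-posedness of the associated IBVP''; this is not an argument. A correct count of incoming modes from (H1)--(H2) gives only that the problem is noncharacteristic with consistent dimensions; it does not imply the uniform Lopatinski/Evans condition, and well-posedness in a weak sense does not yield a spectral lower bound. One must actually verify uniform stability of the limiting problem, which in \cite{GMWZ5} is done by constructing symmetrizers using the symmetric dissipative structure and the specific Dirichlet form of (B). (ii) The low-frequency step is the crux, since (D) requires a bound uniform on punctured neighborhoods of $(\tilde\xi,\lambda)=(0,0)$, and there your argument reduces to the unproved claim that the reduced inviscid Lopatinski determinant $\Delta_{\mathrm{red}}$ is nonvanishing on the closed hemisphere; the statement that the inflow case is ``essentially automatic'' is unsupported (the residual inviscid boundary conditions are obtained by transmission through the layer, not by naively prescribing incoming modes), the outflow case is deferred to an unspecified ``homotopy to a model profile,'' and nothing is said about uniformity up to the glancing set, which is precisely where ``uniform'' Lopatinski/Evans is delicate and where the Kreiss-symmetrizer machinery (and hypotheses like (H3)) is needed. (iii) Finally, the matching of the two regimes requires that the radius $\rho_0$ below which the rescaled low-frequency analysis applies can be taken independent of the amplitude; this is plausible here because the layer decay rate is uniform, but it must be argued, not assumed. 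In short, the roadmap is reasonable, but the two nonvanishing statements that constitute the actual content of the proposition are asserted rather than proved.
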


As demonstrated in \cite{SZ}, stability of large-amplitude boundary
layers may fail for the class of equations considered here, even in
a single space dimension, so there is no such general theorem in the
large-amplitude case. Stability of large-amplitude boundary-layers
may be checked efficiently by numerical Evans computations as in
\cite{BDG,Br1,Br2,BrZ,HuZ,BHRZ,HLZ,CHNZ,HLyZ1,HLyZ2}.

\subsection{Main results}
Our main results are as follows.

\begin{theorem}[Linearized stability]\label{theo-lin}
Assuming (A1)-(A3), (H0)-(H4), (B), and strong spectral
stability (D), we obtain asymptotic 
$L^1\cap H^{[(d-1)/2]+5} \rightarrow
L^p$ stability of \eqref{lind}
in dimension $d\ge 2$, and any $2\le p\le \infty$,
with rate of decay
\begin{equation}\begin{aligned} |U(t)|_{L^2}&\le C
(1+t)^{-\frac{d-1}{4}} (|U_0|_{L^1\cap H^{3}} +E_0),\\ |U(t)|_{L^p}&\le C
(1+t)^{-\frac{d}{2}(1-1/p)+1/2p} 
(|U_0|_{L^1\cap H^{[(d-1)/2]+5}}+E_0),
\end{aligned}\end{equation}
provided that the initial perturbations $U_0$ are in $L^1\cap H^3$ for $p=2$, or 
in $L^1 \cap H^{[(d-1)/2]+5}$ for $p>2$, and boundary perturbations
$h$ satisfy
\begin{equation}\label{decayh}\begin{aligned}|h(t)|_{L^2_{\tx}}&\le E_0(1+t)^{-(d+1)/4},
\\ |h(t)|_{L^\infty_{\tx}}&\le E_0(1+t)^{-d/2}
\\ |\cD_h(t)|_{L^1_{\tx} \cap H^{[(d-1)/2]+5}_{\tx}}
 &\le E_0(1+t)^{-d/2 - \epsilon},\end{aligned}\end{equation} where $\cD_h(t):=|h_t|+|h_{\tx}|+|h_{\tx\tx}|$, $E_0$ is some positive constant, and $\epsilon >0$ is arbitrary small for the case $d=2$
and $\epsilon =0$ for $d\ge 3$.
\end{theorem}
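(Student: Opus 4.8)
The plan is to reduce the statement to pointwise bounds on the Green's function of \eqref{lind} and to synthesize those into the claimed $L^p$ estimates. First I would Fourier transform in the transverse variable $\tx$ and Laplace transform in $t$, turning \eqref{lind} into a one-parameter family of resolvent equations $(\lambda - L_{\tilde\xi})\hat U = S$ on the half-line $x_1>0$, with boundary conditions inherited from (B). The resolvent kernel $G_{\lambda,\tilde\xi}(x_1,y_1)$ is built in the standard way as a Wronskian-type combination of the solutions of the Fourier--Laplace-transformed eigenvalue ODE that decay as $x_1\to+\infty$ — whose existence and exponential $x_1$-decay follow from Lemma \ref{lem-profile-decay} and the consistent splitting of the limiting coefficient matrix at $U_+$ — with the solution meeting the boundary conditions at $x_1=0$; the Evans function $D_L(\tilde\xi,\lambda)$ appears in the denominator, so that strong spectral stability (D) is exactly what bounds $G_{\lambda,\tilde\xi}$ on compact subsets of $\{\Re\lambda\ge 0\}\setminus\{0\}$.

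Next I would partition frequency space. On the intermediate region $\{\theta\le|(\tilde\xi,\lambda)|\le\theta^{-1},\ \Re\lambda\ge 0\}$, (D) together with exponential bounds on the decaying and boundary modes gives $|G_{\lambda,\tilde\xi}|\le C e^{-\eta(x_1+y_1)}$, hence after inverse transform an exponentially time-decaying, spatially localized contribution. On the high-frequency region one uses sectoriality — parabolicity of the $v$-block, noncharacteristicity (H1), and the always-valid high-frequency Evans bound of Proposition 3.8 of \cite{GMWZ5} — to get parabolic resolvent estimates, which after inverse Laplace transform and integration by parts in $\lambda$ yield a semigroup-type bound controlled by a fixed Sobolev norm of the data; this, together with the transverse-frequency differentiations needed in the Hausdorff--Young step below, is what fixes the regularity index $[(d-1)/2]+5$ (the $[(d-1)/2]$ being the transverse loss, the remainder a fixed margin). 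The delicate region is the low-frequency one, $(\tilde\xi,\lambda)\to 0$: there the slow modes of $L_{\tilde\xi}$ are governed by the limiting constant-coefficient operator at $U_+$, with dispersion relations $\lambda=\lambda_j(\tilde\xi)=-i c_j\cdot\tilde\xi-\beta_j|\tilde\xi|^2+\cdots$; by (H2)--(H4) these organize into finitely many smooth branches with controlled glancing (branch) points $\tau=\eta_q^\pm(\tilde\xi)$, and (D) keeps the associated boundary/scattering coefficients bounded. Deforming contours and applying stationary phase then gives pointwise bounds on the low-frequency Green's function as a sum of convected heat-kernel terms over the (all outgoing, by noncharacteristicity) characteristic modes, plus faster-decaying remainders. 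Because the equation lives on a half-space and the excited waves are transported out through $x_1=0$, one gains decay over the full-space rate, which is what produces the improved exponents $-\tfrac{d-1}{4}$ in $L^2$ and $-\tfrac{d}{2}(1-1/p)+\tfrac{1}{2p}$ in $L^p$.

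With these bounds I would represent the solution as
\begin{equation*}
U(x,t) = \int_0^\infty G(x,t;y)\,U_0(y)\,dy + \big(\text{boundary contribution in } h\big),
\end{equation*}
the boundary contribution coming from a Duhamel/integration-by-parts identity and involving the trace on $y_1=0$ of $G$ and of its $y_1$-derivative against $h$; this boundary kernel inherits the same pointwise structure. The interior term is then estimated by combining the pointwise Green's bounds with Young's inequality in $y_1$ and Hausdorff--Young in $\tx$ — Plancherel suffices for $p=2$, which is why only $H^3$ (a fixed number of derivatives, still needed for the parabolic high-frequency bookkeeping) is required there, whereas $p>2$ forces the transverse-frequency differentiations and hence the larger index. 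The boundary term is handled by the same mechanism against the hypothesized decay \eqref{decayh} of $h$, $h_t$, $h_{\tx}$, $h_{\tx\tx}$; the time-convolution of the Green's decay with the $\cD_h$ decay is absolutely convergent with room to spare for $d\ge 3$, and only borderline — whence the arbitrarily small $\epsilon$ — for $d=2$.

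The main obstacle, and the bulk of the work, is the low-frequency resolvent analysis: bounding $G_{\lambda,\tilde\xi}$ uniformly as $(\tilde\xi,\lambda)\to 0$ across the glancing sets of (H4), tracking the boundary reflection coefficients carefully enough that (D) — not merely weak spectral stability — can be used, and then carrying out the stationary-phase inversion sharply enough to obtain the half-space rates rather than the cruder full-space ones. The remaining ingredients — matching the frequency regimes, the energy estimates underlying the high-frequency semigroup bound, and the derivative bookkeeping fixing the Sobolev exponent — are routine along the lines of \cite{Z2,NZ} once the resolvent bounds are in place.
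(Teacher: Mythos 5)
Your overall framework (Fourier--Laplace transform, Evans-function construction of the resolvent kernel, frequency splitting, with the low-frequency regime as the delicate part) matches the paper's, but there is a genuine gap at the step where you handle the boundary perturbation $h$. You propose to represent the boundary contribution via the standard duality identity as an integral over $\{y_1=0\}$ of $h$ against the traces of $G$ and $G_{y_1}$, and to estimate it ``by the same mechanism.'' This is precisely the route the paper identifies as failing: those boundary-trace kernels are too singular as $t-s\to 0$ to be absolutely integrable (the $G_{y_1}$ trace costs an extra $(t-s)^{-1/2}$ on top of the transverse decay), and, more fundamentally, in the multi-dimensional argument you never actually produce pointwise bounds on the time-domain Green function or its trace at $y_1=0$ --- the low-frequency analysis yields only pointwise bounds on the \emph{resolvent} kernel on the contour $\Gamma^{\tilde\xi}$, which are then converted to $L^p(x_1)$ resolvent bounds and integrated in frequency, so there is nothing to plug into your boundary-trace formula. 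The paper's way around this is a specific device you are missing: integrate the adjoint Green-function identity against the exponentially decaying extension $g(y_1,\tilde y,s)=e^{-y_1}h(\tilde y,s)$, which converts $\Gamma h$ into interior terms $g(x,t)-\cS(t)g_0-\int_0^t\cS(t-s)(g_s-Lg)\,ds$ of exactly the form already controlled by the homogeneous solution-operator estimates (Proposition \ref{prop-BCs-est}); the $\epsilon$-loss for $d=2$ then comes from the large-time convolution $\int_0^t(1+t-s)^{-d/2}|\cD_h(s)|_{L^1}ds$, not from the short-time singularity you would face.

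Two further points of divergence worth flagging. First, your low-frequency step is mischaracterized: the paper does not perform a stationary-phase inversion to pointwise time-domain bounds, and there is no ``gain over the full-space rate'' from outgoing transport --- the $L^2$ rate $(1+t)^{-(d-1)/4}$ is that of a $(d-1)$-dimensional heat kernel (worse than full space; see Remark \ref{rate}), and it arises simply from integrating $e^{\Re\lambda t}$ times the glancing factor $\gamma_2$ along $\Gamma^{\tilde\xi}$ after the $L^p(x_1)$ resolvent bounds, with Parseval in $\tx$ for $p=2$ and an $L^1_{\txi}$ estimate for $p=\infty$. Second, the linearized operator of a hyperbolic--parabolic system is not sectorial; the high-frequency bound $\cS_2(t)=O(e^{-\theta_1 t})$ with a three-derivative loss rests on Kawashima-type (compensating-matrix) energy estimates in the Laplace--Fourier-transformed variables together with the resolvent identity, not on parabolic sectoriality --- your appeal to the GMWZ5 high-frequency Evans bound could substitute in principle, but as written this step is not justified.
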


\begin{theorem}[Nonlinear stability]\label{theo-nonlin}
Assuming (A1)-(A3), (H0)-(H4), (B), and strong spectral
stability (D), we obtain asymptotic $L^1\cap H^s \rightarrow L^p
\cap H^s$ stability of $\bar U$ as a solution of \eqref{sys}
in dimension $d\ge 2$, for $s\ge s(d)$ as
defined in (H0), and any $2\le p\le \infty$, with rate of decay
\begin{equation}\begin{aligned} &|\tilde U(t)-\bU|_{L^p}\le C
(1+t)^{-\frac{d}{2}(1-1/p)+1/2p} (|U_0|_{L^1\cap H^s}+E_0)\\&|\tilde
U(t)-\bU|_{H^s}\le C (1+t)^{-\frac{d-1}4} (|U_0|_{L^1\cap H^s}+E_0),
\end{aligned}\end{equation}
provided that the initial perturbations $U_0:=\tilde U_0 - \bU$ are
sufficiently small in $L^1 \cap H^s$ and boundary perturbations
$h(t):=\tilde h(t) - W(\bU_0)$ satisfy \eqref{decayh} and
\begin{equation} \CalB_h(t) \le E_0
(1+t)^{-\frac{d-1}{4}},\end{equation} with sufficiently small $E_0$,
where the boundary measure $\CalB_h$ is defined as
\begin{equation}\label{Bdry-out}
\begin{aligned}\CalB_h(t):=
 |h|_{H^s(\tx)}+\sum_{i=0}^{[(s+1)/2]}|\partial_t^ih|_{L^2(\tx)}
\end{aligned}\end{equation}
for the outflow case, and similarly
\begin{equation}\label{Bdry-in} \CalB_h(t):=
|h|_{H^s(\tx)}+ \sum_{i=0}^{[(s+1)/2]}|\partial_t^ih_2|_{L^2(\tx)}+
\sum_{i=0}^s|\partial_t^ih_1|_{L^2(\tx)}
\end{equation}
for the inflow case.
\end{theorem}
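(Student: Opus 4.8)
The plan is to reduce the nonlinear problem to the linearized estimates of Theorem~\ref{theo-lin} via a Duhamel iteration, coupled with a Kawashima-type \emph{nonlinear damping} energy estimate that recovers the top Sobolev norm from the lower ($L^2$) norm. Write $U:=\tilde U-\bU$ for the perturbation in conserved variables. Subtracting the profile equation from \eqref{sys} and Taylor-expanding $F^j,B^{jk}$ about $\bU(x_1)$, one finds that $U$ solves
\begin{equation*}
U_t-LU=\sum_j\partial_{x_j}N^j(U,\na U),\qquad N^j=\cO\big(|U|^2+|U|\,|\na U|\big),
\end{equation*}
with $L$ the linearized operator of \eqref{lind}, and with boundary condition obtained from (B) by expanding $\tilde W(\bU+U)$ at $x_1=0$: the linearized trace $W|_{x_1=0}$ equals $h(t)=\tilde h(t)-\tilde W(\bU_0)$ plus a remainder quadratic in the trace of $U$. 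By Lemma~\ref{lem-profile-decay} the coefficients of $L$ and of the $N^j$ differ from their constant $U_+$-values by exponentially localized terms. Letting $S(t)$ denote the solution operator of the homogeneous-boundary linearized problem, one records the Duhamel representation
\begin{equation*}
U(t)=S(t)U_0+(\text{boundary contribution driven by }h)+\int_0^t S(t-\tau)\sum_j\partial_{x_j}N^j(\tau)\,d\tau,
\end{equation*}
where the boundary contribution is bounded by Theorem~\ref{theo-lin} applied to the effective boundary datum and the nonlinear boundary remainder is absorbed either into that datum or into the forcing.

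The second ingredient is a nonlinear energy estimate of the form
\begin{equation*}
|U(t)|_{H^s}^2\le Ce^{-\theta t}|U_0|_{H^s}^2+C\int_0^t e^{-\theta(t-\tau)}\big(|U(\tau)|_{L^2}^2+\CalB_h(\tau)^2\big)\,d\tau,
\end{equation*}
valid for $s\ge s(d)$ under an a priori smallness assumption. This follows by applying $\partial^\alpha$, $|\alpha|\le s$, to the perturbation equation, pairing against a Friedrichs symmetrizer together with Kawashima's compensating matrix so as to manufacture a full dissipation $\theta|\na U|_{H^{s-1}}^2$ from the partial parabolic dissipation in (A3), and controlling the boundary terms in the integrations by parts using noncharacteristicity (H1) (which makes $A^1$ invertible at $x_1=0$) together with the Dirichlet structure (B); the quadratic commutator and source terms are handled by Moser/Gagliardo--Nirenberg inequalities. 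This is the multi-$d$ half-space analogue of the damping estimates of \cite{MaZ4,NZ}.

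With these in hand I would run a continuous-induction (bootstrap) argument. Set
\begin{equation*}
\zeta(t):=\sup_{0\le\tau\le t}\Big[(1+\tau)^{\frac{d-1}4}\big(|U(\tau)|_{L^2}+|U(\tau)|_{H^s}\big)+(1+\tau)^{d/2}\,|U(\tau)|_{L^\infty}\Big],
\end{equation*}
and assume $\zeta(t)\le 1$ on $[0,T]$. Using the product structure of $N^j$ and Sobolev embedding one gets bounds $|N^j(\tau)|_{L^1\cap H^{[(d-1)/2]+5}}\lesssim\zeta(\tau)^2(1+\tau)^{-r}$ with $r$ large (pairing a slowly decaying high-norm factor against a fast-decaying $L^\infty$ factor), and one moves the $\partial_{x_j}$ off $N^j$ onto $S(t-\tau)$ --- which is exactly why Theorem~\ref{theo-lin} is phrased as an $L^1\cap H^{[(d-1)/2]+5}\to L^p$ bound. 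Feeding this into the Duhamel formula, estimating $S(t)U_0$ and the boundary contribution by Theorem~\ref{theo-lin} and \eqref{decayh}, estimating $|U|_{H^s}$ by the damping inequality together with the just-established $L^2$ decay, and checking the resulting time convolutions produce the claimed rates (with the extra $\e>0$ needed for the borderline case $d=2$), one obtains
\begin{equation*}
\zeta(t)\le C\big(|U_0|_{L^1\cap H^s}+E_0\big)+C\,\zeta(t)^2 .
\end{equation*}
Since $\zeta$ is continuous with $\zeta(0)=|U_0|_{L^1\cap H^s}$, smallness of the data yields $\zeta(t)\le 2C(|U_0|_{L^1\cap H^s}+E_0)$ on the maximal existence interval, which by local well-posedness in $H^s$ is all of $[0,\infty)$; the stated $L^p$ and $H^s$ decay rates (all $2\le p\le\infty$ by interpolation between $L^2$ and $L^\infty$) are then read off from the definition of $\zeta$.

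The main obstacle I expect is the boundary. First, the nonlinear boundary remainder is quadratic in the trace of $U$, hence controlled by traces of $|U|_{H^s}$ and $|U|_{L^p}$; since the top norm decays only like $(1+t)^{-(d-1)/4}$, verifying that this remainder can be absorbed into (or fed through) the decay hypotheses \eqref{decayh}, $\CalB_h$ of Theorems~\ref{theo-lin}--\ref{theo-nonlin} forces careful use of the faster $L^\infty$ decay and of interpolation/trace estimates on the hyperplane $\{x_1=0\}$. Second, the boundary terms generated in the high-order energy estimate must be shown to be controlled or of the right sign, where noncharacteristicity (H1) and the precise Dirichlet conditions (B) are essential. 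A secondary difficulty is the slow $(1+t)^{-(d-1)/4}$ decay of $|U|_{H^s}$, which is why $N^j$ must be kept in divergence form with the derivative transferred onto $S(\cdot)$ and why its quadratic product structure is exploited throughout; the associated time convolutions close only marginally when $d=2$, necessitating the arbitrarily small loss $\e$.
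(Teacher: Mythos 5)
Your overall architecture --- perturbation equation in divergence form, Duhamel representation with a boundary term, a Kawashima/Friedrichs damping estimate recovering $H^s$ from $L^2$ (this is exactly Proposition \ref{prop-energy-est}), and a continuity/bootstrap argument on a time-weighted quantity $\zeta$ --- is the same as the paper's. However, there is a genuine gap in your choice of $\zeta$: you track only $|U|_{L^2}+|U|_{H^s}$ at rate $(1+t)^{-(d-1)/4}$ and $|U|_{L^\infty}$ at rate $(1+t)^{-d/2}$, whereas the paper's $\zeta(t)$ in \eqref{zeta} additionally carries $|U|,|U_x|,|\partial_{\tx}^2U|$ in $L^{2,\infty}_{\tx,x_1}$ at the intermediate rate $(1+t)^{-(d+1)/4}$. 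These extra components are not a convenience; they are what makes the scheme close. The reason is the undercompressive-type structure of the Green function emphasized in the introduction: moving $\partial_{x_1}$ off $Q^1$ onto the solution operator does \emph{not} gain a uniform factor $(t-s)^{-1/2}$. By Proposition \ref{LFest} (the case $\beta_1=1$) one instead pays an extra term $C(1+t-s)^{-(d-1)/4}|Q|_{L^{1,\infty}_{\tx,x_1}}$ with no additional time decay, so closing requires $|Q|_{L^{1,\infty}_{\tx,x_1}}\lesssim \zeta^2(1+s)^{-(d+1)/2}$, i.e.\ precisely $|U|_{L^{2,\infty}_{\tx,x_1}},|U_x|_{L^{2,\infty}_{\tx,x_1}}\lesssim \zeta(1+s)^{-(d+1)/4}$. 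From your $\zeta$ the best available bound is $|U|_{L^{2,\infty}_{\tx,x_1}}\lesssim|U|_{H^1}\lesssim \zeta(1+s)^{-(d-1)/4}$, giving only $(1+s)^{-(d-1)/2}$ for the source; the resulting convolution $\int_0^t(1+t-s)^{-(d-1)/4}(1+s)^{-(d-1)/2}\,ds$ grows in $t$ for $d=2$ and fails by a logarithm for $d=3$ (and the corresponding $L^\infty$ convolution is worse), so the inequality $\zeta\le C(|U_0|_{L^1\cap H^s}+E_0+\zeta^2)$ cannot be derived. This is a structural obstruction in the physical dimensions, not the borderline $d=2$ issue you attribute to the $\epsilon$-loss, and it is also why the linearized input you need is not Theorem \ref{theo-lin} as stated but the refined operator bounds of Propositions \ref{LFest} and \ref{prop-HFest}, including the $L^{2,\infty}$-target and derivative estimates.

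The same omission breaks your treatment of the nonlinear boundary remainder. In the outflow case the trace datum entering the Duhamel boundary term is $B^{11}U(0,\cdot,\cdot)=$ (given data) $+\,\cO(|U(0,\cdot,\cdot)|^2)$, and to feed the quadratic part through Proposition \ref{prop-BCs-est} one must verify \eqref{hdecay}, in particular $|\cD_h|_{L^1_{\tx}}\lesssim(1+s)^{-d/2-\epsilon}$; the paper obtains this from $|U|^2_{L^{2,\infty}}+|U_x|^2_{L^{2,\infty}}+|U_{\tx\tx}|^2_{L^{2,\infty}}\lesssim\zeta^2(1+s)^{-(d+1)/2}$, which is exactly why $\partial_{\tx}^2U$ in $L^{2,\infty}_{\tx,x_1}$ is kept in $\zeta$ and why derivative bounds on $\Gamma h$ are proved. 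Trace or Sobolev bounds available from your $\zeta$ give only $(1+s)^{-(d-1)/2}$, insufficient for $d=2,3$. You correctly flag this difficulty (``careful use of the faster $L^\infty$ decay and of interpolation/trace estimates'') but supply no mechanism; the missing ideas are precisely (i) enlarging the bootstrap norm by the sup-in-$x_1$ quantities above, and (ii) the boundary-estimate device of Proposition \ref{prop-BCs-est}, integrating the dual Green-function identity against $g(y_1,\tilde y,s)=e^{-y_1}h(\tilde y,s)$ so that the otherwise too-singular boundary integrals become interior terms of exactly the form already handled by the homogeneous solution-operator bounds.
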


Combining
Theorem \ref{theo-nonlin} and Proposition \ref{specstab}, we obtain the
following small-amplitude stability result, applying in particular
to the motivating situation of Example \ref{aeroexam}.

\begin{cor}\label{smallamp}
Assuming (A1)-(A3), (H0)-(H4), (B) for some fixed endstate (or
compact set of endstates) $U_+$, boundary layers with amplitude
$$
\|\bar U-U_+\|_{L^\infty[0,+\infty]}
$$
sufficiently small are linearly and nonlinearly stable in the sense
of Theorems \ref{theo-lin} and \ref{theo-nonlin}.
\end{cor}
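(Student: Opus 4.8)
The plan is to obtain the Corollary as a direct composition of the two results already in hand, with no new analysis required. First I would observe that the hypotheses (A1)--(A3), (H0)--(H4), (B) assumed in the Corollary include in particular (A1)--(A3), (H0)--(H3), (B), so that Proposition \ref{specstab} applies: there is a threshold $\delta_0=\delta_0(U_+)>0$, uniform over the prescribed compact set of endstates, such that any standing-wave solution $\bar U$ of \eqref{profile}, \eqref{sys}, (B) with $\|\bar U-U_+\|_{L^\infty[0,+\infty]}\le\delta_0$ satisfies the strong spectral stability condition (D). (Exponential convergence of such a layer to $U_+$, needed as input to the PDE estimates, is supplied by Lemma \ref{lem-profile-decay}.)

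Next, having verified (D) for the given small-amplitude layer, I would simply invoke Theorem \ref{theo-lin} and Theorem \ref{theo-nonlin}: their standing assumptions are precisely (A1)--(A3), (H0)--(H4), (B) together with (D), each of which now holds. This yields the asserted asymptotic $L^1\cap H^{[(d-1)/2]+5}\to L^p$ linearized stability of \eqref{lind} and $L^1\cap H^s\to L^p\cap H^s$ nonlinear stability of $\bar U$, in dimension $d\ge2$ and $2\le p\le\infty$, with the decay rates stated there, for initial perturbations $U_0$ sufficiently small in the respective Sobolev norms and boundary perturbations $h$ obeying \eqref{decayh} and the boundary-measure bound on $\CalB_h$. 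Specializing to Example \ref{aeroexam} is then automatic, since the compressible Navier--Stokes equations with suction-type boundary conditions were already noted to satisfy (A1)--(A3), (H0)--(H4), (B) in the outflow case.

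There is no genuine obstacle here; the only point worth a sentence is that the two smallness requirements are logically independent and non-circular. Proposition \ref{specstab} constrains the amplitude of the background layer $\bar U$ relative to its endstate, whereas Theorems \ref{theo-lin}--\ref{theo-nonlin} constrain the size of the perturbations $(U_0,h)$ about that fixed layer; once $\|\bar U-U_+\|_{L^\infty}\le\delta_0$ the constants appearing in the perturbation bounds may depend on $\bar U$, but this is harmless because the Corollary asserts stability of each such layer individually rather than a rate uniform over the family. Thus the substantive content lies entirely in the proofs of Proposition \ref{specstab} (carried out in \cite{GMWZ5}) and of Theorem \ref{theo-nonlin}, and the Corollary itself is immediate.
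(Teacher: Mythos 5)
Your proposal is correct and follows exactly the route the paper intends: the corollary is stated there as an immediate consequence of combining Proposition \ref{specstab} (which furnishes condition (D) for sufficiently small-amplitude layers) with Theorems \ref{theo-lin} and \ref{theo-nonlin}, precisely as you argue. Your added remark distinguishing the smallness of the layer amplitude from the smallness of the perturbations $(U_0,h)$ is a sensible clarification but not a departure from the paper's argument.
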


\begin{remark}\label{rate}
\textup{ The obtained rate of decay in $L^2$ may be recognized as
that of a $(d-1)$-dimensional heat kernel, and the obtained rate of
decay in $L^\infty$ as that of a $d$-dimensional heat kernel. We
believe that the sharp rate of decay in $L^2$ is rather that of a
$d$-dimensional heat kernel and the sharp rate of decay in
$L^\infty$ dependent on the characteristic structure of the
associated inviscid equations, as in the constant-coefficient case
\cite{HoZ1,HoZ2}. }
\end{remark}

\begin{remark}\label{nec}
\textup{
In one dimension, strong spectral stability is necessary for
linearized asymptotic stability;
see Theorem 1.6, \cite{NZ}. However, in multi-dimensions, it appears
likely that, as in the shock case \cite{Z3}, there are intermediate
possibilities between strong and weak spectral stability for which
linearized stability might hold with degraded rates of decay.
In any case, the gap between the necessary weak spectral
and the sufficient strong spectral stability conditions concerns
only pure imaginary spectra $\Re \lambda=0$ on the boundary between
strictly stable and unstable half-planes,
so this should not interfere with investigation of
physical stability regions.
%
}
\end{remark}

\subsection{Discussion and open problems}

Asymptotic stability, without rates of decay, has been shown for
small amplitude noncharacteristic ``normal'' boundary layers of the isentropic
compressible Navier--Stokes equations 
with outflow boundary conditions and vanishing transverse velocity
in \cite{KK}, using energy estimates. 
Corollary \ref{smallamp} recovers this existing result
and extends it to the general arbitrary transverse velocity, 
outflow or inflow, and 
isentropic or nonisentropic (full compressible
Navier--Stokes) case, in addition giving asymptotic rates of decay.
Moreover, we treat perturbations of boundary as well as initial
data, as previous time-asymptotic investigations (with the exception
of direct predecessors \cite{YZ,NZ}) do not.
As discussed in Appendix \ref{phys}, 
the type of boundary layer relevant to the drag-reduction strategy
discussed in Examples \ref{aeroexam}--\ref{aeroexamin} is a
noncharacteristic ``transverse'' type with constant normal velocity,
complementary to the normal type considered in \cite{KK}.

The large-amplitude asymptotic stability result of Theorem
\ref{theo-nonlin} extends to multi dimensions corresponding
one-dimensional results of \cite{YZ,NZ}, reducing the problem of
stability to verification of a numerically checkable Evans
condition. See also the related, but technically rather different,
work on the small viscosity limit in \cite{MZ,GMWZ5,GMWZ6}.
By a combination of numerical Evans function computations and
asymptotic ODE estimates, spectral stability has been checked for
{\it arbitrary amplitude} noncharacteristic boundary layers of the
one-dimensional isentropic compressible Navier--Stokes equations in
\cite{CHNZ}. Extensions to the nonisentropic and multi-dimensional
case should be possible by the methods used in \cite{HLyZ1} and
\cite{HLyZ2} respectively to treat the related shock stability
problem. 

This (investigation of large-amplitude spectral stability) 
would be a very interesting direction for further investigation.
In particular, note that 
it is large-amplitude stability that is relevant to drag-reduction
at flight speeds, since the transverse relative velocity (i.e., velocity
parallel to the airfoil) is zero at the wing surface and flight speed
outside a thin boundary layer, so that variation across the
boundary layer is substantial.
We discuss this problem further in Appendix \ref{phys}
for the model isentropic case.

Our method of analysis follows the basic approach introduced in
\cite{Z2,Z3,Z4} for the study of multi-dimensional shock stability
and we are able to make use of much of that analysis without
modification. However, there are some new difficulties to be
overcome in the boundary-layer case.

The main new difficulty is that the boundary-layer case is analogous
to the {\it undercompressive shock} case rather than the more
favorable {\it Lax shock} case emphasized in \cite{Z3}, in that
$G_{y_1} \not \sim  t^{-1/2}G$ as in the Lax shock case but rather
$G_{y_1} \sim  (e^{-\theta |y_1|}+t^{-1/2})G$, $\theta>0$, as in the
undercompressive case. This is a significant difficulty; indeed, for
this reason, the undercompressive shock analysis was carried out in
\cite{Z3} only in nonphysical dimensions $d\ge 4$. On the other
hand, there is no translational invariance in the boundary layer
problem, so no zero-eigenvalue and no pole of the resolvent kernel
at the origin for the one-dimensional operator, and in this sense
$G$ is somewhat better in the boundary layer than in the shock case.

Thus, the difficulty of the present problem is roughly intermediate
to that of the  Lax and undercompressive shock cases. Though the
undercompressive shock case is still open in multi-dimensions for
$d\le 3$, the slight advantage afforded by lack of pole terms allows
us to close the argument in the boundary-layer case. Specifically,
thanks to the absence of pole terms, we are able to get a slightly
improved rate of decay in $L^\infty(x_1)$ norms, though our
$L^2(x_1)$ estimates remain the same as in the shock case. By
keeping track of these improved sup norm bounds throughout the
proof, we are able to close the argument without using detailed
pointwise bounds as in the one-dimensional analyses of \cite{HZ,RZ}.

Other difficulties include the appearance of boundary terms in
integrations by parts, which makes the auxiliary energy estimates by
which we control high-frequency effects considerably
more difficult in the boundary-layer than in the shock-layer case,
and
the treatment of boundary perturbations. In terms of the homogeneous
Green function $G$, boundary perturbations lead by a standard
duality argument to contributions consisting of integrals on the
boundary of perturbations against various derivatives of $G$, and
these are a bit too singular as time goes to zero to be absolutely
integrable. Following the strategy introduced in \cite{YZ,NZ}, we
instead use duality to convert these to less singular integrals over
the whole space, that {\it are} absolutely integrable in time.
However, we make a key improvement here over the treatment in
\cite{YZ,NZ}, integrating against an exponentially decaying test
function to obtain terms of exactly the same form already treated
for the homogeneous problem.  This is necessary for us in the
multi-dimensional case, for which we have insufficient information
about individual parts of the solution operator to estimate them
separately as in \cite{YZ,NZ}, but makes things much more
transparent also in the one-dimensional case.

Among physical systems, our hypotheses appear to apply to and
essentially only to the case of compressible Navier--Stokes
equations with inflow or outflow boundary conditions.  However, the
method of analysis should apply, with suitable modifications, to
more general situations such as MHD; see for example the recent
results on the related small-viscosity problem in
\cite{GMWZ5,GMWZ6}. The extension to MHD is a very interesting open
problem.

Finally, as pointed out in Remark \ref{nec}, the strong spectral
stability condition does not appear to be necessary for asymptotic
stability.  It would be 
interesting to develop a 
refined stability condition similarly as was done in \cite{SZ,Z2,Z3,Z4} 
for the shock case.

\section{Resolvent kernel: construction and low-frequency bounds} In
this section, we briefly recal the construction of resolvent kernel
and then establish the pointwise low-frequency bounds on
$G_{\txi,\lambda}$, by appropriately modifying the proof in
\cite{Z3} in the boundary layer context \cite{YZ,NZ}.

\subsection{Construction}
We construct a representation for the family of elliptic Green
distributions $G_{\txi,\lambda}(x_1,y_1)$, \begin{equation}
G_{\txi,\lambda}(\cdot,y_1):=(L_\txi
-\lambda)^{-1}\delta_{y_1}(\cdot),\end{equation} associated with the
ordinary differential operators $(L_\txi -\lambda)$, i.e. the
resolvent kernel of the Fourier transform $L_\txi$ of the linearized
operator $L$ of \eqref{lind}. To do so, we study the homogeneous
eigenvalue equation $(L_\txi - \lambda)U =0$, or
\begin{equation}\label{eg-eqs}\begin{aligned}\overbrace{(B^{11}U')'-(A^1U)'}^{L_0U} - &i\sum_{j\not=1}A^j\xi_jU +
i\sum_{j\not=1}B^{j1}\xi_jU' \\&+i\sum_{k\not=1}(B^{1k}\xi_kU)' -
\sum_{j,k\not=1}B^{jk}\xi_j\xi_k U - \lambda U =0,
\end{aligned}\end{equation}
with boundary conditions (translated from those in $W$-coordinates)
\begin{equation}
\begin{pmatrix}A^1_{11}-A^1_{12} (b^{11}_2)^{-1}b^{11}_1&0\\b^{11}_1& b^{11}_2\end{pmatrix}
U(0)\equiv
\begin{pmatrix}*\\0\end{pmatrix}
\end{equation}
where $*=0$ for the inflow case and is arbitrary for the outflow
case.

Define $$\Lambda^\txi:= \bigcap_{j=1}^n\Lambda_j^+(\txi)$$ where
$\Lambda_j^+(\txi)$ denote the open sets bounded on the left by the
algebraic curves $\lambda_j^+(\xi_1,\txi)$ determined by the
eigenvalues of the symbols $-\xi^2 B_+ - i\xi A_+$ of the limiting
constant-coefficient operators
$$L_{\txi+}w:=B_+ w'' - A_+w'$$ as $\xi_1$ is varied along the real axis, with $\txi$ held fixed. The curves $\lambda_j^+(\cdot,\txi)$ comprise
the essential spectrum of operators $L_{\txi+}$. Let $\Lambda$
denote the set of $(\tilde \xi, \lambda)$ such that $\lambda\in
\Lambda^{\tilde \xi}$.

For $(\tilde \xi, \lambda)\in \Lambda^{\tilde \xi}$, introduce
locally analytically chosen (in $\tilde \xi$, $\lambda$) matrices
\begin{equation} \label{phi0+} \Phi^+= (\phi^+_1,
\cdots ,\phi^+_{k})
, \quad \Phi^0= ( \phi^0_{k+1}, \cdots , \phi^0_{n+r} ),
\end{equation} and \begin{equation} \label{phi} \Phi=
(\Phi^+,\Phi^0),
\end{equation}
whose columns span the subspaces of solutions of \eqref{eg-eqs}
that, respectively, decay at $x=+\infty$ and satisfy the prescribed
boundary conditions at $x=0$,
and locally analytically chosen matrices
\begin{equation}\label{psi0+}
\Psi^0= ( \psi^0_1, \cdots , \psi^0_{k}), \quad \Psi^+= (
\psi^+_{k+1} , \cdots , \psi^+_{n+r})
\end{equation}
and \begin{equation}\label{psi} \Psi= ( \Psi^0, \Psi^+).
\end{equation}
whose columns span complementary subspaces. The existence of such
matrices is guaranteed by the general Evans function framework of
\cite{AGJ,GZ,MaZ3}; see in particular \cite{Z3,NZ}. That dimensions
sum to $n+r$ follows by a general result of \cite{GMWZ5}; see also
\cite{SZ}.

\subsubsection{The Evans function}\label{evanssubsub}
Following \cite{AGJ,GZ,SZ}, we define on $\Lambda$ the {\it Evans
function}
\begin{equation}\label{eq:Evans}
D_L(\tilde \xi, \lambda):=\det(\Phi^0,\Phi^+)_{|x=0}.
\end{equation}
Evidently, eigenfunctions decaying at
$+\infty$ and satisfying the prescribed boundary conditions at $x_1=0$
 occur precisely when the subspaces $\Span \Phi^0$
and $\Span \Phi^+$ intersect, i.e., at zeros of the Evans function
$$
D_L(\tilde \xi, \lambda)=0.
$$

The Evans function as constructed here is locally analytic in
$(\tilde \xi, \lambda)$, which is all that we need for our analysis;
we prescribe different versions of the Evans function as needed on
different neighborhoods of $\Lambda$. Note that $\Lambda$ includes
all of $\{\tilde \xi\in \RR^{d-1}$, $ \Re \lambda \ge
0\}\setminus\{0\}$, so that Definition \ref{strongspectral} is
well-defined and equivalent to simple nonvanishing, away from the
origin $(\tilde \xi,\lambda)=(0,0)$. To make sense of this
definition near the origin, we must insist that the matrices
$\Phi^j$ in \eqref{eq:Evans} remain {\it uniformly bounded}, a
condition that can always be achieved by limiting the neighborhood
of definition.

For the class of equations we consider, the Evans function may in
fact be extended continuously along rays through the origin
\cite{R2,MZ,GMWZ5,GMWZ6}.

\subsubsection{Basic representation formulae}\label{representationforms}
Define the solution operator from $y_1$ to $x_1$ of ODE
$(L_\txi-\lambda)U =0$, denoted by $\CalF^{y_1\rightarrow x_1}$, as
$$\CalF^{y_1\rightarrow x_1} = \Phi (x_1,\lambda)\Phi^{-1}(y_1,\lambda)$$
and the projections $\Pi_{y_1}^0,\Pi_{y_1}^+$ on the stable
manifolds at $0,+\infty$ as
$$\Pi_{y_1}^+ =\begin{pmatrix}\Phi^+(y_1) & 0
\end{pmatrix}\Phi^{-1}(y_1), \quad \Pi_{y_1}^0 =\begin{pmatrix} 0& \Phi^0(y_1)
\end{pmatrix}\Phi^{-1}(y_1).$$

%

We define also the dual subspaces of solutions of $(L^*_\txi -
\lambda ^*)\tilde W =0$. We denote growing solutions
\begin{equation}
\tilde{\Phi}^0= (\tilde{\phi}^0_1 , \cdots ,
\tilde{\phi}^0_{k}),\quad \tilde{\Phi}^+= (\tilde{\phi}^+_{k+1},
\cdots , \tilde{\phi}^+_{n+r}),
\end{equation} $\tilde{\Phi}:=(\tilde{\Phi}^0,\tilde{\Phi}^+)$ and decaying solutions
\begin{equation}
\tilde{\Psi}^0=(\tilde{\psi}^0_1,\cdots , \tilde{\psi}^+_{k}) ,\quad
\tilde{\Psi}^+= (\tilde{\psi}^+_{k+1} , \cdots ,
\tilde{\psi}^+_{n+r}),
\end{equation} and $\tilde{\Psi}:=(\tilde{\Psi}^0,\tilde{\Psi}^+)$,
satisfying the relations
\begin{equation}\label{duality}\begin{pmatrix}\tilde \Psi &\tilde
\Phi\end{pmatrix}_{0,+}^* \bar \CalS^\txi \begin{pmatrix}\Psi
&\Phi\end{pmatrix}_{0,+} \equiv I,\end{equation} where
\begin{equation}\bar \CalS^\txi = \begin{pmatrix} -A^{1} + iB^{1\txi} + iB^{\txi 1} & \begin{pmatrix}0\\I_r\end{pmatrix} \\
\begin{pmatrix}-(b_2^{11})^{-1}b_I^{11} & -I_r\end{pmatrix}&0\end{pmatrix}.\end{equation}

With these preparations, the construction of the Resolvent kernel
goes exactly as in the construction performed in \cite{ZH,MaZ3,Z3}
on the whole line and \cite{YZ,NZ} on the half line, yielding the
following basic representation formulae; for a proof, see
\cite{MaZ3,NZ}.

\begin{prop}\label{prop-G-rep}
We have the following representation
\begin{equation}\label{repG}
G_{\txi,\lambda} (x_1,y_1)=
\begin{cases}
   (I_n, 0)\CalF^{y_1\rightarrow x_1}\Pi_{y_1}^+ ({\bar
   S}^\txi)^{-1}(y_1)(I_n,0)^{tr},\quad & for \quad x_1>y_1,\\
-(I_n, 0)\CalF^{y_1\rightarrow x_1}\Pi_{y_1}^0 ({\bar
   S}^\txi)^{-1}(y_1)(I_n,0)^{tr},\quad &for \quad x_1<y_1.
\end{cases}
\end{equation}
\end{prop}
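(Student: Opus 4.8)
The plan is to follow the standard construction of resolvent kernels for hyperbolic--parabolic systems --- as in \cite{ZH,MaZ3,Z3} on the whole line and \cite{YZ,NZ} on the half-line --- adapted to the present Dirichlet-type boundary conditions. First I would rewrite the second-order eigenvalue equation $(L_\txi-\lambda)U=0$ as a first-order system for $\CalU=(u,v,v')^{tr}\in\CC^{n+r}$; the asymmetry in the count (only $2r$, not $2n$, derivative components) comes from the block structure of $B^{11}$, with $b_2^{11}$ invertible and no second-order term in the $u$-block, and it is exactly this count that makes $\dim\Span\Phi^0+\dim\Span\Phi^+=n+r$. In these variables, the defining property $(L_\txi-\lambda)G_{\txi,\lambda}(\cdot,y_1)=\delta_{y_1}(\cdot)I_n$ translates into the statement that $\CalU$ solves the homogeneous first-order system on $(0,y_1)$ and on $(y_1,+\infty)$, decays as $x_1\to+\infty$, satisfies the prescribed boundary conditions at $x_1=0$, and jumps by a prescribed amount $[\CalU]_{y_1}:=\CalU(y_1^+)-\CalU(y_1^-)$ across $x_1=y_1$.

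The crux is to compute this jump. Integrating the distributional first-order system once across $x_1=y_1$, the delta source --- which sits only in the $n$-dimensional $U$-block --- produces a unit jump in the conjugate ``flux'' variable, namely the variable paired with $\CalU$ in the duality relation \eqref{duality} by the symmetrizer-type matrix $\bar S^\txi$ (written $\bar{\CalS}^\txi$ there); inverting gives $[\CalU]_{y_1}=(\bar S^\txi)^{-1}(y_1)(I_n,0)^{tr}$, each of the $n$ columns being the jump generated by a source in one coordinate direction. This is the step I expect to be the main obstacle: one must verify that after the first-order reduction the flux whose jump is exactly $(I_n,0)^{tr}$ is precisely $\bar S^\txi\CalU$ restricted to its first $n$ entries, with no spurious contribution from the first- and zeroth-order terms $iB^{j1}\xi_jU'$, $i(B^{1k}\xi_kU)'$, $\sum B^{jk}\xi_j\xi_kU$, etc., which are merely of bounded variation across $y_1$ and hence contribute nothing to the jump.

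Given the jump, the explicit formula and its uniqueness follow from linear algebra. Since the resolvent kernel is only defined where $\lambda$ lies in the resolvent set of $L_\txi$, equivalently $D_L(\txi,\lambda)\ne0$, the subspace $\Span\Phi^+$ of solutions decaying at $+\infty$ and the subspace $\Span\Phi^0$ of solutions meeting the boundary conditions at $0$ are complementary in the $(n+r)$-dimensional solution space, so $\Pi^+_{y_1}+\Pi^0_{y_1}=I$. Writing $[\CalU]_{y_1}=\Pi^+_{y_1}[\CalU]_{y_1}+\Pi^0_{y_1}[\CalU]_{y_1}$ and matching the membership conditions on either side of $y_1$ then forces $\CalU(y_1^+)=\Pi^+_{y_1}(\bar S^\txi)^{-1}(y_1)(I_n,0)^{tr}$ and $\CalU(y_1^-)=-\Pi^0_{y_1}(\bar S^\txi)^{-1}(y_1)(I_n,0)^{tr}$; this is the unique choice, since any homogeneous solution that both decays at $+\infty$ and satisfies the boundary conditions would be an eigenfunction, excluded by $D_L\ne 0$. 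Propagating by $\CalF^{y_1\to x_1}=\Phi(x_1)\Phi^{-1}(y_1)$ and reading off the $u$-component via left multiplication by $(I_n,0)$ yields exactly \eqref{repG}, with the indicated signs. Finally I would note that the $G_{\txi,\lambda}$ so defined does invert $(L_\txi-\lambda)$ --- a routine check running the construction in reverse --- and that local analyticity in $(\txi,\lambda)$ is inherited from the analytic choice of $\Phi^+,\Phi^0$; the full details are as in \cite{MaZ3,NZ}.
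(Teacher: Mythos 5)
Your overall route is the same one the paper relies on: the paper gives no self-contained proof of Proposition \ref{prop-G-rep}, deferring to the standard construction of \cite{ZH,MaZ3,Z3} on the whole line and \cite{YZ,NZ} on the half line, and that construction is exactly your scheme---homogeneous solutions of \eqref{eg-eqs} on either side of $y_1$, decay at $+\infty$ on the right, the boundary conditions at $x_1=0$ on the left, a prescribed jump at $y_1$, transversality of $\Span\Phi^+$ and $\Span\Phi^0$ away from zeros of the Evans function, and propagation by $\CalF^{y_1\to x_1}$. The linear-algebra half of your argument (unique splitting of the jump into $\Pi^+_{y_1}$- and $\Pi^0_{y_1}$-components, the signs, and reading off the $U$-block) is correct as written.

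However, the step you yourself identify as the crux---that the jump is $(\bar\CalS^\txi)^{-1}(y_1)(I_n,0)^{tr}$---is asserted rather than derived, and the heuristic you offer for it is wrong in two concrete ways. First, the terms $i\sum_{k\ne1}(B^{1k}\xi_kU)'$ and $i\sum_{j\ne1}B^{j1}\xi_jU'$ are \emph{not} ``merely of bounded variation across $y_1$'': since $U$ itself jumps there, each carries a delta mass, $iB^{1\txi}[U]\delta_{y_1}$ and $iB^{\txi 1}[U]\delta_{y_1}$ respectively, and these are precisely why the upper-left block of $\bar\CalS^\txi$ is $-A^1+iB^{1\txi}+iB^{\txi 1}$ rather than $-A^1$; following your reasoning literally would produce a jump relation with the wrong matrix. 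Second, the third block of the phase variable must be the parabolic flux $z=b^{11}_1u'+b^{11}_2v'$ (up to tangential corrections), not $v'$: with that choice the relation $\bar\CalS^\txi[\CalU]_{y_1}=(I_n,0)^{tr}$ splits into the flux jump (first $n$ rows) and, in the last $r$ rows, the compatibility condition $b^{11}_1[u]+b^{11}_2[v]=0$, which is exactly what excludes $\delta'$-singularities from $(B^{11}U')'$---note that $v$ itself is in general \emph{not} continuous across $y_1$. With $(u,v,v')$ as coordinates, the formula involving the paper's $\bar\CalS^\txi$ is not the correct jump condition. Carrying out this distributional bookkeeping and checking it against the duality normalization \eqref{duality} is precisely the content of the proof in \cite{MaZ3,NZ} that the paper cites; once that is supplied, the remainder of your argument yields \eqref{repG} verbatim.
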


\begin{prop}\label{prop-G-rep1} The resolvent kernel may alternatively be expressed as
$$
  G_{\txi,\lambda}(x_1,y_1)=
  \begin{cases}
   (I_n,0)\Phi^+(x_1;\lambda)M^+(\lambda)\tilde\Psi^{0*}(y_1;\lambda)(I_n,0)^{tr}\ &x_1>y_1,\\
   -(I_n,0)\Phi^0(x_1;\lambda)M^0(\lambda)\tilde\Psi^{+*}(y_1;\lambda)(I_n,0 )^{tr}\ &x_1<y_1,
                 \end{cases}
$$
where
\begin{equation}\label{Mexp}
  M(\lambda):=\text{\rm diag}(M^+(\lambda),M^0(\lambda))=
  \Phi^{-1}(z;\lambda)(\bar \CalS^{\txi})^{-1}(z)\tilde\Psi^{-1*}(z;\lambda).
\end{equation}
\end{prop}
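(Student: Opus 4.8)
The plan is to derive the second representation in Proposition \ref{prop-G-rep1} from the first one in Proposition \ref{prop-G-rep} by using the duality relations \eqref{duality} to rewrite the block-projected solution operator $\CalF^{y_1\to x_1}\Pi_{y_1}^{+}(\bar\CalS^\txi)^{-1}$ (resp. with $\Pi^0$) in terms of the decaying dual solutions $\tilde\Psi^{0}$, $\tilde\Psi^{+}$. First I would record the consequence of \eqref{duality} that $\Phi^{-1}(y_1;\lambda) = \begin{pmatrix}\tilde\Psi & \tilde\Phi\end{pmatrix}^{*}_{0,+}(y_1)\,\bar\CalS^\txi(y_1)$, equivalently that the rows of $\Phi^{-1}$ are, up to the pairing $\bar\CalS^\txi$, the dual solutions; in particular, since $\Pi_{y_1}^{+}$ selects the first $k$ columns of $\Phi$ against the corresponding $k$ rows of $\Phi^{-1}$, and those rows are paired with $\tilde\Psi^{0*}$, one gets
\[
\CalF^{y_1\to x_1}\Pi_{y_1}^{+}(\bar\CalS^\txi)^{-1}(y_1)
= \Phi^{+}(x_1;\lambda)\,\big[\Phi^{-1}(\bar\CalS^\txi)^{-1}\big]^{\text{(top $k$ rows)}}(y_1)
= \Phi^{+}(x_1;\lambda)\,M^{+}(\lambda)\,\tilde\Psi^{0*}(y_1;\lambda),
\]
where $M^{+}$ is the $k\times k$ block appearing in \eqref{Mexp}; the analogous identity with $\Phi^{0}$, $M^{0}$, $\tilde\Psi^{+*}$ holds for $x_1<y_1$. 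Plugging these into \eqref{repG} and compressing with the $(I_n,0)$ factors yields exactly the claimed formulae.

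The second point to check is that $M(\lambda)=\diag(M^{+},M^{0})$ is block diagonal and $y_1$-independent (so that evaluating \eqref{Mexp} at any convenient point $z$ is legitimate). Block-diagonality is immediate from the relations \eqref{duality}: the off-diagonal blocks of $\Phi^{-1}(\bar\CalS^\txi)^{-1}\tilde\Psi^{-1*}$ pair a "$+$" object against a "$0$" object and therefore vanish by the defining biorthogonality. Independence of $z$ follows from the standard Wronskian/Abel argument: $\Phi$ and $\tilde\Psi$ solve the ODE $(L_\txi-\lambda)U=0$ and its adjoint respectively, and $\bar\CalS^\txi$ is precisely the symplectic-type form that makes $\frac{d}{dx_1}\big(\tilde\Psi^{*}\bar\CalS^\txi\,\Phi\big)=0$; hence $\Phi^{-1}(\bar\CalS^\txi)^{-1}\tilde\Psi^{-1*}$ is constant in $x_1$. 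This is where the specific form of $\bar\CalS^\txi$ matters, and verifying the conservation law $\partial_{x_1}(\tilde\Psi^{*}\bar\CalS^\txi\Phi)=0$ — i.e. that $\bar\CalS^\txi$ intertwines $L_\txi$ with its formal adjoint after the triangular reduction to first-order form and incorporates the correct boundary pairing — is the main technical obstacle; it is essentially the half-line analogue of the whole-line computation in \cite{MaZ3,Z3}, adapted to the boundary conditions as in \cite{YZ,NZ}.

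Finally I would note that the representation is consistent with Proposition \ref{prop-G-rep} across the two regions (both reduce, at $x_1=y_1$, to the jump dictated by the $\delta_{y_1}$ source, by construction), and that all objects are locally analytic in $(\tilde\xi,\lambda)$ on $\Lambda$ since $\Phi^{\pm},\Phi^{0},\tilde\Psi^{\pm},\tilde\Psi^{0}$ and the biorthogonalizing normalization are. Since the paper explicitly says "for a proof, see \cite{MaZ3,NZ}," I would keep the write-up to exactly this chain of identities — duality relation $\Rightarrow$ rewrite of $\Pi_{y_1}^{\pm}(\bar\CalS^\txi)^{-1}$ $\Rightarrow$ block-diagonality and $z$-independence of $M$ via the Abel identity for the form $\bar\CalS^\txi$ — rather than reproducing the full resolvent-kernel construction.
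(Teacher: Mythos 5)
Your proposal is essentially correct and follows the same route as the proof the paper relies on (the paper gives no in-text argument, deferring to \cite{MaZ3,NZ}): one passes from Proposition \ref{prop-G-rep} to Proposition \ref{prop-G-rep1} by observing that the first $k$ (resp.\ last $n+r-k$) rows of $\Phi^{-1}(\bar\CalS^{\txi})^{-1}$ equal $M^+\tilde\Psi^{0*}$ (resp.\ $M^0\tilde\Psi^{+*}$) with $M^\pm$ independent of $x_1$, using the biorthogonality encoded in \eqref{duality} together with constancy of the pairing $\tilde\Psi^*\bar\CalS^{\txi}\Phi$ (the Abel-type identity you invoke, which is indeed the only substantive verification and is carried out in the cited references). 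One small correction: the off-diagonal blocks of $M^{-1}=\tilde\Psi^*\bar\CalS^{\txi}\Phi$ that must vanish are $\tilde\Psi^{0*}\bar\CalS^{\txi}\Phi^{0}$ and $\tilde\Psi^{+*}\bar\CalS^{\txi}\Phi^{+}$, i.e.\ same-superscript pairings (vanishing because within each grouping the decaying dual $\tilde\Psi$ is normalized against $\Psi$ and hence $\bar\CalS^{\txi}$-orthogonal to $\Phi$), not ``$+$ against $0$'' pairings as you wrote.
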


\subsubsection{Scattering decomposition}\label{scattering}
From Propositions \ref{prop-G-rep} and \ref{prop-G-rep1}, we obtain
the following scattering decomposition, generalizing the Fourier
transform representation in the constant-coefficient case, from
which we will obtain pointwise bounds in the low-frequency regime.

\begin{cor}\label{cor-Grep} On $\Lambda^\txi \cap \rho (L_\txi)$,
\begin{equation}\label{G-rep3} G_{\txi,\lambda}(x_1,y_1)=
\sum_{j,k}d_{jk}^+\phi_j^+(x_1;\lambda)\tilde
\psi_k^+(y_1;\lambda)^* + \sum_k \phi^+_k(x_1;\lambda)\tilde
\phi_k^+(y_1;\lambda)^*\end{equation}for $0\le y_1\le x_1$, and
\begin{equation}\label{G-rep4} G_{\txi,\lambda}(x_1,y_1)=
\sum_{j,k}d_{jk}^0\phi_j^+(x_1;\lambda)\tilde
\psi_k^+(y_1;\lambda)^* - \sum_k \psi_k^+(x_1;\lambda)\tilde
\psi_k^+(y_1;\lambda)^*\end{equation}for $0\le x_1\le y_1$, where
\begin{equation}\label{djk}\begin{aligned}d_{jk}^{0,+}(\lambda)&= (I,0)\begin{pmatrix}\Phi^+ & \Phi^0\end{pmatrix}^{-1}\Psi^+.\end{aligned}\end{equation}
\end{cor}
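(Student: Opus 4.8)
The plan is to derive \eqref{G-rep3}--\eqref{G-rep4} as a purely algebraic consequence of Propositions \ref{prop-G-rep} and \ref{prop-G-rep1}, following the template of the whole-line scattering decomposition of \cite{MaZ3,Z3} with the stable manifold at $-\infty$ replaced throughout by the boundary-condition manifold at $x_1=0$, exactly as in \cite{YZ,NZ}.

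First I would rewrite \eqref{repG} in terms of the single basis $\Phi=(\Phi^+,\Phi^0)$. Substituting $\CalF^{y_1\to x_1}=\Phi(x_1)\Phi^{-1}(y_1)$ and $\Pi^+_{y_1}=(\Phi^+(y_1),0)\Phi^{-1}(y_1)$, and using that $\Phi^{-1}\Phi^+$ is the inclusion of the first $k$ phase-space coordinates, one finds for $0\le y_1\le x_1$
$$G_{\txi,\lambda}(x_1,y_1)=(I_n,0)\,(\Phi^+(x_1),\,0)\,\Phi^{-1}(y_1)(\bar S^\txi)^{-1}(y_1)\,(I_n,0)^{tr},$$
and the analogous expression built from $\Phi^0(x_1)$ and $\Pi^0_{y_1}$, carrying the minus sign of \eqref{repG}, for $0\le x_1\le y_1$; in particular the $x_1$-dependence of $G$ sits entirely in $\Phi^+$ (resp. $\Phi^0$), as forced by decay at $+\infty$. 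Next I would insert the factorization of $\Phi^{-1}(y_1)(\bar S^\txi)^{-1}(y_1)$ supplied by the duality relation \eqref{duality} together with the definition \eqref{Mexp} of $M=\diag(M^+,M^0)$; this is exactly the content of Proposition \ref{prop-G-rep1}, giving $G=(I_n,0)\Phi^+(x_1)M^+\tilde\Psi^{0*}(y_1)(I_n,0)^{tr}$ for $x_1>y_1$ and $G=-(I_n,0)\Phi^0(x_1)M^0\tilde\Psi^{+*}(y_1)(I_n,0)^{tr}$ for $x_1<y_1$. The final step is a change of frame: at $y_1$ one expands the ``$0$''-type dual data $\tilde\Psi^{0*}$ in the ``$+$''-type dual frame $(\tilde\Psi^+,\tilde\Phi^+)$, and at $x_1$ (in the case $x_1<y_1$) one expands $\Phi^0$ in the frame $(\Phi^+,\Psi^+)$. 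The transition coefficients for these expansions are precisely the entries of $(I,0)(\Phi^+,\Phi^0)^{-1}\Psi^+$ and its companions, i.e. the $d^{0,+}_{jk}$ of \eqref{djk}; after the substitution the ``diagonal'' remainders $\sum_k\phi^+_k(x_1)\tilde\phi^+_k(y_1)^*$ and $-\sum_k\psi^+_k(x_1)\tilde\psi^+_k(y_1)^*$ are what survives, the sign of the latter coming from the $-\Pi^0_{y_1}$ in \eqref{repG} paired with the $\Psi^+$-component of $\Phi^0$. Collecting terms yields \eqref{G-rep3}--\eqref{G-rep4}, valid wherever the solution bases are defined and $\Phi$ is invertible, i.e. on $\Lambda^\txi$ off the zero set of $D_L$, which is exactly $\Lambda^\txi\cap\rho(L_\txi)$; on that set every step above is an identity between functions analytic in $(\txi,\lambda)$.

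I expect the only genuine difficulty to be bookkeeping rather than analysis: keeping the ``$+$''/``$0$'' block structure straight through the change of frame, and correctly tracking the several inclusions and projections $(I_n,0)$ that move between the $(n+r)$-dimensional first-order phase space and the $n$-dimensional $U$-component. Once one works inside the framework of \cite{Z3} adapted to the half line as in \cite{YZ,NZ}, no new idea is needed; in particular the signs and the explicit form of the $d^{0,+}_{jk}$ are forced by the normalizations in \eqref{duality} and \eqref{Mexp}.
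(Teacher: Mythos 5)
Your proposal is correct and follows essentially the same route as the paper: both derive \eqref{G-rep3}--\eqref{G-rep4} purely algebraically from Propositions \ref{prop-G-rep}--\ref{prop-G-rep1} and the duality relation \eqref{duality}, by expanding the ``$0$''-type data in the ``$+$''-type frames and identifying the transition coefficients with \eqref{djk}. The only difference is presentational: where you assert that the coefficients and unit ``diagonal'' remainders are forced by the normalizations \eqref{duality}, \eqref{Mexp}, the paper carries out that verification explicitly, computing $(d^{0},e^{0})$ and $(d^{+},e^{+})$ by pairing the preliminary expansions against the dual bases and using $\Pi_{0}=I-\Pi_{+}$ in \eqref{repG}.
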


\begin{proof} For $0\le x_1\le y_1$, we obtain the preliminary
representation $$G_{\txi,\lambda}(x_1,y_1)=
\sum_{j,k}d_{jk}^0(\lambda)\phi_j^+(x_1;\lambda)\tilde
\psi_k^+(y_1;\lambda)^* + \sum_{jk}e_{jk}^0
\psi_j^+(x_1;\lambda)\tilde \psi_k^+(y_1;\lambda)^*$$ from which,
together with duality \eqref{duality}, representation \eqref{repG},
and the fact that $\Pi_0 = I - \Pi_+$, we have
\begin{equation}\begin{aligned} \begin{pmatrix}d^0 \\e^0\end{pmatrix} &= -\begin{pmatrix}\tilde \Phi^+ & \tilde \Psi^+\end{pmatrix}^* A \Pi_0 \Psi^+ \\&=-\begin{pmatrix}\Phi^+ & \Psi^+\end{pmatrix}^{-1}\Big[I- \begin{pmatrix}\Phi^+ & 0\end{pmatrix} \begin{pmatrix}\Phi^+ & \Phi^0\end{pmatrix} ^{-1}
\Big]\Psi^+ \\&= \begin{pmatrix}0\\-I_k\end{pmatrix} +
\begin{pmatrix}I_{n-k} & 0 \\0&0\end{pmatrix}\begin{pmatrix}\Phi^+ &
\Phi^0\end{pmatrix} ^{-1} \Psi^+ .
\end{aligned}\end{equation}

Similarly, for $0\le y_1\le x_1$, we obtain the preliminary
representation $$G_{\txi,\lambda}(x_1,y_1)=
\sum_{j,k}d_{jk}^+(\lambda)\phi_j^+(x_1;\lambda)\tilde
\psi_k^+(y_1;\lambda)^* + \sum_{jk}e_{jk}^+
\phi_j^+(x_1;\lambda)\tilde \phi_k^+(y_1;\lambda)^*$$ from which,
together with duality \eqref{duality} and representation
\eqref{repG}, we have
\begin{equation}\begin{aligned} \begin{pmatrix}d^+ \\e^+\end{pmatrix} &= \tilde \Phi^{+*} A \Pi_+ \begin{pmatrix}\Psi^+ & \Phi^+\end{pmatrix} \\&=(\Phi^+)^{-1} \begin{pmatrix}\Phi^+ & 0\end{pmatrix} \begin{pmatrix}\Phi^+ & \Phi^0\end{pmatrix} ^{-1} \begin{pmatrix}\Psi^+ & \Phi^+\end{pmatrix}\\&=\begin{pmatrix}I & 0\end{pmatrix} \begin{pmatrix}\Phi^+ & \Phi^0\end{pmatrix} ^{-1} \begin{pmatrix}\Psi^+ & \Phi^+\end{pmatrix}\\&=\begin{pmatrix}I_{n-k} & 0\\0&0\end{pmatrix} \begin{pmatrix}\Phi^+ & \Phi^0\end{pmatrix} ^{-1}\Psi^+ + \begin{pmatrix}0&0\\I_k&0\end{pmatrix} \begin{pmatrix}0 & I_k\\0&0\end{pmatrix}.
\end{aligned}\end{equation}
\end{proof}

\begin{remark}
\textup{ In the constant-coefficient case, with a choice of common
bases $\Psi^{0,+} = \Phi^{+,0}$ at $0,+\infty$, the above
representation reduces to the simple formula
\begin{equation}G_{\txi,\lambda}(x_1,y_1) =
\begin{cases}
   \sum_{j=k+1}^N\phi_j^+(x_1;\lambda)\tilde \phi_j^{+*}(y_1;\lambda) &x_1>y_1,\\
   -\sum_{j=1}^k\psi_j^+(x_1;\lambda)\tilde \psi_j^{+*}(y_1;\lambda) &x_1<y_1.
                 \end{cases}
\end{equation}
}
\end{remark}

\subsection{Pointwise low-frequency bounds}
We obtain pointwise low-frequency bounds on the resolvent kernel
$G_{\txi,\lambda}(x_1,y_1)$ by appealing to the detailed analysis of
\cite{Z2,Z3,GMWZ1} in the viscous shock case. Restrict attention to
the surface
\begin{equation}\Gamma^{\tilde \xi}:=\{\lambda~:~\R\lambda =
-\theta_1(|\tilde \xi|^2 + |\I\lambda|^2)\},\end{equation} for
$\theta_1>0$ sufficiently small.

\begin{proposition}[\cite{Z3}]\label{prop-Gbounds}
Under the hypotheses of Theorem \ref{theo-nonlin}, for $\lambda \in
\Gamma^\txi$ and $\rho:=|(\txi,\lambda)|$, $\theta_1>0$, and
$\theta>0$ sufficiently small, there hold:

\begin{equation}\label{G1}
|G_{\tilde \xi,\lambda}(x_1,y_1)| \le C\gamma_2
e^{-\theta\rho^2|x_1-y_1|}.\end{equation} and
\begin{equation}\label{G2}
|\partial_{y_1}^\beta G_{\tilde \xi,\lambda}(x_1,y_1)| \le C\gamma_2
(\rho^{\beta}+\beta e^{-\theta
y_1})e^{-\theta\rho^2|x_1-y_1|}\end{equation} where
\begin{equation}\label{gamma}
\gamma_2:=1+\sum_{j}\Big[\rho^{-1}|\I\lambda -
\eta_j^+(\txi)|+\rho\Big]^{1/s_j-1},\end{equation}and
$s_j,\eta_j^+(\txi)$ are as defined in (H4).
\end{proposition}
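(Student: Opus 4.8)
The plan is to follow the low-frequency resolvent analysis of \cite{Z2,Z3,GMWZ1} for the viscous shock case, adapted to the half-line as in \cite{YZ,NZ}. The starting point is the scattering decomposition of Corollary \ref{cor-Grep}, which writes $G_{\txi,\lambda}(x_1,y_1)$ --- for $0\le y_1\le x_1$ by \eqref{G-rep3}, for $0\le x_1\le y_1$ by \eqref{G-rep4} --- as a finite sum of rank-one products built from the forward modes $\phi_j^+,\psi_k^+$ and the dual modes $\tilde\phi_k^+,\tilde\psi_k^+$, with scalar scattering coefficients $d_{jk}^{0,+}$. The estimate thus reduces to two tasks: (i) bounding the individual forward and dual mode functions and their $y_1$-derivatives on the contour $\Gamma^\txi$, keeping track of the exponential rates of the ``fast'' and ``slow'' modes; and (ii) bounding the coefficients $d_{jk}^{0,+}$ of \eqref{djk}, together with the weight $(\bar S^\txi)^{-1}$ (which is bounded by noncharacteristicity (H1)), uniformly on $\Gamma^\txi$ by $C\gamma_2$. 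Assembling the finitely many terms then yields \eqref{G1}--\eqref{G2}.

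For (i), by Lemma \ref{lem-profile-decay} the coefficients of the eigenvalue ODE \eqref{eg-eqs} converge exponentially as $x_1\to+\infty$ to those of the limiting constant-coefficient operator $L_{\txi+}$, so the conjugation lemma of \cite{MaZ3} reduces the mode bounds to the behavior of the pure exponentials $e^{\mu_j(\txi,\lambda)x_1}$ of $L_{\txi+}$, i.e.\ to the roots $\mu_j$ of its characteristic equation. On $\Gamma^\txi$ these $n+r$ roots split: the ``fast'' (parabolic) roots satisfy $|\R\mu_j|\ge\theta>0$, while the ``slow'' (hyperbolic) roots are $O(\rho^2)$ perturbations of the first-order characteristic roots $-(\lambda+i(\text{transverse terms}))/a_j^+$ (with $a_j^+$ the eigenvalues of $dF^1(U_+)$, distinct and nonzero by (H2)), hence obey $|\mu_j|\le C\rho$. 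The point of taking $\theta_1>0$ small in the definition of $\Gamma^\txi$ is precisely that $\Gamma^\txi$ then lies to the right of the essential-spectrum curves $\lambda_j^+(\xi_1,\txi)$ by a margin of order $\rho^2$; translated to the roots, this forces the slow \emph{decaying} modes to satisfy $\R\mu_j\le-\theta\rho^2$ and the slow \emph{growing} ones $\R\mu_j\ge\theta\rho^2$, for some $\theta>0$. Hence $|\phi_j^+(x_1)|\le Ce^{-\theta\rho^2x_1}$ on the slow decaying modes and $|\phi_j^+(x_1)|\le Ce^{-\theta x_1}$ on the fast ones, with the matched growth bounds for $\psi_k^+$ and for $\tilde\phi_k^+,\tilde\psi_k^+$. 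Since every rank-one term of \eqref{G-rep3}--\eqref{G-rep4} pairs a decaying mode at $x_1$ with the dual mode matched to it at $y_1$, each product carries a transmitted factor $\le Ce^{-\theta\rho^2|x_1-y_1|}$; and $\beta$-fold $y_1$-differentiation costs a factor $\rho^\beta$ on the slow modes and a bounded multiple of the intrinsic decay $e^{-\theta y_1}$ on the fast ones, producing exactly the factor $\rho^\beta+\beta e^{-\theta y_1}$ of \eqref{G2}.

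For (ii), Cramer's rule expresses $d_{jk}^{0,+}$ as ratios of minors of the Evans matrix $(\Phi^0,\Phi^+)_{|x_1=0}$ to the Evans function $D_L(\txi,\lambda)=\det(\Phi^0,\Phi^+)_{|x_1=0}$. On compact subsets of $\{\txi\ne0,\ \R\lambda\ge0\}$ bounded away from the origin the uniform Evans condition (D) gives $|D_L|\ge\theta>0$, the numerators are $O(1)$, and $\gamma_2=O(1)$, so nothing is needed there. The delicate region is a small neighborhood of the origin and of the branch curves $\I\lambda=\eta_q^+(\txi)$ of (H4), where two or more slow roots collide. There I would invoke the branch-point analysis of \cite{Z2,Z3}: after an analytic block-diagonalization separating the fast and the slow subsystems, the slow block near an $s_q$-fold branch point is reduced to a normal form whose only singularity is the fractional factor $[\rho^{-1}|\I\lambda-\eta_q^+(\txi)|+\rho]^{1/s_q-1}$ of \eqref{gamma}; this controls the degradation of the normalizations of $\Phi^+,\Phi^0,\Psi^+$ and of the relevant cofactors, yielding $|d_{jk}^{0,+}|\le C\gamma_2$ and the analogous bound for $(\bar S^\txi)^{-1}$, uniformly on $\Gamma^\txi$. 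Two simplifications special to the boundary-layer setting should be recorded: first, $D_L$ does not vanish at the origin --- there is no translational zero mode, hence no pole of $G_{\txi,\lambda}$ at $\lambda=0$ and no extra $\lambda^{-1}$-type term, which is precisely the slack that lets the argument close here, in contrast to the undercompressive shock case; second, the boundary conditions enter only through the choice of the subspace $\Span\Phi^0$ of solutions meeting the linearized Dirichlet conditions at $x_1=0$, which by (H1) is analytic in $(\txi,\lambda)$ and transverse to $\Span\Phi^+$ off the zero set of $D_L$, exactly as in the half-line framework of \cite{YZ,NZ}, while the dimension identity $\dim\Span\Phi^++\dim\Span\Phi^0=n+r$ is supplied by \cite{GMWZ5}.

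Putting (i) and (ii) together bounds each rank-one term of \eqref{G-rep3}--\eqref{G-rep4} by $C\gamma_2(\rho^\beta+\beta e^{-\theta y_1})e^{-\theta\rho^2|x_1-y_1|}$, and summation over the finitely many terms gives \eqref{G1}--\eqref{G2}. I expect the main obstacle to be step (ii) near the branch curves: one must verify that the spectral splitting of the colliding slow roots, and the induced degeneracies of the mode normalizations and of the Evans-minor cofactors, are no worse than the stated fractional power $1/s_q-1$ --- the technical heart of the low-frequency analysis of \cite{Z2,Z3} under hypothesis (H4). Beyond transcribing that analysis, what has to be checked here is that the boundary terms and the half-line subspace $\Span\Phi^0$ do not disrupt the splitting, which is the (essentially routine) adaptation already carried out in \cite{YZ,NZ}.
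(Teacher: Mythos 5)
Your proposal is correct and follows essentially the same route as the paper's proof, which likewise reduces the bounds to the scattering decomposition of Corollary \ref{cor-Grep}, approximates the modes by constant-coefficient solutions via the conjugation lemma, estimates the coefficients $d_{jk}^{0,+}$ by Cramer's rule with the Evans-function lower bound (D) in the denominator, handles the glancing set via the branch-point analysis of \cite{Z2,Z3} under (H4), and notes exactly the same boundary-layer simplifications (no vanishing of $D$ at the origin, hence no pole terms, with $\Phi^0$ replacing $\Phi^-,\Psi^-$). The paper, like you, defers the substantial branch-point bookkeeping to \cite{Z3}, so your sketch if anything spells out slightly more of the structure than the published argument does.
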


\begin{proof}
This follows by a simplified version of the analysis of \cite{Z3},
Section 5 in the viscous shock case, replacing $\Phi^-$, $\Psi^-$
with $\Phi^0$, $\Psi^0$, omitting the refined derivative bounds of
Lemmas 5.23 and 5.27 describing special properties of the Lax and
overcompressive shock case (not relevant here), and setting
$\ell=0$, or $\tilde \gamma\equiv 1$ in definition (5.128). Here,
$\ell$ is the multiplicity to which the Evans function vanishes at
the origin, $(\tilde \xi, \lambda)=(0,0)$, evidently zero under
assumption (D). The key modes $\Phi^+$, $\Psi^+$ at plus spatial
infinity are the same for the boundary-layer as for the shock case.

This leads to the pointwise bounds (5.37)--(5.38) given in
Proposition 5.10 of \cite{Z3} in case
$\alpha=1$, $\gamma_1\equiv 1$ corresponding to the uniformly stable
undercompressive shock case, but without the first $O(\rho^{-1})$,
or ``pole'', terms appearing on the righthand side, which derive
from cases $\tilde \gamma\sim \rho^{-1}$ not arising here. But,
these are exactly the claimed bounds \eqref{G1}--\eqref{gamma}.

We omit the (substantial) details of this computation, referring the
reader to \cite{Z3}. However, the basic idea is, starting with the
scattering decomposition of Corollary \ref{scattering}, to note,
first, that the normal modes $\Phi^j$, $\Psi^j$, $\tilde \Phi^j$,
$\tilde\Psi^j$ can be approximated up to an exponentially trivial
coordinate change by solutions of the constant-coefficient limiting
system at $x\to +\infty$ (the conjugation lemma of \cite{MZ}) and,
second, that the coefficients $M_{jk}$, $d_{jk}$ may be
well-estimated through formulae \eqref{Mexp} and \eqref{djk} using
Kramer's rule and the assumed lower bound on th Evans function $|D|$
appearing in the denominator. This is relatively straightforward away from the
branch points $\Im \lambda= \eta_j(\tilde \xi)$ or ``glancing set''
of hyperbolic theory; the treatment near these points involves some
delicate matrix perturbation theory applied to the limiting
constant-coefficient system at $x\to +\infty$ followed by careful
bookkeeping in the application of Kramer's rule.
\end{proof}

\section{Linearized estimates} We next establish estimates on
the linearized inhomogeneous problem
\begin{equation} \label{inhom}
U_t - LU = f\end{equation} with
initial data $U(0)=U_0$ and Dirichlet boundary conditions as usual
in $\tilde W$-coordinates:
\begin{equation}\label{zerobdry-in}
W(0,\tx,t):=( w^I,  w^{II})^T(0,\tx,t)=h\end{equation} for the
inflow case, and
\begin{equation}\label{zerobdry-out}
w^{II}(0,\tx,t)=h\end{equation} for the outflow case,
 with $x =(x_1,\tx)\in \RR^d$.

\subsection{Resolvent bounds} Our first step is
to estimate solutions of the resolvent equation
with {homogeneous} boundary data $\hat h\equiv 0$.

\begin{proposition}[High-frequency bounds]\label{prop-resHF} Given (A1)-(A2),
(H0)-(H2), and homogeneous boundary conditions (B), for some $R,C$
sufficiently large and $\theta>0$ sufficiently small,
\begin{equation}\label{oldres-est}
|(L_\txi - \lambda)^{-1} \hat f|_{\hat H^1(x_1)} \le C |\hat f|_{\hat
H^1(x_1)},
\end{equation}
and
\begin{equation}\label{res-est} |(L_\txi -
\lambda)^{-1}\hat f|_{L^2(x_1)} \le \frac{C}{|\lambda|^{1/2}} |\hat
f|_{\hat H^1(x_1)},
\end{equation}
for all $|(\txi,\lambda)|\ge R$ and $\R\lambda \ge
-\theta$, where $\hat f$ is the Fourier transform of $f$ in variable
$\tx$ and $|\hat f|_{\hat H^1(x_1)} :=
|(1+|\partial_{x_1}|+|\txi|)\hat f|_{L^2(x_1)}$.
\end{proposition}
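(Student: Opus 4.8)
The plan is to establish the high-frequency resolvent bounds \eqref{oldres-est}--\eqref{res-est} by a direct Friedrichs-type symmetrizer energy estimate on the Fourier-transformed eigenvalue equation \eqref{eg-eqs}, exploiting the partially symmetric structure (A1)--(A3) and the noncharacteristic hypothesis (H1). First I would rewrite the resolvent equation $(L_\txi-\lambda)U = \hat f$ in the symmetrized $\tilde W$-coordinates, multiplying through by the symmetric positive-definite $\tilde A^0$, so that the principal part becomes $(\tilde B^{jk}W_{x_k})_{x_j} - \tilde A^j W_{x_j}$ with $\tilde A^1_{11}$ symmetric (indeed of one sign, by (H1)) and $\tilde b^{jk}\xi_j\xi_k \ge \theta|\xi|^2$. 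Taking the (complex) $L^2$ inner product of the equation with $W$ and integrating by parts in $x_1$ over $\RR^+$, the parabolic block yields a good term $\ge \theta(|\partial_{x_1}W^{II}|_{L^2}^2 + |\txi|^2|W^{II}|_{L^2}^2)$, the $-\lambda \tilde A^0$ term contributes $\Re\lambda \cdot |W|^2_{\tilde A^0}$ plus $i\Im\lambda\,|W|^2_{\tilde A^0}$ controlling $|\Im\lambda|\,|W|_{L^2}^2$, and the first-order and boundary terms must be absorbed. The homogeneous Dirichlet conditions (B) kill the boundary contributions from the parabolic block and, crucially in the outflow case, the one-signedness of $\tilde A^1_{11}$ makes the boundary term from the hyperbolic block $\langle \tilde A^1 W(0),W(0)\rangle$ have a favorable (or at worst controllable) sign — this is exactly where noncharacteristicity enters.

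The hyperbolic variable $W^I = w^I$ has no diffusion, so to close one estimates it from the equation it satisfies: since $\tilde A^1_{11}$ is invertible with bounded inverse, the $W^I$-equation is of the form $\tilde A^1_{11}\partial_{x_1}W^I = (\text{lower order in } W^I) + (\text{coupling to }W^{II}) + \lambda(\dots) + \hat f$, which for $|(\txi,\lambda)|\ge R$ large gives $|\partial_{x_1}W^I|_{L^2}$, $|\txi||W^I|_{L^2}$, and $|\lambda|^{1/2}$-weighted bounds on $W^I$ in terms of $|\hat f|_{\hat H^1}$ and the already-controlled $W^{II}$-norms. Combining the parabolic energy estimate with this hyperbolic ODE bound and using that $R$ is large to absorb commutator and lower-order terms (using decay of $\bU - U_+$ from Lemma \ref{lem-profile-decay} to control variable-coefficient errors, or simply bounding them by constants times the good terms once $|\lambda|$ dominates), one obtains $(1+|\partial_{x_1}|+|\txi|)|W|_{L^2} \lesssim |\hat f|_{\hat H^1}$ and, tracking the $\Re\lambda \ge -\theta$ and $|\Im\lambda|$ contributions, the sharpened bound $|W|_{L^2}\lesssim |\lambda|^{-1/2}|\hat f|_{\hat H^1}$. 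Transferring back from $W$ to $U$ is harmless since $\partial\tilde W/\partial U(\bU)$ and its inverse are bounded with bounded derivatives. To get \eqref{oldres-est} with $\hat H^1$ on the left-hand side rather than merely $L^2$ bounds on $W$ and $W_{x_1}$, one also differentiates the equation once in $x_1$ (or tests against $-\partial_{x_1}^2 W$ with care at the boundary) and repeats, which for the parabolic block is standard and for the hyperbolic block again uses invertibility of $\tilde A^1_{11}$ directly on the differentiated ODE.

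The main obstacle I anticipate is the careful treatment of the boundary terms generated by integration by parts, both from the first-order operator $\tilde A^1$ and, when estimating derivatives, from $\tilde B^{jk}$. In the outflow case only $w^{II}(0)=0$ is prescribed, so $w^I(0)$ is free and the term $\langle \tilde A^1_{11}w^I(0),w^I(0)\rangle$ appears with a definite sign determined by (H1); in the inflow case all of $W(0)=0$, which is easier. One must check the sign works \emph{in the right direction} — i.e.\ that it contributes a nonnegative quantity to the left-hand side (absorbing it) rather than something that must be controlled — and handle the off-diagonal blocks $\tilde A^1_{12}, \tilde A^1_{21}$ and the transformed boundary condition $\begin{pmatrix}A^1_{11}-A^1_{12}(b_2^{11})^{-1}b_1^{11}&0\\b_1^{11}&b_2^{11}\end{pmatrix}U(0)\in\begin{pmatrix}*\\0\end{pmatrix}$, which is precisely the structure chosen to make this work. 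A secondary technical point is that the estimate must be \emph{uniform} in $\txi$ over the whole high-frequency cone $|(\txi,\lambda)|\ge R$, including the regime $|\txi|\gg|\lambda|$, where the ellipticity in the transverse variables (from $\sum_{jk}\tilde b^{jk}\xi_j\xi_k \ge \theta|\xi|^2$ together with (A2), guaranteeing the hyperbolic modes also see effective damping) must be invoked; this is routine but must be arranged so that no constant degenerates. Since the proposition explicitly cites that this is the ``old'' resolvent estimate, I would in practice reduce most of this to the corresponding one-dimensional bound of \cite{MaZ3,NZ} by freezing $\txi$ and noting the $\txi$-dependent terms only improve the estimate, but the self-contained argument above is the conceptual content.
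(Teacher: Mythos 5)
Your outline reproduces two of the paper's three ingredients (a Friedrichs-type symmetrizer estimate in $W$-coordinates with careful boundary terms, and the imaginary-part inner product to bring in $|\Im\lambda|$, i.e.\ the $|\lambda|^{1/2}$ gain), but the mechanism you propose for the hyperbolic block is the step that fails, and it is precisely the step the paper supplies by other means. Solving the $w^I$-equation for $\partial_{x_1}w^I$ using invertibility of $\tilde A^1_{11}$ only yields an estimate of the \emph{derivative} in terms of $|\lambda|\,|w^I|+|\txi|\,|w^I|+|\partial_{x_1}w^{II}|+|\hat f|$ — the wrong direction: it gives no bound on $|w^I|_{L^2}$ itself, and no amount of largeness of $|(\txi,\lambda)|$ helps, because on the hyperbolic characteristic variety ($\lambda\approx -i\sum_j\xi_j a_j$ for $a_j$ eigenvalues of the convection part) the zero-order terms $\lambda \tilde A^0 w^I+i\sum\xi_j\tilde A^j w^I$ nearly cancel and the $w^I$-equation carries essentially no information about $w^I$. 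Noncharacteristicity (H1) by itself cannot produce the claimed bound on $\Re\lambda\ge-\theta$: a decoupled transport operator on the half-line with the same boundary condition has spectrum filling $\{\Re\lambda\le 0\}$ at arbitrarily high frequency, so any proof must use the genuine coupling condition (A2) quantitatively, in \emph{all} high-frequency regimes (including $\txi$ bounded, $|\Im\lambda|\to\infty$), not just, as you suggest, as a ``routine'' secondary point in the regime $|\txi|\gg|\lambda|$.

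Concretely, the paper's proof Laplace--Fourier transforms the auxiliary energy estimate of Section \ref{sec-EE} (with $s=1$): there the hyperbolic part is damped by the Kawashima-type compensating matrix $K(\xi)$ of Lemma \ref{K}, whose skew-symmetric contribution, added with a small weight to the Friedrichs functional, converts the parabolic dissipation in $w^{II}$ into a good term $-\theta\|\partial_x w^I\|^2$ (estimate \eqref{K-est2}); the exponentially weighted inner product ($\alpha$, $\omega(x_1)$) handles the variable-coefficient terms from $\bW_{x_1}$, and only the undifferentiated $\|W\|_{L^2}^2$ is left on the right, which is then absorbed by taking $|(\txi,\lambda)|\ge R$ large exactly as you propose. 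So your plan needs, at minimum, the compensator step (or an equivalent symbolic symmetrizer construction) inserted in place of the ``solve the ODE for $w^I$'' argument; the same gap recurs in your sketch of the differentiated ($\hat H^1$) estimate. The boundary-term discussion and the reduction $U\leftrightarrow W$ in your proposal are fine and agree with the paper.
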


\begin{proof} First observe that a Laplace-Fourier transformed version with respect to variables $(\lambda,\tx)$ of the
nonlinear energy estimate in Section \ref{sec-EE} with $s = 1$,
carried out on the linearized equations written in $W$-coordinates,
yields
\begin{equation} \begin{aligned} (\R\lambda &+
\theta_1)|(1+|\txi|+|\partial_{x_1}|)W|^2 \le C\Big( |W|^2 +
(1+|\txi|^2)|W||\hat f| + |\partial_{x_1}W||\partial_{x_1} \hat
f|\Big)\end{aligned}\end{equation} for some $C$ big and $\theta_1>0$
sufficiently small, where $|.|$ denotes $|.|_{L^2(x_1)}$. Applying
Young's inequality, we obtain
\begin{equation}\label{Re-est} \begin{aligned} (\R\lambda &+
\theta_1)|(1+|\txi|+|\partial_{x_1}|)W|^2\le C|W|^2 +
C|(1+|\txi|+|\partial_{x_1}|)\hat f|^2.\end{aligned}\end{equation}

On the other hand, taking the imaginary part of the $L^2$ inner
product of $U$ against $\lambda U = f+LU$, we have also the standard
estimate
\begin{equation} |\I\lambda||U|_{L^2}^2\le C|U|_{H^1}^2 + C|f|_{L^2}^2,\end{equation}
and thus, taking the Fourier transform in $\tx$, we obtain
\begin{equation} \label{Im-est}|\I\lambda||W|^2 \le C|\hat f|^2 +
C|(1+|\txi|+|\partial_{x_1}|)W|^2.\end{equation}

Therefore, taking $\theta=\theta_1/2$, we obtain from \eqref{Re-est}
and \eqref{Im-est}
\begin{equation}\begin{aligned} |(1+|\lambda|^{1/2}+|\txi|+|\partial_{x_1}|)W|^2\le C|W|^2 +
C|(1+|\txi|+|\partial_{x_1}|)\hat f|^2,\end{aligned}\end{equation}
for any $\R\lambda \ge -\theta$. Now take $R$ sufficiently large
such that $|W|^2$ on the right hand side of the above can be
absorbed into the left hand side,
and thus, for all $|(\txi,\lambda)|\ge R$ and $\R\lambda \ge
-\theta$,
\begin{equation}\begin{aligned}
|(1+|\lambda|^{1/2}+|\txi|+|\partial_{x_1}|)W|^2\le
C|(1+|\txi|+|\partial_{x_1}|)\hat f|^2,\end{aligned}\end{equation}
for some large $C>0$, which gives the result.
\end{proof}

We next have the following:
\begin{proposition}[Mid-frequency bounds]\label{prop-resMF} Given (A1)-(A2), (H0)-(H2), and strong spectral stability (D), \begin{equation}
|(L_\txi - \lambda)^{-1}|_{\hat H^1(x_1)} \le C , \quad \mbox{for
}R^{-1}\le |(\txi,\lambda)|\le R \mbox{ and }\R\lambda \ge
-\theta,\end{equation} for any $R$ and $C=C(R)$ sufficiently large
and $\theta = \theta(R)>0$ sufficiently small, where $|\hat f|_{\hat
H^1(x_1)}$ is defined as in Proposition \ref{prop-resHF}.
\end{proposition}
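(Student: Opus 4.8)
The plan is to prove the mid-frequency bound by a compactness/continuity argument, exploiting that on the compact frequency region $R^{-1} \le |(\txi,\lambda)| \le R$, $\R\lambda \ge -\theta$, the strong spectral stability condition (D) gives a uniform lower bound $|D_L(\txi,\lambda)| \ge \theta(R) > 0$, and hence $(L_\txi - \lambda)$ is invertible on this set with resolvent depending continuously on $(\txi,\lambda)$. First I would note that for $\R\lambda \ge 0$ (away from the origin), nonvanishing of the Evans function is exactly the statement that $(L_\txi-\lambda)$ has no eigenfunction decaying at $+\infty$ and satisfying the prescribed boundary condition at $x_1=0$; combined with the high-frequency resolvent bound of Proposition~\ref{prop-resHF} (which handles $|(\txi,\lambda)| \ge R$ and shows in particular the spectrum is confined to a bounded set union a left half-plane shift), this shows $\lambda \in \rho(L_\txi)$ throughout the mid-frequency region, after possibly shrinking $\theta = \theta(R)$. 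The shift to $\R\lambda \ge -\theta$ for small $\theta>0$ is justified since $\Lambda^\txi$ (on which the Evans function and the representation formulae of Propositions~\ref{prop-G-rep}--\ref{prop-G-rep1} are defined) contains a neighborhood of $\{\R\lambda \ge 0\}\setminus\{0\}$, and (D) together with continuity of $D_L$ extends the lower bound to a slightly enlarged compact set.

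Next I would make the resolvent bound quantitative and uniform. The cleanest route is to use the representation formula of Proposition~\ref{prop-G-rep1} (or equivalently the scattering decomposition of Corollary~\ref{cor-Grep}): the resolvent kernel $G_{\txi,\lambda}(x_1,y_1)$ is built from the decaying solution modes $\Phi^+$, the boundary-condition-adapted modes $\Phi^0$, their duals $\tilde\Psi^{0,+}$, and the connection matrices $M^\pm$ (respectively $d_{jk}^{0,+}$), and the only place a denominator appears is through Cramer's rule applied to $\det(\Phi^0,\Phi^+)|_{x_1=0} = D_L(\txi,\lambda)$. On the compact frequency region, the modes $\Phi^+,\Phi^0,\tilde\Psi^{0,+}$ are locally analytic in $(\txi,\lambda)$ and decay exponentially in $|x_1|$ (uniformly, by Lemma~\ref{lem-profile-decay} and the conjugation lemma of \cite{MZ}, since the limiting constant-coefficient systems are uniformly hyperbolic in the spatial ODE sense on this region), hence are uniformly bounded in the relevant $x_1$-weighted norms; dividing by $|D_L| \ge \theta(R)$ then yields a uniform bound on $G_{\txi,\lambda}$ as an integral kernel, from which the $\hat H^1(x_1)$ operator bound $|(L_\txi-\lambda)^{-1}\hat f|_{\hat H^1(x_1)} \le C|\hat f|_{\hat H^1(x_1)}$ follows by a Young's-inequality estimate on the convolution-type action of the kernel (the derivative bound on $\partial_{y_1} G$ analogous to \eqref{G2}, but now with $\rho \sim 1$ so all factors are $O(1)$, controls the $|\txi|$ and $\partial_{x_1}$ pieces of the $\hat H^1$ norm). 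Alternatively, and more softly, one can invoke that $(\txi,\lambda)\mapsto (L_\txi-\lambda)^{-1}$ is a continuous map from the compact set into the bounded operators on $\hat H^1(x_1)$ — continuity following from analytic dependence of the modes and nonvanishing of $D_L$ — and take $C(R)$ to be its maximum.

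The main obstacle is the uniformity near the origin $(\txi,\lambda)=(0,0)$, i.e. at the inner edge $|(\txi,\lambda)| = R^{-1}$: there the mode subspaces $\Phi^+$, $\Phi^0$ can degenerate or fail to stay uniformly bounded (the issue flagged in the discussion of the Evans function, requiring one to "limit the neighborhood of definition" to keep the $\Phi^j$ uniformly bounded). One must check that for each fixed $R$, after choosing the local analytic representatives $\Phi^j$ to be uniformly bounded on the (compact, origin-excluding) set $R^{-1}\le |(\txi,\lambda)|\le R$, the denominator $D_L$ — computed with that bounded choice — is still bounded below by (D). This is exactly the content of the parenthetical in Definition~\ref{strongspectral} that (D) is "well-defined and equivalent to simple nonvanishing, away from the origin," so it is available to us; the remaining work is the bookkeeping to pass from nonvanishing of a scalar Wronskian plus boundedness of finitely many analytic mode bundles on a compact set to a quantitative operator bound, which is routine. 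I expect no genuine difficulty beyond this, since all the hard analysis (the pointwise kernel bounds, the conjugation lemma, the dimension count $n+r$) is either already in hand from Section~2 or cited from \cite{Z3,NZ,GMWZ5}; the mid-frequency regime is precisely the "easy" compact piece where strong spectral stability is used as a black box.
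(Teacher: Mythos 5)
Your proposal is correct and, in its ``softer'' alternative, coincides with the paper's own proof: the paper handles the mid-frequency regime in one line, by compactness of the frequency set $R^{-1}\le|(\txi,\lambda)|\le R$, $\R\lambda\ge-\theta$, together with analyticity of the resolvent $(\lambda-L_\txi)^{-1}$ with respect to $\hat H^1$ in $(\txi,\lambda)$ (citing Proposition 4.8 of \cite{Z4}), with (D) guaranteeing invertibility on that set. The more detailed Cramer's-rule/kernel argument you sketch first is not needed for this step, though it is consistent with how the low-frequency bounds are actually obtained.
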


\begin{proof}
Immediate, by compactness of the set of frequencies under
consideration together with the fact that the resolvent
$(\lambda-L_{\tilde{\xi}})^{-1}$ is analytic with respect to $H^{1}$
in $(\tilde{\xi}, \lambda)$;
see Proposition 4.8, \cite{Z4}.
\end{proof}

We next obtain the following resolvent bound for low-frequency
regions as a direct consequence of pointwise bounds on the resolvent
kernel, obtained in Proposition \ref{prop-Gbounds}.

\begin{proposition}[Low-frequency bounds]\label{prop-resLF}
Under the hypotheses of Theorem \ref{theo-nonlin},
for $\lambda \in \Gamma^{\tilde \xi}$ and $\rho :=|(\tilde
\xi,\lambda)|$, $\theta_1$ sufficiently small, there holds the
resolvent bound \begin{equation}\label{res-bound} |(L_{\tilde
\xi}-\lambda)^{-1}\partial_{x_1}^\beta \hat f|_{L^p(x_1)} \le
C\gamma_2\rho^{-2/p}\Big[\rho^{\beta}|\hat f|_{L^1(x_1)}
+ \beta|\hat f|_{L^\infty(x_1)} \Big]
\end{equation} for all $2\le p\le \infty$,
$\beta=0,1$, where $\gamma_2$ is as defined in
\eqref{gamma}.
\end{proposition}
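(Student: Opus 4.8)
The plan is to read \eqref{res-bound} off directly from the kernel representation of Proposition~\ref{prop-G-rep} and the pointwise bounds \eqref{G1}--\eqref{gamma}, using only elementary one-dimensional convolution estimates. Write
\[
\bigl((L_{\tilde\xi}-\lambda)^{-1}\partial_{x_1}^\beta\hat f\bigr)(x_1) = \int_0^\infty G_{\tilde\xi,\lambda}(x_1,y_1)\,\partial_{y_1}^\beta\hat f(y_1)\,dy_1 .
\]
For $\beta=0$ this is already of the form to which \eqref{G1} applies. For $\beta=1$ I would integrate by parts in $y_1$, carried out separately on $\{0\le y_1<x_1\}$ and $\{y_1>x_1\}$, on each of which $G_{\tilde\xi,\lambda}(x_1,\cdot)$ is smooth; this gives
\[
\bigl((L_{\tilde\xi}-\lambda)^{-1}\partial_{x_1}\hat f\bigr)(x_1) = -\int_0^\infty \partial_{y_1}G_{\tilde\xi,\lambda}(x_1,y_1)\,\hat f(y_1)\,dy_1 - G_{\tilde\xi,\lambda}(x_1,0)\,\hat f(0) + J(x_1)\,\hat f(x_1),
\]
where $J(x_1):=G_{\tilde\xi,\lambda}(x_1,x_1^-)-G_{\tilde\xi,\lambda}(x_1,x_1^+)$ is the jump of the kernel across the diagonal, which may be nonzero owing to the degenerate hyperbolic--parabolic structure but is in any case bounded by $C\gamma_2$ by \eqref{G1}.

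Next, inserting \eqref{G1}--\eqref{G2} and taking $L^p(x_1)$ norms for $2\le p\le\infty$: the main term $\rho^\beta e^{-\theta\rho^2|x_1-y_1|}$ of the kernel bound is handled by Young's convolution inequality on $(0,\infty)\subset\RR$, using $\|e^{-\theta\rho^2|\cdot|}\|_{L^p(\RR)}=C\rho^{-2/p}$ (from $\int_{\RR}e^{-p\theta\rho^2|z|}\,dz=2/(p\theta\rho^2)$), which yields a contribution $\le C\gamma_2\rho^\beta\rho^{-2/p}|\hat f|_{L^1(x_1)}$. The piece $\beta\, e^{-\theta y_1}e^{-\theta\rho^2|x_1-y_1|}$ (present only for $\beta=1$) is estimated by $|\hat f(y_1)|\le|\hat f|_{L^\infty(x_1)}$ followed by Minkowski's integral inequality,
\[
\Bigl\|\int_0^\infty e^{-\theta y_1}e^{-\theta\rho^2|x_1-y_1|}\,dy_1\Bigr\|_{L^p(x_1)} \le \int_0^\infty e^{-\theta y_1}\bigl\|e^{-\theta\rho^2|\cdot-y_1|}\bigr\|_{L^p(x_1)}\,dy_1 \le C\rho^{-2/p},
\]
giving $\le C\gamma_2\rho^{-2/p}|\hat f|_{L^\infty(x_1)}$. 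The boundary term is $\le|\hat f(0)|\,\|G_{\tilde\xi,\lambda}(\cdot,0)\|_{L^p(x_1)}\le C\gamma_2\rho^{-2/p}|\hat f|_{L^\infty(x_1)}$ by \eqref{G1}, and the diagonal term is $\le\|J\|_{L^\infty}\,|\hat f|_{L^p(x_1)}\le C\gamma_2|\hat f|_{L^p(x_1)}$, which by interpolation, $|\hat f|_{L^p}\le|\hat f|_{L^1}^{1/p}|\hat f|_{L^\infty}^{1-1/p}$, together with weighted Young's inequality and the facts $\rho\le1$ (low frequencies) and $p\ge2$, is bounded by $C\gamma_2\bigl(\rho^{1-2/p}|\hat f|_{L^1}+\rho^{-2/p}|\hat f|_{L^\infty}\bigr)$ --- exactly the right-hand side of \eqref{res-bound} for $\beta=1$. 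Summing the contributions yields \eqref{res-bound}, with $\gamma_2$ carried along throughout as a multiplicative factor; validity on $\Gamma^{\tilde\xi}$ for $\theta_1$ small is inherited from Proposition~\ref{prop-Gbounds}.

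There is no essential difficulty here: the proposition is genuinely a corollary of Proposition~\ref{prop-Gbounds}. The only points requiring attention are (i) the bookkeeping of the boundary ($y_1=0$) and diagonal ($y_1=x_1$) terms produced by the integration by parts when $\beta=1$, and the verification that both fit into the claimed right-hand side --- the diagonal term only after the interpolation step that exploits $\rho\le1$ and $p\ge2$; and (ii) confirming that every one-dimensional kernel norm that appears scales like $\rho^{-2/p}$ uniformly in $2\le p\le\infty$, which reduces to the elementary evaluation of $\|e^{-\theta\rho^2|\cdot|}\|_{L^p}$ above. For $\beta=0$ the whole argument collapses to a single application of Young's inequality.
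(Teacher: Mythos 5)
Your argument is correct and is essentially the paper's own proof: the estimate is read off directly from the pointwise kernel bounds \eqref{G1}--\eqref{G2} of Proposition \ref{prop-Gbounds} together with Young's convolution inequality, using $\|e^{-\theta\rho^2|\cdot|}\|_{L^p(x_1)}\le C\rho^{-2/p}$ for the main term and the $|\hat f|_{L^\infty}$ pairing for the $\beta e^{-\theta y_1}$ term. The only difference is that the paper simply writes $(L_{\tilde\xi}-\lambda)^{-1}\partial_{x_1}^\beta\hat f=\int\partial_{y_1}^\beta G_{\tilde\xi,\lambda}(x_1,y_1)\hat f(y_1)\,dy_1$ and suppresses the boundary term at $y_1=0$ and the diagonal-jump term produced by the integration by parts, whereas you track them explicitly and verify (correctly, via \eqref{G1}, interpolation, and $\rho\le 1$) that they obey the same right-hand side --- a harmless and in fact slightly more careful rendering of the same argument.
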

\begin{proof} Using the convolution inequality
$|g*h|_{L^p}\le |g|_{L^p}|h|_{L^1}$
and noticing that
$$
|\partial_{y_1}^\beta G_{\tilde \xi,\lambda}(x_1,y_1)|
\le C \gamma_2(\rho^{\beta}+\beta e^{-\theta y_1})e^{-\theta\rho^2|x_1-y_1|},
$$
we obtain
\begin{equation}
\begin{aligned}
|(L_{\tilde \xi}-\lambda)^{-1}\partial_{x_1}^\beta \hat f|_{L^p(x_1)} &=
\Big|\int \partial_{y_1}^\beta G_{\tilde \xi,\lambda}(x_1,y_1)
\hat f(y_1,\tilde \xi)\, dy_1\Big|_{L^p(x_1)}\\
&\le
\Big|\int C \gamma_2
(\rho^{\beta}+\beta e^{-\theta y_1})e^{-\theta\rho^2|x_1-y_1|}
|\hat f(y_1, \tilde \xi)| \, dy_1\Big|_{L^p}\\
&\le C\gamma_2\rho^{-2/p}\Big[\rho^{\beta}|\hat f|_{L^1(x_1)}
+ \beta|\hat f|_{L^\infty(x_1)} \Big]
\end{aligned}
\end{equation}
as claimed.
\end{proof}

\begin{remark}
\textup{
The above $L^p$ bounds may alternatively be obtained directly
by the argument of Section 12, \cite{GMWZ1},
using quite different Kreiss symmetrizer techniques,
again omitting pole terms arising from vanishing of the Evans
function at the origin, and also the auxiliary problem
construction of Section 12.6 used to obtain sharpened bounds
in the Lax or overcompressive shock case (not relevant here).
}
\end{remark}

\subsection{Estimates on homogeneous solution operators}
Define low- and high-frequency parts of the linearized solution
operator $\cS(t)$ of the linearized problem with homogeneous
boundary and forcing data, $f$, $h\equiv 0$, as
\begin{equation}\cS_1(t):=\frac{1}{(2\pi i)^d}\int_{|\tilde \xi|\le
r}\oint_{\Gamma^{\tilde \xi}\cap \{|\lambda|\le r\}} e^{\lambda t + i\tilde \xi \cdot\tilde
x}(L_{\txi} - \lambda)^{-1} d\lambda d\txi\end{equation} and
\begin{equation} \cS_2(t):=e^{Lt} - \cS_1(t).\end{equation}

Then we obtain the following:
\begin{proposition}[Low-frequency estimate]\label{LFest}
Under the hypotheses of Theorem \ref{theo-nonlin},
for $\beta = (\beta_1,\beta')$ with $\beta_1=0,1$,
\begin{equation}
\begin{aligned} |\cS_1(t) \partial_x^\beta f|_{L^2_x} \le& C
(1+t)^{-(d-1)/4 - |\beta|/2}|f|_{L^1_x} +
C\beta_1(1+t)^{-(d-1)/4}|f|_{L^{1,\infty}_{\tx,x_1}} ,\\
| \cS_1(t) \partial_x^\beta f|_{L^{2,\infty}_{\tx,x_1}} \le& C
(1+t)^{-(d+1)/4 - |\beta|/2}|f|_{L^1_x} +
C\beta_1(1+t)^{-(d+1)/4}|f|_{L^{1,\infty}_{\tx,x_1}},\\
|\cS_1(t) \partial_x^\beta f|_{L^{\infty}_{\tx,x_1}} \le& C
(1+t)^{-d/2 - |\beta|/2}|f|_{L^1_x} +
C\beta_1(1+t)^{-d/2}|f|_{L^{1,\infty}_{\tx,x_1}},
\end{aligned}\end{equation}
where $|\cdot|_{L^{p,q}_{\tx,x_1}}$ denotes the norm in
$L^p(\tx;L^q(x_1))$.
\end{proposition}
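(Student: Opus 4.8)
The three estimates follow by inserting the low-frequency resolvent bounds of Proposition~\ref{prop-resLF} (equivalently, the pointwise kernel bounds of Proposition~\ref{prop-Gbounds}) into the spectral representation defining $\cS_1(t)$ and evaluating the resulting $(\tilde\xi,\lambda)$-integrals; the bookkeeping is essentially that of the linearized estimates in the viscous-shock analysis of \cite{Z3}, the difference being only that the sharper weight $\rho^{-2/p}$ of Proposition~\ref{prop-resLF} — with no ``pole'' term $\rho^{-2}$, which is absent here thanks to (D) — produces the improved $L^p(x_1)$-rates claimed. The plan is to first parametrize $\Gamma^{\tilde\xi}$ by $\lambda=\lambda(\ell):=-\theta_1(|\tilde\xi|^2+\ell^2)+i\ell$, $\ell\in\RR$ with $|\lambda|\le r$, so that $|d\lambda|\le C\,d\ell$, $|e^{\lambda t}|=e^{-\theta_1(|\tilde\xi|^2+\ell^2)t}$, and $\rho:=|(\tilde\xi,\lambda)|\sim(|\tilde\xi|^2+\ell^2)^{1/2}$ uniformly on the (bounded) range of $\ell$; and to record the $\tilde\xi$-uniform bounds $|\hat f(\tilde\xi,\cdot)|_{L^1(x_1)}\le|f|_{L^1_x}$ and $|\hat f(\tilde\xi,\cdot)|_{L^\infty(x_1)}\le|f|_{L^{1,\infty}_{\tx,x_1}}$.

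For the $L^2_x$ estimate, Fourier transform in $\tx$ (so $\widehat{\partial_x^\beta f}=(i\tilde\xi)^{\beta'}\partial_{x_1}^{\beta_1}\hat f$), apply Parseval in $\tx$, and estimate the inner $L^2(x_1)$-norm of $\oint e^{\lambda t}(L_{\tilde\xi}-\lambda)^{-1}\partial_{x_1}^{\beta_1}\hat f\,d\lambda$ by Proposition~\ref{prop-resLF} with $p=2$. With the parametrization this bounds $|\cS_1(t)\partial_x^\beta f|_{L^2_x}^2$ by a sum of two contributions, one proportional to $|f|_{L^1_x}^2$ and carrying the weight $\rho^{\beta_1-1}$, and one proportional to $\beta_1|f|_{L^{1,\infty}_{\tx,x_1}}^2$ carrying the weight $\rho^{-1}$, each of the form $\int_{\RR^{d-1}}|\tilde\xi|^{2|\beta'|}\big(\int_\RR e^{-\theta_1(|\tilde\xi|^2+\ell^2)t}\,\gamma_2\,\rho^{m}\,d\ell\big)^2 d\tilde\xi$. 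Here $\gamma_2$ has only an integrable singularity along the branch curves $\ell=\eta_q^\pm(\tilde\xi)$ — its exponent $1/s_q-1$ being $>-1$ — so it costs at most a harmless logarithmic factor (and nothing at all when $\gamma_2\equiv1$, e.g.\ for gas dynamics), after which the integral reduces to Gaussian moments $\int_{\RR^{d-1}}|\tilde\xi|^{2|\beta'|}e^{-\theta_1|\tilde\xi|^2 t}d\tilde\xi\sim(1+t)^{-(d-1)/2-|\beta'|}$ and $\int_\RR e^{-\theta_1\ell^2 t}\rho^{m}\,d\ell\lesssim(1+t)^{-(m+1)/2}$ (again up to a logarithm when $m=-1$). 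Multiplying and carrying out the (convergent) $\tilde\xi$-integration, one finds the $|f|_{L^1_x}$-contribution is $O\big((1+t)^{-(d-1)/2-|\beta|}\big)$ (using $m=\beta_1-1$ and $|\beta|=\beta_1+|\beta'|$) and the $|f|_{L^{1,\infty}}$-contribution is $O\big((1+t)^{-(d-1)/2}\big)$; taking square roots gives the stated bound. The $L^\infty_x$ estimate is obtained the same way but with Fourier inversion, $|\cS_1(t)\partial_x^\beta f|_{L^\infty_x}\le(2\pi)^{-(d-1)}\int_{|\tilde\xi|\le r}\big|\oint e^{\lambda t}(i\tilde\xi)^{\beta'}(L_{\tilde\xi}-\lambda)^{-1}\partial_{x_1}^{\beta_1}\hat f\,d\lambda\big|_{L^\infty(x_1)}d\tilde\xi$, and Proposition~\ref{prop-resLF} with $p=\infty$ (so the weight is $\rho^0$): this reduces to a single $d$-dimensional Gaussian moment $\int_{\RR^{d-1}}\int_\RR|\tilde\xi|^{|\beta'|}\rho^{\beta_1}\gamma_2\,e^{-\theta_1(|\tilde\xi|^2+\ell^2)t}d\ell\,d\tilde\xi\sim(1+t)^{-d/2-|\beta|/2}$ for the $|f|_{L^1_x}$-term, and $(1+t)^{-d/2}$ for the $\beta_1|f|_{L^{1,\infty}}$-term.

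The genuinely delicate estimate is the mixed norm $L^{2,\infty}_{\tx,x_1}=L^2(\tx;L^\infty(x_1))$, which morally interpolates the previous two but must be arranged so that the $L^\infty(x_1)$-norm is extracted while retaining enough $\tx$-decay for the subsequent $L^2(\tx)$-integration — one cannot simply move the absolute value through the $\tilde\xi$-integral (that would destroy $\tx$-decay). Following \cite{Z3}, one keeps the oscillation $e^{i\tilde\xi\cdot\tx}$, uses the pointwise bounds \eqref{G1}--\eqref{G2} to control the $L^\infty(x_1)$-norm of the $\lambda$-integrated kernel $\oint e^{\lambda t}\partial_{y_1}^{\beta_1}G_{\tilde\xi,\lambda}(x_1,y_1)\,d\lambda$ by an $L^2(x_1)$-type quantity now carrying the weight $\rho^{\beta_1}$ in place of $\rho^{\beta_1-1}$ — the point being that after the $\lambda$-integration the $x_1$-dependence of the kernel is of heat-kernel type, so its supremum over $x_1$ is controlled — and then applies Parseval in $\tx$ as in the $L^2_x$ case. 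The resulting $\tilde\xi$-integral is exactly that of the $L^2_x$ case with $\rho^{\beta_1-1}$ (resp.\ $\rho^{-1}$) replaced by $\rho^{\beta_1}$ (resp.\ $\rho^0$), i.e.\ with one extra factor $\rho\sim(|\tilde\xi|^2+\ell^2)^{1/2}$ under the Gaussian; this costs precisely one power of $(1+t)^{-1/2}$, upgrading the rate $(1+t)^{-(d-1)/4}$ to $(1+t)^{-(d+1)/4}$ and giving the claimed estimate. I expect the real work to lie in exactly two places: (i) verifying that $\gamma_2$ — the sole point at which hypothesis (H4) and the glancing/branch-point structure intervene — contributes only harmless (logarithmic or absent) corrections under both the $\ell$- and the $\tilde\xi$-integrations; and (ii) the reduction just described for the $L^{2,\infty}$ norm. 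Both are imported essentially verbatim from the corresponding arguments of \cite{Z3}, the only new ingredient being the replacement of its pole-carrying weight $\rho^{-2}$ by $\rho^{-2/p}$.
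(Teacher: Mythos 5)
Your proposal is correct and follows essentially the same route as the paper's proof: insert the $L^p(x_1)$ resolvent bounds of Proposition \ref{prop-resLF} (with $p=2,\infty$) into the spectral representation of $\cS_1(t)$, apply Parseval in $\tx$ (triangle inequality for the $L^\infty$ bound), parametrize $\Gamma^{\txi}$, and evaluate the resulting Gaussian moments, with the derivative bounds coming from the $\beta_1=1$ resolvent estimate and the factors $(i\txi)^{\beta'}$. The one point to sharpen is your logarithmic hedge: in the paper no log loss occurs at all, because the weights $\gamma_2\rho^{-1}$ (resp.\ $\gamma_2$) are distributed between the $k$- and $\txi$-integrations via $(|k|+|\txi|)^{-1}\le |k|^{\epsilon-1}|\txi|^{-\epsilon}$ with $\epsilon=1/\max_j s_j>0$, and this matters since a genuine logarithmic factor would be incompatible with the clean rates asserted in the proposition.
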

\begin{proof}
The proof will follow closely the treatment of the shock case
in \cite{Z3}.
Let $\hat u(x_1,\txi,\lambda)$ denote the solution of
$(L_\txi-\lambda)\hat u = \hat f$, where $\hat f(x_1,\txi)$ denotes
Fourier transform of $f$, and
$$u(x,t):=\cS_1(t)f = \frac{1}{(2\pi i)^d}\int_{|\txi|\le r}\oint _{\Gamma^\txi\cap \{|\lambda|\le r\}}
e^{\lambda t+i\txi \cdot \tx}(L_\txi - \lambda)^{-1}\hat
f(x_1,\txi)d\lambda d\txi.$$

Recalling the resolvent estimates in Proposition \ref{prop-resLF},
we have
\begin{align*}|\hat u(x_1,\txi,\lambda)|_{L^p(x_1)}&\le C\gamma_2
\rho^{-2/p} |\hat f|_{L^1(x_1)}\le C\gamma_2 \rho^{-2/p} |f|_{L^1(x)}
\end{align*}
where $\gamma_2$ is as defined in \eqref{gamma}.

Therefore, using Parseval's identity, Fubini's theorem, and the triangle
inequality, we may estimate $$\begin{aligned}
|u|_{L^2(x_1,\tx)}^2(t) &= \frac{1}{(2\pi)^{2d}}\int_{x_1}
\int_{\txi}\Big|\oint_{\Gamma^\txi\cap \{|\lambda|\le r\}} e^{\lambda t}\hat
u(x_1,\txi,\lambda)d\lambda\Big|^2 d\txi dx_1
\\&=\frac{1}{(2\pi)^{2d}} \int_{\txi}\Big|\oint_{\Gamma^\txi\cap \{|\lambda|\le r\}}
e^{\lambda t}\hat u(x_1,\txi,\lambda)d\lambda\Big|^2_{L^2(x_1)}
d\txi \\&\le
\frac{1}{(2\pi)^{2d}}\int_{\txi}\Big|\oint_{\Gamma^\txi\cap \{|\lambda|\le r\}}
e^{\R\lambda t}|\hat u(x_1,\txi,\lambda)|_{L^2(x_1)}d\lambda\Big|^2
d\txi \\&\le
C|f|_{L^1(x)}^2\int_{\txi}\Big|\oint_{\Gamma^\txi\cap
\{|\lambda|\le r\}} e^{\R\lambda t}\gamma_2\rho^{-1}d\lambda\Big|^2
d\txi.
\end{aligned}$$

Specifically, parametrizing $\Gamma^\txi$ by $$\lambda(\txi,k) = ik
- \theta_1(k^2 + |\txi|^2), \quad k\in \RR,$$ and observing that
 by \eqref{gamma},
\begin{equation}\begin{aligned}\gamma_2\rho^{-1}&
\le(|k|+|\txi|)^{-1} \Big[ 1+
\sum_{j}\Big(\frac{|k-\tau_j(\txi)|}{\rho}\Big)^{1/s_j-1}\Big]\\&\le(|k|+|\txi|)^{-1}
\Big[ 1+
\sum_{j}\Big(\frac{|k-\tau_j(\txi)|}{\rho}\Big)^{\epsilon-1}\Big],\end{aligned}\end{equation}
where $\epsilon:=\frac{1}{\max_j s_j}$ ($0<\epsilon<1$ chosen
arbitrarily if there are no singularities), we estimate
$$\begin{aligned}
\int_{\txi}\Big|\oint_{\Gamma^\txi\cap \{|\lambda|\le r\}}
e^{\R\lambda t}\gamma_2\rho^{-1}d\lambda\Big|^2 d\txi &\le
\int_{\txi}\Big|\int_\RR e^{-\theta_1(k^2+|\txi|^2)
t}\gamma_2\rho^{-1}dk\Big|^2 d\txi\\&\le
\int_{\txi}e^{-2\theta_1|\txi|^2t}|\txi|^{-2\epsilon}\Big|\int_\RR
e^{-\theta_1k^2t}|k|^{\epsilon-1}dk\Big|^2 d\txi \\&\quad+\sum_j
\int_{\txi}e^{-2\theta_1|\txi|^2t}|\txi|^{-2\epsilon}\Big|\int_\RR
e^{-\theta_1k^2t}|k-\tau_j(\txi)|^{\epsilon-1}dk\Big|^2 d\txi
\\&\le
\int_{\txi}e^{-2\theta_1|\txi|^2t}|\txi|^{-2\epsilon}\Big|\int_\RR
e^{-\theta_1k^2t}|k|^{\epsilon-1}dk\Big|^2 d\txi \\&\le
Ct^{-(d-1)/2}.
\end{aligned}$$

Likewise, we have $$\begin{aligned}
|u|_{L^{2,\infty}_{\tx,x_1}}^2(t) &=
\frac{1}{(2\pi)^{2d}}\int_{\txi}\Big|\oint_{\Gamma^\txi \cap
\{|\lambda|\le r\}} e^{\lambda t}\hat
u(x_1,\txi,\lambda)d\lambda\Big|^2_{L^\infty(x_1)} d\txi \\&\le
\frac{1}{(2\pi)^{2d}}\int_{\txi}\Big|\oint_{\Gamma^\txi\cap
\{|\lambda|\le r\}} e^{\R\lambda t}|\hat
u(x_1,\txi,\lambda)|_{L^\infty(x_1)}d\lambda\Big|^2 d\txi
\\&\le
C|f|_{L^1(x)}^2\int_{\txi}\Big|\oint_{\Gamma^\txi\cap
\{|\lambda|\le r\}} e^{\R\lambda t}\gamma_2d\lambda\Big|^2 d\txi
\end{aligned}$$
where
$$\begin{aligned}
\int_{\txi}\Big|\oint_{\Gamma^\txi\cap \{|\lambda|\le r\}}
e^{\R\lambda t}\gamma_2d\lambda\Big|^2 d\txi &\le
\int_{\txi}e^{-2\theta_1|\txi|^2 t}\Big|\int_\RR
e^{-\theta_1k^2t}dk\Big|^2 d\txi
\\&\quad+\sum_j\int_{\txi}e^{-2\theta_1|\txi|^2 t}|\txi|^{2-2\epsilon}\Big|\int_\RR
e^{-\theta_1k^2t}|k-\tau_j(\txi)|^{\epsilon-1}dk\Big|^2 d\txi\\&\le
Ct^{-(d+1)/2} + C\int_{\txi}e^{-2\theta_1|\txi|^2
t}|\txi|^{2-2\epsilon}\Big|\int_\RR
e^{-\theta_1k^2t}|k|^{\epsilon-1}dk\Big|^2 d\txi\\&\le
Ct^{-(d+1)/2}.
\end{aligned}$$

Similarly, we estimate $$\begin{aligned} |u|_{L^\infty_{\tx,
x_1}}(t) &\le
\frac{1}{(2\pi)^{d}}\int_{\txi}\Big|\oint_{\Gamma^\txi \cap \{|\lambda|\le r\}} e^{\lambda
t}\hat u(x_1,\txi,\lambda)d\lambda\Big|_{L^\infty(x_1)} d\txi
\\&\le\frac{1}{(2\pi)^{d}} \int_{\txi}\oint_{\Gamma^\txi \cap \{|\lambda|\le r\}}
e^{\R\lambda t}|\hat u(x_1,\txi,\lambda)|_{L^\infty(x_1)}d\lambda
d\txi \\&\le
C|f|_{L^1(x)}\int_{\txi}\oint_{\Gamma^\txi \cap
\{|\lambda|\le r\}} e^{\R\lambda t}\gamma_2d\lambda d\txi
\end{aligned}$$
where as above we have
$$\begin{aligned}
\int_{\txi}\oint_{\Gamma^\txi\cap \{|\lambda|\le r\}} e^{\R\lambda
t}\gamma_2d\lambda d\txi &\le \int_{\txi}e^{-\theta_1|\txi|^2
t}\int_\RR e^{-\theta_1k^2t}dk d\txi
\\&\quad+\sum_j\int_{\txi}e^{-\theta_1|\txi|^2 t}|\txi|^{1-\epsilon}\int_\RR
e^{-\theta_1k^2t}|k-\tau_j(\txi)|^{\epsilon-1}dkd\txi\\&\le
Ct^{-d/2} + C\int_{\txi}e^{-\theta_1|\txi|^2
t}|\txi|^{1-\epsilon}\int_\RR
e^{-\theta_1k^2t}|k|^{\epsilon-1}dkd\txi\\&\le Ct^{-d/2}.
\end{aligned}$$

The $x_1$-derivative bounds follow similarly by using the resolvent
bounds in Proposition \ref{prop-resLF} with $\beta_1=1$. The
$\tx$-derivative bounds are straightforward by the fact that
$\widehat{\partial_{\tx}^{\tilde \beta} f} = (i\txi)^{\tilde \beta}
\hat f$.

Finally, each of the above integrals is bounded by
$ C|f|_{L^1(x)}$
as the product of
$ |f|_{L^1(x)}$
times the integral quantities
$\gamma_2 \rho^{-1}$, $\gamma_2$ over a bounded domain, hence we may
replace $t$ by $(1+t)$ in the above estimates.
\end{proof}

Next, we obtain estimates on the high-frequency part $\cS_2(t)$ of the
linearized solution operator.
Recall that $\cS_2(t) = \cS(t) - \cS_1(t)$, where
$$ \cS(t) = \frac{1}{(2\pi i)^{d}}\int_{\RR^{d-1}} e^{i\txi \cdot \tx} e^{L_{\txi}t}d\txi$$
and $$ \cS_1(t) = \frac{1}{(2\pi i)^{d}}\int_{|\txi|\le r}
\oint_{\Gamma^\txi \cap\{|\lambda|\le r\}} e^{\lambda t+i\txi \cdot
\tx}(L_\txi - \lambda)^{-1} d\lambda d\txi.$$

Then according to \cite[Corollary 4.11]{Z4}, we can write
\begin{equation}\label{formS2}\begin{aligned}\cS_2(t)f&=\frac{1}{(2\pi i)^{d}}\textup{P.V.}\int_{-\theta_1-i\infty}^{-\theta_1+i\infty}
\int_{\RR^{d-1}}\chi_{|\tilde{\xi}|^{2}+|\I\lambda|^{2}\geq\theta_1+\theta_2}\\
&\qquad\times e^{i\tilde{\xi}\cdot \tilde{x}+\lambda t}
(\lambda-L_{\tilde{\xi}})^{-1} \hat{f}(x_1,\tilde{\xi}) d\tilde{\xi}
d\lambda.
\end{aligned}\end{equation}

\begin{proposition}[High-frequency estimate] \label{prop-HFest}
Given (A1)-(A2), (H0)-(H2), (D), and homogeneous boundary conditions
(B), for $0\le |\alpha|\le s-3$, $s$ as in (H0),
\begin{equation}
\begin{aligned}|\cS_2(t)f|_{L^2_x} &\le C
e^{-\theta_1t}|f|_{H^{3}_x},\\|
\partial^\alpha_x\cS_2(t)f|_{L^{2}_{x}}&\le  C
e^{-\theta_1t}|f|_{H^{|\alpha|+3}_x}.\end{aligned}\end{equation}
\end{proposition}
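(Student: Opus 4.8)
The plan is to estimate $\cS_2(t)f$ directly from the inverse Laplace--Fourier representation \eqref{formS2} along the contour $\{\R\lambda=-\theta_1\}$, following the corresponding shock-case argument of \cite[Section~4]{Z4}; the hypotheses enter only through the resolvent bounds of Propositions~\ref{prop-resHF}--\ref{prop-resMF} and their higher-order analogues. Taking the $L^2_x$ norm in \eqref{formS2} and applying Parseval in $\tx$ — after pulling $\partial_{\tx}^{\alpha'}$ out as the factor $(i\txi)^{\alpha'}$, which commutes with $(\lambda-L_\txi)^{-1}$ and absorbs $|\alpha'|$ of the lost derivatives — reduces the claim to
\[
\Big|\,\textup{P.V.}\!\!\int_{\R\lambda=-\theta_1}\!\!\!\chi\, e^{\lambda t}\,\partial_{x_1}^{\alpha_1}(\lambda-L_\txi)^{-1}\hat f(\cdot,\txi)\,d\lambda\,\Big|_{L^2(x_1)}\ \le\ C e^{-\theta_1 t}\,|\hat f(\cdot,\txi)|_{\hat H^{|\alpha|+3}(x_1)}
\]
uniformly in $\txi$ and in $t\ge 1$; squaring, integrating in $\txi$, and using $\int_{\txi}|\hat f(\cdot,\txi)|^2_{\hat H^{|\alpha|+3}(x_1)}\,d\txi\le C\,|f|^2_{H^{|\alpha|+3}_x}$ then gives the result. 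The range $0<t\le 1$ is immediate: there $e^{-\theta_1 t}\ge e^{-\theta_1}$, and $\cS_2(t)=e^{Lt}-\cS_1(t)$ is uniformly bounded $H^{|\alpha|}_x\to H^{|\alpha|}_x$ on $[0,1]$, since $e^{Lt}$ is a strongly continuous semigroup on the Sobolev scale and $\cS_1(t)$ has symbol supported in $|\txi|\le r$ with uniformly bounded resolvent there; so we assume $t\ge 1$.

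For the contour integral I use two facts. First, on $\{\R\lambda=-\theta_1\}$ within the support of $\chi$ the bounds of Propositions~\ref{prop-resHF}--\ref{prop-resMF} apply, and running the energy estimate of Section~\ref{sec-EE} in Laplace--Fourier form at orders $\le s$ (rather than only at order $1$, as in the proof of Proposition~\ref{prop-resHF}) yields, for $|(\txi,\lambda)|\ge R$ and $\R\lambda\ge-\theta$, the bound $|(\lambda-L_\txi)^{-1}\hat g|_{\hat H^{m}(x_1)}\le C|\lambda|^{-1/2}|\hat g|_{\hat H^{m+1}(x_1)}$ for $0\le m\le s-1$; composing three such estimates gives $|\partial_{x_1}^{\alpha_1}(\lambda-L_\txi)^{-3}\hat f|_{L^2(x_1)}\le C|\lambda|^{-3/2}|\hat f|_{\hat H^{\alpha_1+3}(x_1)}$ off a bounded portion of the contour, on which portion Proposition~\ref{prop-resMF} instead supplies $|\partial_{x_1}^{\alpha_1}(\lambda-L_\txi)^{-1}\hat f|_{L^2(x_1)}\le C|\hat f|_{\hat H^{\alpha_1+1}(x_1)}$. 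Second, since a single resolvent decays only like $|\lambda|^{-1/2}$, which is not integrable along $\{\R\lambda=-\theta_1\}$, I integrate by parts \emph{twice} in $\lambda$, using $e^{\lambda t}=t^{-1}\partial_\lambda e^{\lambda t}$ and $\partial_\lambda(\lambda-L_\txi)^{-1}=-(\lambda-L_\txi)^{-2}$: this produces a harmless factor $t^{-2}$, replaces $(\lambda-L_\txi)^{-1}$ by $(\lambda-L_\txi)^{-3}$, and leaves boundary terms at $\I\lambda=\pm\infty$ (vanishing by decay of the resolvent) and at the jump points $|\I\lambda|^2=\theta_1+\theta_2-|\txi|^2$ of $\chi$ (where $|(\txi,\lambda)|$ is bounded, so Proposition~\ref{prop-resMF} applies and these are $O(t^{-1}e^{-\theta_1 t}|\hat f|_{\hat H^{\alpha_1+1}(x_1)})$). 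Estimating the remaining integral by $e^{-\theta_1 t}\!\int_{\R\lambda=-\theta_1}|\chi|\,|\partial_{x_1}^{\alpha_1}(\lambda-L_\txi)^{-3}\hat f|_{L^2(x_1)}\,|d\I\lambda|$ and using $\int_{\RR}(\theta_1^2+s^2)^{-3/4}\,ds<\infty$ yields the displayed bound with an extra factor $t^{-2}\le 1$. This argument uses the energy estimates up to order $\alpha_1+3\le|\alpha|+3\le s$, precisely the restriction $|\alpha|\le s-3$.

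The step I expect to be the genuine obstacle is the derivation of the higher-order half-line resolvent bounds: the energy estimates of Section~\ref{sec-EE} produce, on integration by parts in $x_1$, boundary terms at $x_1=0$ which — unlike in the whole-line shock case — do not vanish, and controlling the normal traces $\partial_{x_1}^j W|_{x_1=0}$ (only $W|_{x_1=0}$ being fixed by the homogeneous Dirichlet conditions) requires trading normal for tangential derivatives via the equation; the leading boundary contribution has a favorable sign by the noncharacteristicity hypothesis (H1), but the bookkeeping is delicate, especially in the outflow case where $w^I(0)$ is unconstrained. The remaining subtlety — justifying the two $\lambda$-integrations by parts against the discontinuous indicator $\chi_{|\txi|^2+|\I\lambda|^2\ge\theta_1+\theta_2}$ of \eqref{formS2} — is handled exactly as in \cite[Section~4]{Z4}, as indicated above.
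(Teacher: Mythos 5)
Your route is genuinely different from the paper's, and it is worth comparing. The paper disposes of $\cS_2$ by inserting the resolvent identity $(\lambda-L_\txi)^{-1}\hat f=\lambda^{-1}(\lambda-L_\txi)^{-1}L_\txi\hat f+\lambda^{-1}\hat f$ into \eqref{formS2}: the explicit factor $\lambda^{-1}$, combined with the $|\lambda|^{-1/2}$ gain of \eqref{res-est}, makes the contour integral absolutely convergent (integrand $O(|\lambda|^{-3/2})$) uniformly for all $t>0$, with the three derivatives entering through $L_\txi\hat f$ measured in $\hat H^1$; the leftover term $\lambda^{-1}\hat f$ is handled by the observation that the principal-value integral of $\lambda^{-1}e^{\lambda t}$ over the full line $\R\lambda=-\theta_1$ vanishes, leaving only the bounded segment cut out by $\chi$. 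This is exactly the simplification advertised in Remark \ref{onehalf}; your double integration by parts in $\lambda$, paying $t^{-2}$ and trading $(\lambda-L_\txi)^{-1}$ for $(\lambda-L_\txi)^{-3}$, is essentially the older \cite{Z3}-style high-frequency argument the paper is deliberately avoiding. It can be made to work, but it is strictly more expensive: even your base $L^2$ inequality requires the resolvent bounds $|(\lambda-L_\txi)^{-1}\hat g|_{\hat H^m(x_1)}\le C|\lambda|^{-1/2}|\hat g|_{\hat H^{m+1}(x_1)}$ for $m\le 2$, which go beyond Proposition \ref{prop-resHF} and whose half-line boundary terms you rightly flag as delicate, whereas the paper's argument for the first inequality uses only the $m=0$ bound already proved; and it forces both a separate small-time regime and a bookkeeping of the jump set of $\chi$, neither of which arises in the paper.

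The one step that is wrong as written is the small-time argument. You claim $\cS_1(t)$ is uniformly bounded on the Sobolev scale for $t\in[0,1]$ because its symbol is supported in $|\txi|\le r$ ``with uniformly bounded resolvent there.'' The resolvent is \emph{not} uniformly bounded on $\Gamma^\txi\cap\{|\lambda|\le r\}$: the contour family approaches the essential spectrum at the origin (distance $O(\rho^2)$), so the $L^2(x_1)\to L^2(x_1)$ norm of $(L_\txi-\lambda)^{-1}$ is at least of order $\rho^{-2}$ there, consistent with the degenerating spatial decay rate in \eqref{G1}; and the usable low-frequency estimates (Propositions \ref{prop-resLF} and \ref{LFest}) are $L^1$-based, hence not controlled by $|f|_{H^{|\alpha|+3}}$ on the half-space. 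To close your small-$t$ step you need an actual argument --- for instance, for each $0<|\txi|\le r$ deform $\Gamma^\txi\cap\{|\lambda|\le r\}$ to the right-hand arc of $\{|\lambda|=r\}$, which is legitimate under (D) since the enclosed region lies in the resolvent set of $L_\txi$, and then use the mid-frequency bounds of Proposition \ref{prop-resMF} together with $|e^{\lambda t}|\le e^{r}$ --- or else simply adopt the paper's resolvent-identity argument, which is uniform in $t>0$ and removes the need for the splitting altogether.
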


\begin{proof} The proof starts with the following resolvent
identity, using analyticity on the resolvent set
$\rho(L_\txi)$ of the resolvent $(\lambda-L_\txi)^{-1}$, for all
$f\in \mathcal{D}(L_\txi)$,
\begin{equation}\label{res-id}
(\lambda-L_\txi)^{-1}f=\lambda^{-1}(\lambda-L_\txi)^{-1}L_\txi
f+\lambda^{-1}f.
\end{equation}

Using this identity and \eqref{formS2}, we estimate
\begin{equation}\label{S-est}\begin{aligned}\cS_2(t)f &=\frac{1}{(2\pi i)^{d}}
\textup{P.V.} \int_{-\theta_1-i\infty}^{-\theta_1+i\infty}
\int_{\RR^{d-1}}\chi_{|\tilde{\xi}|^{2}+|\I\lambda|^{2}\geq\theta_1+\theta_2}\\
&\qquad\qquad\times e^{i\txi\cdot\tx +\lambda
t}\lambda^{-1}(\lambda-L_\txi)^{-1}L_\txi\hat f(x_1,\txi) d\txi
d\lambda\\&\quad+\frac{1}{(2\pi i)^{d}}
\textup{P.V.}\int_{-\theta_1-i\infty}^{-\theta_1+i\infty}
\int_{\RR^{d-1}}\chi_{|\tilde{\xi}|^{2}+|\I\lambda|^{2}\geq\theta_1+\theta_2}\\
&\qquad\qquad\times e^{i\tilde{\xi}\cdot \tilde{x} +\lambda
t}\lambda^{-1}\hat f(x_1,\tilde{\xi}) d\tilde{\xi} d\lambda\\&=:S_1
+ S_2,
\end{aligned}\end{equation}
where, by Plancherel's identity and Propositions \ref{prop-HFest} and \ref{prop-resMF}, we have
$$\begin{aligned}
|S_1|_{L^2(\tx,x_1)}&\le C
\int_{-\theta_1-i\infty}^{-\theta_1+i\infty}
|\lambda|^{-1}|e^{\lambda
t}||(\lambda-L_\txi)^{-1}L_\txi\hat f|_{L^2(\txi,x_1)}|d\lambda|
\\&\le C e^{-\theta_1 t}
\int_{-\theta_1-i\infty}^{-\theta_1+i\infty}
|\lambda|^{-3/2}\Big|(1+|\txi|)|L_\txi\hat f|_{H^1(x_1)}\Big|_{L^2(\txi)}|d\lambda|
\\&\le C
e^{-\theta_1t}|f|_{H^{3}_x}
\end{aligned}$$
and\begin{equation}\begin{aligned}
|S_2|_{L^2_x}&\leq\frac{1}{(2\pi )^{d}}\Big|\text{P.V.}\int_{-\theta_1-i\infty}^{-\theta_1+i\infty}\lambda^{-1}e^{\lambda t} d\lambda \int_{\RR^{d-1}}e^{i\tilde{x}\cdot\tilde{\xi}}\hat f(x_1,\tilde{\xi}) d\tilde{\xi}\Big|_{L^2}\\
&\qquad+ \frac{1}{(2\pi )^{d}}\Big|\text{P.V.}\int_{-\theta_1-i r}^{-\theta_1+i r}\lambda^{-1}e^{\lambda t} d\lambda \int_{\RR^{d-1}}e^{i\tilde{x}\cdot\tilde{\xi}}\hat f(x_1,\tilde{\xi}) d\tilde{\xi}\Big|_{L^2}\\
&\leq Ce^{-\theta_1 t} |f|_{L^2_x},
\end{aligned}\end{equation}by direct computations, noting that the integral in $\lambda$ in the first term is identically zero.
This completes the proof of the first inequality stated in the proposition.
Derivative bounds follow similarly. \end{proof}

\begin{remark}\label{onehalf}
\textup{
Here, we have used the $\lambda^{1/2}$ improvement in \eqref{res-est}
over \eqref{oldres-est} together with modifications introduced in
\cite{KZ} to greatly simplify the original high-frequency
argument given in \cite{Z3} for the shock case.
}
\end{remark}

\subsection{Boundary estimates} For the purpose of studying the nonzero boundary perturbation, we need the following proposition.
For $h:=h(\tilde x,t)$, define
\begin{equation}\label{Dh}\cD_h(t):= (|h_t| + |h_{\tx}| + |h_{\tx\tx}|)(t),\end{equation} and
\begin{equation}\label{Gammah}
\Gamma h(t):=
\begin{aligned}
\int_0^t\int_{\RR^{d-1}}
\Big(\sum_kG_{y_k}B^{k1}&+GA^1\Big)
(x,t-s;0,\tilde y)h(\tilde y,s)\,d\tilde yds,
\end{aligned}
\end{equation}
where $G(x,t;y)$ is the Green function of $\partial_t - L$.
This boundary term will appear when we write down the Duhamel
formulas for the linearized and nonlinear equations (see
\eqref{Duhamel-lin} and \eqref{Duhamel}).
Noting that for the outflow case, the fact that $G(x,t;0,\tilde y)\equiv 0$ simplifies $\Gamma h$ to \begin{equation}\label{Bh1-out}\Gamma h(t)=\int_0^t\int_{\RR^{d-1}}G_{y_1}(x,t-s;0,\tilde y)B^{11}h\,d\tilde yds.\end{equation}
Therefore when dealing with the outflow case, instead of putting assumptions on $h$ itself as in the inflow case,
we make assumptions on $B^{11}h$,
matching with the hypotheses on $W$-coordinates.

\begin{prop}\label{prop-BCs-est} Assume that $h=h(\tx,t)$ satisfies \begin{equation}\label{hdecay}\begin{aligned}|h(t)|_{L^2_{\tx}}&\le E_0(1+t)^{-(d+1)/4}, \\ |h(t)|_{L^\infty_{\tx}}&\le E_0(1+t)^{-d/2}\\ |\cD_h(t)|_{L^1_{\tx} \cap H^{|\gamma|+3}_{\tx}} &\le E_0(1+t)^{-d/2 - \epsilon},\end{aligned}\end{equation}for some positive constant $E_0$; here 
$|\gamma|=[(d-1)/2]+2$, 
and $\epsilon>0$ is arbitrary small for $d=2$ and $\epsilon=0$ for $d\ge 3$. For the outflow case, we replace these assumptions on $h$ by those on $B^{11}h$.
Then we obtain
\begin{equation}\begin{aligned}|\Gamma h(t)|_{L^2} &\le CE_0(1+t)^{-(d-1)/4}, \\|\Gamma h(t)|_{L^{2,\infty}_{\tx,x_1}} &\le CE_0(1+t)^{-(d+1)/4},
\\|\Gamma h(t)|_{L^\infty}&\le CE_0(1+t)^{-d/2},\end{aligned}\end{equation} and derivative bounds
\begin{equation}\begin{aligned}|\partial_x\Gamma h(t)|_{L^{2,\infty}_{\tx,x_1}}&\le CE_0(1+t)^{-(d+1)/4},
\\|\partial_{\tx}^2\Gamma h(t)|_{L^{2,\infty}_{\tx,x_1}}&\le CE_0(1+t)^{-(d+1)/4},\end{aligned}\end{equation} for all $t\ge 0$.
\end{prop}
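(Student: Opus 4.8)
The plan is to estimate $\Gamma h$ by splitting the Green function $G$ into its low- and high-frequency parts, exactly as was done for $\cS(t)$ in Proposition \ref{LFest} and Proposition \ref{prop-HFest}, and then applying the resolvent kernel bounds of Proposition \ref{prop-Gbounds}. First I would write $\Gamma h(t)=\Gamma_1 h(t)+\Gamma_2 h(t)$ according to the decomposition $G=G_1+G_2$ dual to $\cS=\cS_1+\cS_2$; the high-frequency piece $\Gamma_2 h$ is easily controlled using the exponential time-decay $e^{-\theta_1 t}$ from Proposition \ref{prop-HFest} together with the $H^{|\gamma|+3}_{\tx}$-bound on $\cD_h$ in \eqref{hdecay}, so the entire content is in the low-frequency part. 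For the outflow case one uses \eqref{Bh1-out}, so there is a $\partial_{y_1}$ falling on $G$ but the perturbation is $B^{11}h$; in the inflow case all three terms $G_{y_k}B^{k1}+GA^1$ appear with no boundary derivative on $G$ but with $h$ itself.

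The key step is to Fourier–Laplace transform in $(\tx,t)$ and reduce to the resolvent kernel $G_{\txi,\lambda}(x_1,0)$ and $\partial_{y_1}G_{\txi,\lambda}(x_1,0)|_{y_1=0}$ on the contour $\Gamma^{\txi}$, using Proposition \ref{prop-Gbounds}: $|\partial_{y_1}^\beta G_{\txi,\lambda}(x_1,0)|\le C\gamma_2(\rho^\beta+\beta)e^{-\theta\rho^2 x_1}$. In $x_1$-norms this produces factors $\rho^{-2/p}$ for $L^p(x_1)$ exactly as in Proposition \ref{prop-resLF}. One then performs the $\lambda$-contour integral over $\Gamma^\txi\cap\{|\lambda|\le r\}$ against $e^{\lambda t}$, parametrizing $\lambda=ik-\theta_1(k^2+|\txi|^2)$ and bounding $\gamma_2\rho^{-1}$ (resp.\ $\gamma_2$) by the same $(|k|+|\txi|)^{-1}[1+\sum_j(|k-\tau_j|/\rho)^{\epsilon-1}]$ estimate used in the proof of Proposition \ref{LFest}; the $k$-integral yields $t^{-1/2}$ and the remaining $|\txi|$-integral against $e^{-\theta_1|\txi|^2 t}$ yields the stated powers of $(1+t)$. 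The twist is that here we are convolving $G$ in $t$ against $h(s)$ over $0\le s\le t$ rather than acting once at time $t$, so the temporal decay rate coming out of the $\lambda$-integral, namely $(t-s)^{-(d-1)/4}$ type bounds from $G_1$, must be balanced against the assumed decay of $h(s)$ in \eqref{hdecay}; this is a standard Hausdorff–Young / convolution-in-time argument, splitting $\int_0^t=\int_0^{t/2}+\int_{t/2}^t$, using $|h(s)|_{L^2_{\tx}}\lesssim (1+s)^{-(d+1)/4}$ on $[t/2,t]$ and the $L^1_{\tx}$-bound on $\cD_h$ (after integrating by parts in $\tx$ or $s$ to move a derivative off $G$) on $[0,t/2]$. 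Because $h$ does not decay in $x_1$ but $\partial_{y_1}^\beta G_{\txi,\lambda}$ carries the factor $e^{-\theta y_1}$ when $\beta=1$, in the outflow case the $L^{1,\infty}_{\tx,x_1}\to L^p_{x_1}$ mechanism of Proposition \ref{prop-resLF} applies directly, while in the inflow case one uses that the boundary contribution involves $G$ evaluated at $y_1=0$ so the relevant norm of the data is just $|h(\cdot,s)|$ at the boundary.

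The derivative bounds $|\partial_x\Gamma h|_{L^{2,\infty}}$ and $|\partial_{\tx}^2\Gamma h|_{L^{2,\infty}}$ follow the same scheme: an $x_1$-derivative is absorbed using Proposition \ref{prop-Gbounds} with $\beta=1$ on the $x_1$-slot (which by the construction of $G$ near $x_1=y_1=0$ produces an extra $\rho$ or an extra $e^{-\theta x_1}$, matching the $\beta_1=1$ rows of Proposition \ref{LFest}), and $\tx$-derivatives bring down factors $(i\txi)$ under the Fourier transform, i.e.\ two extra powers of $|\txi|$, which are harmless against $e^{-\theta_1|\txi|^2 t}$ but cost $t^{-1}$, changing $(d-1)/4$ decay to $(d+1)/4$—consistent with the claimed bounds. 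The high-frequency derivative contributions are again exponentially small in $t$ provided $|\gamma|+3$ derivatives of $\cD_h$ are available, which is exactly the regularity assumed; here $|\gamma|=[(d-1)/2]+2$ is chosen so that $|\gamma|+3=[(d-1)/2]+5$ matches the Sobolev index in the hypotheses of Theorem \ref{theo-lin}, and the $L^\infty_{\tx}\hookrightarrow$ Sobolev embedding in $d-1$ variables needs roughly $[(d-1)/2]+1$ derivatives.

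The main obstacle I expect is the boundary term singularity as $s\to t$: the kernels $G_{y_k}B^{k1}$ and $GA^1$ evaluated on the boundary, viewed as operators from functions of $\tx$ to functions of $x$, are not absolutely integrable in $(t-s)$ near $s=t$ because each carries at worst a $(t-s)^{-1/2}$ or worse local singularity in time (this is precisely the issue flagged in the introduction of the excerpt). The resolution, following \cite{YZ,NZ} but with the improvement indicated there, is to use duality to rewrite $\Gamma h$: instead of integrating $h$ against derivatives of $G$ restricted to the boundary, one integrates an exponentially weighted extension of $h$ against $G$ (no boundary-restricted derivative), converting the singular boundary integral into a whole-space integral of the same type already estimated for $\cS_1,\cS_2$. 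Concretely one chooses a test function with $e^{-\theta x_1}$ decay so that the resulting source term is $O(e^{-\theta x_1})\cD_h$-like, which is in $L^1_{\tx}(L^1_{x_1})$, and then the estimates above apply verbatim. Making this reduction clean—identifying the right weighted extension and checking that integration by parts produces no additional uncontrolled boundary terms—is the delicate point; everything after that is the routine convolution/contour bookkeeping.
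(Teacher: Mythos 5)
Your final paragraph is exactly the paper's proof: the paper integrates the adjoint equation for $G$ against the weighted extension $g(y_1,\tilde y,s)=e^{-y_1}h(\tilde y,s)$, integrates by parts to obtain $\Gamma h = g(x,t)-\cS(t)g_0-\int_0^t \cS(t-s)\bigl(g_s-L_xg\bigr)\,ds$, and then applies the $\cS_1$, $\cS_2$ estimates together with \eqref{hdecay} and $|g|_{L^p_x}\le C|h|_{L^p_{\tx}}$, precisely the reduction you describe. The direct boundary-kernel/contour route of your first two paragraphs is not used in the paper and, as you yourself note, would founder on the nonintegrable singularity as $s\to t$, so in substance your proposal coincides with the paper's argument.
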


\begin{proof} We first recall that $G(x,t-s;y)$ is a solution of $(\partial_s-L_y)^*G^* =0$, that is,  \begin{equation}\label{explicit}
-G_s -\sum_j(GA^j)_{y_j} + \sum_jGA^j_{y_j}= \sum_{jk}(G_{y_k}B^{kj})_{y_j}.
\end{equation}
Integrating this on $\RR^{d}_+\times [0,t]$ against
\begin{equation}\label{gdef}
g(y_1,\tilde y,s):=e^{-y_1}h(\tilde y, s),
\end{equation}
and integrating by parts twice, we obtain
$$\begin{aligned} \Gamma h&= -
\int_0^t\int_{\RR^{d}_+} \Big(\sum_{jk}G_{y_k}B^{kj} + \sum_jGA^j\Big)g_{y_j} dy ds \\&\quad -\int_0^t\int_{\RR^{d}_+}\Big(-G_s + \sum_j GA^j_{y_j}\Big)g(y,s)dyds
\end{aligned}$$
where, recalling that
$$ \cS(t)f(x) = \int_{\RR^{d}_+} G(x,t;y) f(y)dy,$$
we get
$$\begin{aligned}- \int_0^t\int_{\RR^{d}_+} &\sum_{jk}\Big(G_{y_k}B^{kj} + \sum_jGA^j\Big)g_{y_j} dy ds \\&= -\int_0^t \cS(t-s)\Big(-\sum_{jk}(B^{kj}g_{x_j})_{x_k} + \sum_j A^jg_{x_j}\Big)ds\end{aligned}$$
and $$\begin{aligned} -\int_0^t\int_{\RR^{d}_+}&\Big(-G_s + \sum_j GA^j_{y_j}\Big)g(y,s)dyds \\&= -\int_0^t \cS(t-s)\Big(g_s + \sum_j A^j_{x_j}g\Big)ds + g(x,t)- \cS(t)g(x,0).
\end{aligned}$$
Therefore combining all these estimates yields \begin{equation}\label{Bh2}\Gamma h = g(x,t)- \cS(t)g_0-\int_0^t \cS(t-s)(g_s-L_xg(x,s))ds\end{equation} with $g_0(x):=g(x,0)$
and $L_xg = -\sum_{j}(A^jg)_{x_j} + \sum_{jk}(B^{jk}g_{x_k} )_{x_j}.$

Now we are ready to employ estimates obtained in the previous section
on the solution operator $\cS(t) = \cS_1(t)+\cS_2(t)$.
Noting that
$$
|g|_{L^p_{x}} \le C |h|_{L^p_{\tx}},
$$
we estimate
$$\begin{aligned} |\Gamma h|_{L^2} &\le
|g|_{L^2}+|\cS_1(t)g_0|_{L^2} + |\cS_2(t)g_0|_{L^2}\\&\quad + \int_0^t |\cS_1(t-s)(g_s-Lg)|_{L^2} + |\cS_2(t-s)(g_s-Lg)|_{L^2} ds\\&\le
|h(t)|_{L^2_{\tx}}
+C(1+t)^{-\frac{d-1}{4}}|g_0|_{L^1}
 + Ce^{-\eta
t}|g_0|_{H^3}\\&\quad + \int_0^t
(1+t-s)^{-(d-1)/4}(|g_s|+|Lg|)_{L^1} +
e^{-\theta(t-s)}(|g_s|+|Lg|)_{H^3} ds\\&\le
|h(t)|_{L^2_{\tx}}+C(1+t)^{-\frac{d-1}{4}}|h_0|_{L^1_{\tx}\cap
H^3_{\tx}}\\&\quad+  \int_0^t
(1+t-s)^{-(d-1)/4}|\cD_h(s)|_{L^1_{\tx}} +
e^{-\theta(t-s)}|\cD_h(s)|_{H^3_{\tx}} ds\\&\le
CE_0(1+t)^{-\frac{d-1}{4}}
\end{aligned}$$
and similarly we also obtain
\begin{equation} \begin{aligned} |\Gamma h|_{L^{2,\infty}_{\tx,x_1}}&\le
|h(t)|_{L^{2}_{\tx}}+C(1+t)^{-\frac{d+1}{4}}|h_0|_{L^1_{\tx}\cap H^4_{\tx}}\\&\quad+ C\int_0^t (1+t-s)^{-(d+1)/4}|\cD_h(s)|_{L^1_{\tx}} + e^{-\theta(t-s)}|\cD_h(s)|_{H^4_{\tx}} ds\\&\le
CE_0(1+t)^{-\frac{d+1}{4}}
\end{aligned}\end{equation} and
\begin{equation} \begin{aligned} |\Gamma h|_{L^{\infty}}&\le
|h(t)|_{L^{\infty}_{\tx}}+C(1+t)^{-\frac{d}{2}}|h_0|_{L^1_{\tx}\cap H^{|\gamma|+3}_{\tx}}\\&\quad+ C\int_0^t (1+t-s)^{-d/2}|\cD_h(s)|_{L^1_{\tx}} + e^{-\theta(t-s)}|\cD_h(s)|_{H^{|\gamma|+3}_{\tx}} ds\\&\le
CE_0(1+t)^{-\frac{d}{2}}.
\end{aligned}\end{equation} Similar bounds hold for derivatives.

This completes the proof of the proposition.
\end{proof}

\subsection{Duhamel formula}
The following integral representation formula
expresses the solution of the inhomogeneous
equation \eqref{inhom} in terms of the homogeneous solution
operator $\cS$ for $f$, $h\equiv 0$.

\begin{lemma}[Integral formulation]\label{lem-duhamel}
Solutions $U$ of \eqref{inhom} may be expressed as
\begin{equation}\label{Duhamel1}
\begin{aligned}
  U(x,t)=& \cS(t) U_0 + \int_0^t \cS(t-s) f(\cdot, s)
+ \Gamma U(0,\tx,t)
\end{aligned}
\end{equation} where
$U(x,0) = U_0(x),$
\begin{equation}\label{GammaU1}\Gamma U(0,\tx,t):=\int_0^t\int_{\RR^{d-1}}(\sum_jG_{y_j}B^{j1}+GA^1)(x,t-s;0,\tilde y)U(0,\tilde y,s)\,d\tilde yds,
\end{equation}
and
$G(\cdot, t;y)=\cS(t)\delta_y(\cdot)$
is the Green function of $\partial_t - L$.
\end{lemma}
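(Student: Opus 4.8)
The plan is to obtain \eqref{Duhamel1} by the standard duality argument used for the half-line constructions of \cite{YZ,NZ} (and the whole-line ones of \cite{ZH,MaZ3,Z3}): pair a solution $U$ of \eqref{inhom} (with its boundary conditions) against the Green function $G(x,t-s;y)$ and integrate by parts in $(y,s)$ over $\RR^d_+\times[0,t]$. Recall from \eqref{explicit} that, viewed as a function of $(y,s)$, $G(x,t-s;y)$ solves the adjoint equation $-G_s-\sum_j G_{y_j}A^j=\sum_{jk}(G_{y_k}B^{kj})_{y_j}$ on $\RR^d_+\times(0,t)$, carries the terminal data $G(x,0;y)=\delta_y(x)$ at $s=t$, and satisfies at $y_1=0$ the boundary conditions dual to \eqref{zerobdry-in}--\eqref{zerobdry-out}; recall also $\cS(\tau)\phi(x)=\int_{\RR^d_+}G(x,\tau;y)\phi(y)\,dy$.

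First I would start from the trivial identity
\[
0=\int_0^t\!\!\int_{\RR^d_+}G(x,t-s;y)\cdot\Big[\partial_s U-\sum_{jk}(B^{jk}U_{y_k})_{y_j}+\sum_j(A^jU)_{y_j}-f\Big](y,s)\,dy\,ds
\]
and integrate by parts. The $\partial_s U$ term contributes $U(x,t)-\cS(t)U_0(x)$ from the endpoints $s=t$ (the terminal delta) and $s=0$, plus an interior term $-\int_0^t\!\int G_s\,U$; moving both derivatives off the second-order term and one derivative off the first-order term produces interior terms $-\int_0^t\!\int\big[\sum_{jk}(G_{y_k}B^{kj})_{y_j}+\sum_j G_{y_j}A^j\big]U$ — which together with $-\int_0^t\!\int G_s\,U$ cancel identically by the adjoint equation \eqref{explicit} — together with the boundary integrals over $\{y_1=0\}$
\[
\int_0^t\!\!\int_{\RR^{d-1}}\Big[G\sum_k B^{1k}U_{y_k}-\Big(\sum_j G_{y_j}B^{j1}+GA^1\Big)U\Big](x,t-s;0,\tilde y)\,d\tilde y\,ds;
\]
finally the $f$ term gives $-\int_0^t\cS(t-s)f(\cdot,s)\,ds$. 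Collecting these and comparing with the definition \eqref{GammaU1} of $\Gamma U$, the formula \eqref{Duhamel1} follows as soon as one knows that $\int_0^t\!\int_{\RR^{d-1}}G\sum_k B^{1k}U_{y_k}\,(x,t-s;0,\tilde y)\,d\tilde y\,ds=0$; this is the only nontrivial point.

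To dispose of that term, note that by the block form $B^{1k}=\begin{pmatrix}0&0\\ b^{1k}_1&b^{1k}_2\end{pmatrix}$ the vector $\sum_k B^{1k}U_{y_k}$ has vanishing hyperbolic block, so $G\sum_k B^{1k}U_{y_k}$ only sees the last $r$ columns of $G(x,t-s;0,\tilde y)$, and these are annihilated by the boundary condition satisfied by $G$ at $y_1=0$: in the outflow case in fact all of $G(x,t-s;0,\tilde y)$ vanishes, as already exploited in \eqref{Bh1-out}, while in the inflow case (where $W(0)=h$ is fully prescribed and the dual condition is of the complementary homogeneous type) it is precisely the parabolic block of $G$ that vanishes there. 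Hence the term drops \emph{independently of the a priori unknown normal-derivative trace} $U_{y_1}|_{y_1=0}$, and this is the heart of the matter — organizing the boundary terms so that only the combination $\sum_j G_{y_j}B^{j1}+GA^1$ survives and the $U_{y_1}$-trace disappears is exactly where the dual boundary conditions built into the construction of $G$ are used. The remaining ingredients are routine: the integrations by parts, and the vanishing of the boundary-at-infinity contributions, are justified by the exponential spatial decay of the profile and the coefficients (Lemma~\ref{lem-profile-decay}) together with the decay and smoothing bounds on $\cS(t)=\cS_1(t)+\cS_2(t)$ from Propositions~\ref{LFest} and~\ref{prop-HFest}.
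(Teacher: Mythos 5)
Your proposal is correct and takes essentially the same route as the paper: the paper's own proof is precisely this duality/integration-by-parts argument, pairing the equation against $G(x,t-s;y)$ and invoking $(\partial_s-L_y)^*G^*=0$, with the remaining details deferred to the one-dimensional treatments of \cite{YZ,NZ}. Your extra observation that the boundary term $\int_0^t\int_{\RR^{d-1}}G\sum_k B^{1k}U_{y_k}\,d\tilde y\,ds$ drops because the block structure of $B^{1k}$ confines it to the last $r$ columns of $G$, which vanish at $y_1=0$ by the dual boundary conditions (all of $G$ vanishing there in the outflow case), correctly supplies the step the paper leaves implicit.
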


\begin{proof}
Integrating on $\RR^d_+$ the linearized equations
$$
(\partial_s - L_y)U = f
$$
against $G(x,t-s;y)$ and using the fact that by duality
$$
(\partial_s - L_y)^* G^* (x,t-s;y) \equiv 0,
$$
we easily obtain the lemma as in the one-dimensional case (see \cite{YZ,NZ}),
 recalling that $$ \cS(t)f = \int_{\RR^{d}_+}
G(x,t;y) f(y)dy.$$\end{proof}

\subsection{Proof of linearized stability}
\begin{proof}[Proof of Theorem \ref{theo-lin}]
Writing the Duhamel formula for the linearized equations \begin{equation}\label{Duhamel-lin}
\begin{aligned}
  U(x,t)=& \cS(t) U_0 + \Gamma h(\tx,t),
\end{aligned}
\end{equation} with $\Gamma h$ defined in \eqref{Gammah}, where $U(x,0)=U_0(x)$ and $U(0,\tx,t) = h(\tx,t)$, and applying estimates on low- and high-frequency operators $\cS_1(t)$
and $\cS_2(t)$, we obtain \begin{equation} \begin{aligned} |U(t)|_{L^2} &\le
|\cS_1(t)U_0|_{L^2} + |\cS_2(t)U_0|_{L^2} +|\Gamma h(t)|_{L^2}\\&\le
C(1+t)^{-\frac{d-1}{4}}|U_0|_{L^1} + Ce^{-\eta t}|U_0|_{H^3}+CE_0(1+t)^{-(d-1)/4}\\&\le
C(1+t)^{-\frac{d-1}{4}}(|U_0|_{L^1\cap H^3}+E_0)
\end{aligned}\end{equation}
and \begin{equation} \begin{aligned} |U(t)|_{L^\infty} &\le
|\cS_1(t)U_0|_{L^\infty} + |\cS_2(t)U_0|_{L^\infty} +|\Gamma h(t)|_{L^\infty}\\&\le
C(1+t)^{-\frac{d}{2}}|U_0|_{L^1} +
C|\cS_2(t)
U_0|_{H^{[(d-1)/2]+2}}
+CE_0(1+t)^{-d/2}\\&\le C(1+t)^{-\frac{d}{2}}|U_0|_{L^1}
+ Ce^{-\eta t}
|U_0|_{H^{[(d-1)/2]+5}}+CE_0(1+t)^{-d/2}\\&\le
C(1+t)^{-\frac{d}{2}}(
|U_0|_{L^1\cap H^{[(d-1)/2]+5}}+E_0).
\end{aligned}\end{equation}
These prove the bounds as stated in the theorem for $p=2$ and
$p=\infty$. For $2<p<\infty$, we use the interpolation inequality
between $L^2$ and $L^\infty$.
\end{proof}

\section{Nonlinear stability}

\subsection{Auxiliary energy estimates}\label{sec-EE}
For the analysis of nonlinear stability, we need the following energy estimate adapted from
\cite{MaZ4,NZ,Z4}. Define
the nonlinear perturbation variables $U = (u, v)$ by
\begin{equation}\label{per-var}
U(x,t):=\tilde U(x,t)-\bU(x_1).
\end{equation}

\begin{proposition}\label{prop-energy-est}
Under the hypotheses of Theorem \ref{theo-nonlin}, let $U_0
\in H^s$ and $U=(u,v)^T$ be a solution of \eqref{sys} and
\eqref{per-var}. Suppose that, for $0\le t\le T$, the
$W^{2,\infty}_x$ norm of the solution $U$ remains bounded by a
sufficiently small constant $\zeta>0$. Then
\begin{align}\label{energy-ineq}|U(t)|_{H^s}^2 \le Ce^{-\theta t}|U_0|_{H^s}^2 +
C \int_0^t
e^{-\theta(t-\tau)}\Big(|U(\tau)|_{L^2}^2+|\CalB_h(\tau)|^2\Big)d\tau\end{align}
for all $0\le t\le T$, where the boundary term $\CalB_h$ is
defined as in Theorem \ref{theo-nonlin}.
\end{proposition}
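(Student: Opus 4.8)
The plan is to establish \eqref{energy-ineq} by a Friedrichs-type energy estimate carried out in the partially symmetric variables of \eqref{symmetric}, using the Kawashima genuine-coupling hypothesis (A2) to manufacture dissipation on the hyperbolic block $w^I$ and the noncharacteristicity hypothesis (H1) to control the boundary terms in terms of $\CalB_h$. First I would pass to the symmetric-variable perturbation $W:=\tilde W(\tilde U)-\tilde W(\bU)$, which, subtracting the profile equation from \eqref{symmetric}, satisfies a quasilinear system of the form
\[
\tilde A^0\,W_t+\sum_j\tilde A^j\,W_{x_j}=\sum_{jk}\big(\tilde B^{jk}\,W_{x_k}\big)_{x_j}+\mathcal Q,
\]
with coefficients evaluated at $\bU$ plus the perturbation, where $\mathcal Q$ collects terms quadratic in $(W,\partial_x W)$ together with terms linear in $W$ but carrying a factor $\bU'$; by Lemma \ref{lem-profile-decay} the latter are exponentially localized near $x_1=0$, and the coefficients differ from their constant endstate values by an $x_1$-exponentially-decaying, $\zeta$-small amount. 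The estimate for $U$ then follows from the one for $W$, the two being equivalent in $H^s$ on the small ball under consideration (they are related by a smooth change of variables with invertible differential at $U=0$).

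Next I would run the standard hyperbolic--parabolic estimate: for each $|\alpha|\le s$, apply $\partial_x^\alpha$, take the $L^2(\RR^d_+)$ inner product with $\tilde A^0\partial_x^\alpha W$, and sum. Integrating the parabolic block by parts produces a good dissipative term $-\theta\int|\nabla\partial_x^\alpha v|^2$ by (A3); the transport terms, by symmetry of $\tilde A^j$ at the endstate in (A1), contribute only the divergence term $\tfrac12\int\sum_j(\partial_{x_j}\tilde A^j)|\partial_x^\alpha W|^2$ and commutators, which, by Moser-type product estimates and the $W^{2,\infty}$ smallness, are bounded by $\tfrac\theta2|\partial_x W|_{H^{s-1}}^2+C|W|_{L^2}^2$ — here the profile-derivative pieces are exponentially localized, hence swallowed by the dissipation when they fall on at least one spatial derivative and contributing only at the $L^2$ level otherwise; $\mathcal Q$ is handled similarly. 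To close the system one still lacks dissipation on $\partial_x w^I$; I would add the Kawashima compensator $\delta\frac{d}{dt}\Re\langle K(\partial_x)\partial_x^\alpha W,\partial_x^\alpha W\rangle$ with skew-adjoint symbol $K$ supplied by (A2), producing, for $\delta>0$ small, a term $\gtrsim\delta|\nabla w^I|_{H^{s-1}}^2$ with absorbable errors. Writing $|W|_{H^s}^2=|W|_{L^2}^2+|\partial_x W|_{H^{s-1}}^2$ converts the derivative-level dissipation into a full $-\theta|W|_{H^s}^2$, yielding a differential inequality $\frac{d}{dt}\mathcal E(t)\le-\theta\mathcal E(t)+C|U(t)|_{L^2}^2+(\text{boundary terms at }x_1=0)$ with $\mathcal E\simeq|U|_{H^s}^2$.

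I expect the boundary terms at $x_1=0$ to be the main obstacle. Tangential ($\tilde x$) and time derivatives commute with the boundary conditions (B), so their traces are, in the relevant components, prescribed by the corresponding derivatives of $h$ and dominated by $\CalB_h$. Normal derivatives have no boundary value a priori; here noncharacteristicity (H1) enters: since $\tilde A^1_{11}$ is invertible, the hyperbolic block of the equation can be solved for $\partial_{x_1}w^I$ in terms of $W_t$, tangential derivatives, and lower-order terms, so that iteration expresses $\partial_{x_1}^k w^I|_{x_1=0}$ through time and tangential derivatives of $W$ at the boundary of total order $\le k$, and, with the analogous reduction for the normal derivatives of $v$ using that $v|_{x_1=0}$ is prescribed, everything reduces to $\CalB_h$. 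In the inflow case $\tilde A^1_{11}>0$, the undifferentiated boundary quadratic form $-\tfrac12\int_{x_1=0}\tilde A^1_{11}|\partial^\alpha w^I|^2$ has the unfavorable sign and must be absorbed into $\CalB_h$ using that all of $w^I(0)$ is prescribed — this is why \eqref{Bdry-in} carries time derivatives of $h_1$ up to order $s$; in the outflow case $\tilde A^1_{11}<0$, the same form has a favorable sign and is discarded, so only $w^{II}$-traces (with time derivatives up to order $[(s+1)/2]$) enter, matching \eqref{Bdry-out}. The parabolic boundary terms $\int_{x_1=0}\tilde b^{jk}\partial_{x_k}\partial_x^\alpha v\cdot\partial_{x_j}\partial_x^\alpha v$ with a normal index are treated in the same spirit, trading the normal derivative hitting the boundary for time/tangential derivatives via the equation (or integrating by parts along $\{x_1=0\}$ when both indices are tangential).

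Finally, a Gr\"onwall argument gives $\mathcal E(t)\le Ce^{-\theta t}\mathcal E(0)+C\int_0^t e^{-\theta(t-\tau)}\big(|U(\tau)|_{L^2}^2+|\CalB_h(\tau)|^2\big)\,d\tau$, and the equivalences $\mathcal E\simeq|W|_{H^s}^2\simeq|U|_{H^s}^2$ — valid because the Kawashima corrector is an $O(\delta)$ perturbation of the natural $H^s$ energy and $W$ and $U$ are comparable in $H^s$ on the small ball — yield \eqref{energy-ineq}. The delicate point is the reduction of the normal-derivative boundary traces, especially the interplay in the outflow case between the favorable sign of $\tilde A^1_{11}$ and the noncharacteristic ODE reduction; the interior commutator and nonlinear terms are routine given the $W^{2,\infty}$ smallness hypothesis.
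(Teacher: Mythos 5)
Your overall framework (Friedrichs-type estimates in the symmetric $W$-variables, a Kawashima compensator from (A2), reduction of normal-derivative boundary traces via the parabolic block and, in the inflow case, invertibility of $\tilde A^1_{11}$, then Gr\"onwall) is the same as the paper's, and your treatment of the boundary terms and of the inflow/outflow dichotomy matches \eqref{Bdry-out}--\eqref{Bdry-in}. But there is a genuine gap in the interior estimate: your claim that the profile-derivative pieces are ``swallowed by the dissipation when they fall on at least one spatial derivative and contribute only at the $L^2$ level otherwise'' does not hold for the hyperbolic component. After substituting $W=\tilde W(\tilde U)-\tilde W(\bU)$ one picks up, at every derivative order $k\ge 1$, terms of the form $\wprod{|\bW_{x_1}|\,\partial_x^k w^{I},\partial_x^k w^{I}}$ (from $M_1\bW_{x_1}$, from $A^j(W+\bW)-A^j(W_+)=\cO(\zeta+|\bW_{x_1}|)$, and from the commutators $\partial_x^k A^j\cdot\partial_x W$). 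These are exponentially localized in $x_1$ but carry an $\cO(1)$ coefficient of the size of the \emph{layer amplitude}, not of $\zeta$; they involve $k$ derivatives of $w^{I}$, so they are not dominated by $C|W|_{L^2}^2$; and the parabolic dissipation acts only on $w^{II}$. The only available dissipation on $\partial_x^k w^{I}$ is the Kawashima term, which must be weighted by a small factor $\epsilon$ to remain a perturbation of the energy and which itself produces (see \eqref{K-est2}) an error $C\sum_{i\le r+1}\wprod{(\zeta+|\bW_{x_1}|)\partial_x^i w^{I},\partial_x^i w^{I}}$ at the same order as the dissipation it supplies. Absorbing the bad terms into $\epsilon\theta_2\|\partial_x^k w^{I}\|^2$ therefore requires $|\bW_{x_1}|$ small, i.e., your argument closes only for small-amplitude layers, whereas the proposition is stated (and needed) for large-amplitude layers under the Evans condition.

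The paper's proof supplies exactly the missing ingredient: an exponential weight $\alpha$ defined by the ODE \eqref{alphaeq}, $\alpha_{x_1}=-\mathrm{sign}(A^1_{11})\,c_*|\bW_{x_1}|\alpha$, with $c_*$ large, used to define the weighted inner product $\wprod{\cdot,\cdot}$. Integrating the convection term $A^1\partial_{x_1}$ by parts against this weight and using noncharacteristicity (H1) \emph{in the interior}, not just at the boundary, produces the localized damping $-\wprod{\omega(x_1)\partial_x^k w^{I},\partial_x^k w^{I}}$ with $\omega=c_*\theta_1|\bW_{x_1}|$ (see \eqref{alpha-est}), and then every $\wprod{|\bW_{x_1}|\partial_x^k w^{I},\partial_x^k w^{I}}$ term is absorbed by taking $c_*$ large, as in \eqref{est-v}, with the $c_*$-dependent constants landing harmlessly on $\|w^{II}\|$-terms controlled by the parabolic dissipation. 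Without this weight (or an equivalent Goodman/Matsumura--Nishihara-type device) the scheme you describe does not close at large amplitude; the same mechanism is also what rescues the relaxation from the globally symmetric case (A1$'$)--(A3$'$) to (A1)--(A3), since the asymmetric parts of $A^j$ are again $\cO(\zeta+|\bW_{x_1}|)$.
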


\begin{proof}Observe that a straightforward calculation shows that
$|U|_{H^r}\sim|W|_{H^r},$
\begin{equation}\label{per-varW}W = \tilde W - \bar W := W(\tilde U ) -W(\bU),\end{equation}
for $0\le r\le s$, provided $|U|_{W^{2,\infty}}$ remains bounded, hence it is sufficient to prove a
corresponding bound in the special variable $W$. We first carry out a complete proof
in the more straightforward case with conditions (A1)-(A3) replaced by the following global
versions, indicating afterward by a few brief remarks the changes needed to carry
out the proof in the general case.
\medskip
~\\
(A1') $\tilde A^j(\tilde W),\tilde A^0,\tilde A^1_{11}$ are
symmetric, $\tilde A^0\ge \theta_0>0$,
\medskip
~\\
(A2') Same as (A2),
\medskip
~\\
(A3') $\tilde W = \begin{pmatrix}\tilde w^I\\\tilde w^{II}\end{pmatrix},
\quad  \tilde B^{jk}=\tilde B^{kj}=\begin{pmatrix}0 & 0 \\
0 & \tilde b^{jk}\end{pmatrix},\quad \sum \xi_j\xi_k\tilde b^{jk}\ge
\theta|\xi|^2,$ and $\tilde G\equiv 0$.

Substituting \eqref{per-varW} into \eqref{symmetric}, we obtain the quasilinear perturbation equation
\begin{align}\label{perturb-eqs} A^0 W_t + \sum_jA^jW_{x_j} =
\sum_{jk}(B^{jk}W_{x_k})_{x_j} + M_1\bW_{x_1} +
\sum_j(M_2^j\bW_{x_1})_{x_j}\end{align} where $A^0:=A^0(W+\bW)$ is
symmetric positive definite, $A^j:=A^j(W+\bW) $ are symmetric,
\begin{align*}M_1 &= A^1(W+\bW) - A^1(\bW) = \Big(\int_0^1 dA^1(\bW +
\theta
W)d\theta\Big)W,\\
M_2^j &= B^{j1}(W+\bW) - B^{j1}(\bW) = \begin{pmatrix}0 & 0 \\
0 & (\int_0^1 db^{j1}(\bW + \theta
W)d\theta)W\end{pmatrix}.\end{align*}

As shown in \cite{MaZ4}, we have bounds
\begin{align}\label{bound-A1}|A^0|\le C,\quad |A^0_t| &\le C|W_t|\le C(|W_x|+|w^{II}_{xx}|)\le
C\zeta,\\|\partial_xA^0|+|\partial_x^2A^0|&\le
C(\sum_{k=1}^2|\partial_x^kW|+|\bW_{x_1}|)\le
C(\zeta+|\bW_{x_1}|).\label{bound-A2}\end{align} We have the same
bounds for $A^j$, $B^{jk}$, and also due to the form of $M_1,M_2$,
\begin{align}\label{bound-M}|M_1|,|M_2|\le C(\zeta+|\bW_{x_1}|)|W|. \end{align}

Note that thanks to Lemma \ref{lem-profile-decay} we have the bound
on the profile: $|\bW_{x_1}|\le Ce^{-\theta |x_1|}$, as $x_1 \to
+\infty$.

The following results assert that hyperbolic effects can compensate
for degenerate viscosity $B$, as revealed by the existence of a
compensating matrix $K$.

\begin{lemma}[\cite{KSh}]\label{K} Assuming (A1'), condition (A2') is equivalent to the following:

(K1) There exist smooth skew-symmetric ``compensating matrices''
$K(\xi)$, homogeneous degree one in $\xi$, such that
\begin{equation}\label{K1}\R\Big(\sum_{j,k}\xi_j\xi_kB^{jk}-K(\xi)(A^0)^{-1}\sum_k\xi_kA^k \Big)(W_+)\ge
\theta_2|\xi|^2>0\end{equation}for all $\xi \in
\RR^d\setminus\{0\}$.
\end{lemma}

Define $\alpha$ by the ODE \begin{equation}\label{alphaeq}
\alpha_{x_1} = -\mbox{sign}(A^1_{11})c_*|\bW_{x_1}|\alpha, \quad
\alpha(0)=1\end{equation} where $c_*>0$ is a large constant to be
chosen later. Note that we have \begin{align} \label{alpha-est}
(\alpha_{x_1}/\alpha)A^1_{11} \le
-c_*\theta_1|\bW_{x_1}|=:-\omega(x_1)\end{align}
 and  \begin{align} \label{alpha-est1}
|\alpha_{x_1}/\alpha |\le
c_*|\bW_{x_1}|=\theta_1^{-1}\omega(x_1).\end{align}

In what follows, we shall use $\wprod{\cdot,\cdot}$ as the
$\alpha$-weighted $L^2$ inner product defined as $$\wprod{f,g} =
\wprod{\alpha f,g}_{L^2(\RR^d_+)}$$ and $$\|f\|_s =
\sum_{i=0}^s\sum_{|\alpha|=i}\Big<\partial_x^\alpha
f,\partial_x^\alpha f\Big>^{1/2}$$ as the norm in weighted $H^s$
space. Note that for any symmetric operator $S$,
\begin{align*}\wprod{S f_{x_j},f} &= -\frac 12\wprod{S_{x_j}f,f}, \quad j\not=1\\
\wprod{S f_{x_1},f} &= -\frac
12\wprod{(S_{x_1}+(\alpha_{x_1}/\alpha)S)f,f}-\frac 12 \iprod{
Sf,f},\end{align*} where $\iprod{\cdot,\cdot}$ denotes the
integration on $\RR^d_0:=\{x_1=0\}\times \RR^{d-1}$. Also we define
$$\|f\|_{0,s} = \|f\|_{H^s(\RR^d_0)}=
\sum_{i=0}^s\sum_{|\alpha|=i}\Big<\partial^\alpha_{\tx}f,\partial^\alpha_{\tx}f\Big>_0^{1/2}.$$

Note that in what follows, we shall pay attention to keeping track
of $c_*$. For constants independent of $c_*$, we simply write them
as $C$. Also, for simplicity,
the sum symbol will sometimes be dropped where it is no confusion.
We write $\|f_x\| = \sum_j\|f_{x_j}\|$ and
$\|\partial_x^kf\| = \sum_{|\alpha|=k}\|\partial_x^\alpha f\|$.

\subsubsection{Zeroth order ``Friedrichs-type'' estimate} First,
by integration by parts and estimates
\eqref{bound-A1}, \eqref{bound-A2}, and then \eqref{alpha-est}, we
obtain for $j\not=1$,
\begin{align*}-\langle A^j&W_{x_j},W\rangle = \frac
12\wprod{A^j_{x_j}W,W} \le C\wprod{(\zeta+|\bW_{x_1}|)w^{I},w^{I}}+
C\|w^{II}\|^2_0
\end{align*} and for $j=1$,
\begin{align*}
-\langle A^1W_{x_1},W\rangle &= \frac 12
\wprod{(A^1_{x_1}+(\alpha_{x_1}/\alpha)A^1)W,W}+ \frac 12
\iprod{A^1W, W}\\&\le \frac 12
\wprod{(\alpha_{x_1}/\alpha)A^1_{11}w^{I},w^{I}} +
C\wprod{(\zeta+|\bW_{x_1}|)|W| + \omega(x_1)|w^{II}|,|W|} + J^0_b
\\&\le
-\frac 12\wprod{\omega (x)w^{I},w^{I}}+
C\wprod{(\zeta+|\bW_{x_1}|)w^{I},w^{I}}+ C(c_*)\|w^{II}\|^2_0+
J^0_b,
\end{align*} where $J^0_b$ denotes the boundary term $\frac 12 \iprod{A^1W, W}$. The term $\wprod{|\bW_{x_1}|w^{I},w^{I}}$ may be easily absorbed into
the first term of the right-hand side, since for $c_*$ sufficiently large,
\begin{equation}\label{est-v}\wprod{|\bW_{x_1}|w^{I},w^{I}} \le (c_*\theta_1)^{-1}\wprod{\omega(x_1)w^{I},w^{I}}\le \frac 1{4C}\wprod{\omega(x_1)w^{I},w^{I}}.\end{equation}
Also, integration by parts yields
\begin{align*}
\langle (B^{jk}W_{x_k})_{x_j},W\rangle&=
-\wprod{B^{jk}W_{x_k},W_{x_j}}-\wprod{(\alpha_{x_1}/\alpha)B^{1k}W_{x_k},W} - \iprod{B^{1k}W_{x_k},W}
\\&\le
-\theta
\|w^{II}_{x}\|_0^2+C\wprod{\omega(x_1)w^{II}_{x},w^{II}}-\iprod{b^{1k}w^{II}_{x_k},w^{II}}\\&\le
-\theta \|w^{II}_{x}\|_0^2+C(c_*)\|w^{II}\|_0^2 -
\iprod{b^{1k}w^{II}_{x_k},w^{II}}.
\end{align*}
where we used the fact that $B^{jk}W_{x}\cdot W =
b^{jk}w^{II}_{x}\cdot w^{II}$,
noting that $B$ has block-diagonal form with the first block identical to zero. Similarly, recalling that
$M^j_2 = B^{j1}(W+\bW) -
B^{j1}(\bW)$, we have
\begin{align*}
\langle (M^j_2\bW_{x_1})_{x_j},W\rangle&=
-\wprod{M^j_2\bW_{x_1},W_{x_j}}-\wprod{(\alpha_{x_1}/\alpha)M^1_2\bW_{x_1},W
}- \iprod{M^1_2\bW_{x_1},W}\\&\le
C\wprod{|\bW_{x_1}||W|,|w^{II}_{x}|} +
C\wprod{\omega(x_1)|W|,w^{II}}- \iprod{m^1_2\bW_{x_1},w^{II}}
\\&\le \xi\|w^{II}_{x}\|_0^2 + C\Big(\epsilon\wprod{\omega(x_1)w^{I},w^{I}}+ C(c_*)\|w^{II}\|_0^2\Big)- \iprod{m^1_2\bW_{x_1},w^{II}}
\end{align*}for any small $\xi,\epsilon$. Note that $C$ is independent of $c_*$. Therefore, for $\xi=\theta/2$ and $c_*$ sufficiently large,
combining all above estimates, we obtain
\begin{equation}\label{Fzeroth-est0}\begin{aligned}
\frac 12 \dt\wprod{A^0W,W}
 &= \wprod{A^0W_t,W} +
\frac 12 \wprod{A^0_tW,W}\\
&=\wprod{-A^jW_{x_j} + (B^{jk}W_{x_k})_{x_j}+M_1\bW_{x_1} +
(M^j_2\bW_{x_1})_{x_j},W}+\frac 12
\wprod{A^0_tW,W}\\
&\le -\frac 14[\wprod{\omega(x_1)w^{I},w^{I}} +\theta
\|w^{II}_{x}\|_0^2]+ C\zeta\|w^{I}\|_0^2 + C(c_*)\|w^{II}\|^2_0 +
I_b^0
\end{aligned}\end{equation} where the boundary term
\begin{equation} I_b^0:=\frac 12 \iprod{A^1W,W}-
\iprod{b^{1k}w^{II}_{x_k},w^{II}} -
\iprod{m^1_2\bW_{x_1},w^{II}}\end{equation}
which, in the outflow case (thanks to the
negative definiteness of $A^1_{11}$), is estimated as
\begin{equation} \label{Ib0-out}I_b^0\le -\frac {\theta_1}{2}\|w^{I}\|_{0,0}^2 + C(\|w^{II}\|_{0,0}^2 +
\|w^{II}_{x}\|_{0,0}\|w^{II}\|_{0,0}),\end{equation}
and similarly in the inflow case, estimated as
\begin{equation} \label{Ib0-in}I_b^0\le C(\|W\|_{0,0}^2
+ \|w^{II}_{x}\|_{0,0}\|w^{II}\|_{0,0}).\end{equation}
Here we recall
that $\|\cdot\|_{0,s}:= \|\cdot\|_{H^s(\RR^d_0)}$.

\subsubsection{First order ``Friedrichs-type'' estimate} Similarly as
above, we need the following key estimate, computing by the
use of integration by parts, \eqref{est-v}, and $c_*$ being
sufficiently large,
\begin{equation}\label{ineq-key1}\begin{aligned}
-\sum_j&\wprod{W_{x_i},A^jW_{x_ix_j} }
\\&=\frac12\sum_{j}
\wprod{W_{x_i},A^j_{x_j}W_{x_i}}+\frac12
\wprod{W_{x_i},(\alpha_{x_1}/\alpha)A^1W_{x_i}} +\frac12
\iprod{W_{x_i},A^1W_{x_i}} \\&\le -\frac
14\wprod{\omega(x_1)w^{I}_{x},w^{I}_{x}}+C\zeta\|w^{I}_{x}\|_0^2
+Cc_*^2\|w^{II}_{x}\|_0^2+\frac12 \iprod{W_{x_i},A^1W_{x_i}}.
\end{aligned}\end{equation}
We deal with the boundary term later. Now let us compute
\begin{align}\label{F1} \frac 12 \dt &\wprod{A^0W_{x_i},W_{x_i}}
=\wprod{W_{x_i},(A^0W_t)_{x_i}}-\wprod{W_{x_i},A^0_{x_i}W_t}
+\frac12 \wprod{A^0_tW_{x_i},W_{x_i}}.
\end{align}
We control each term in turn. By \eqref{bound-A1} and \eqref{bound-A2}, we first have
\begin{align*}\wprod{A^0_tW_{x_i},W_{x_i}} \le C\zeta\|W_{x}\|_0^2
\end{align*} and by multiplying $(A^0)^{-1}$ into
\eqref{perturb-eqs}, \begin{align*} |\wprod{W_{x_i},A^0_{x_i}W_t}|
\le& C\wprod{(\zeta+|\bW_{x_1}|)|W_{x}|,
(|W_{x}|+|w^{II}_{xx}|+|W|)}\\\le& \xi\|w^{II}_{xx}\|_0^2 +
C\wprod{(\zeta+|\bW_{x_1}|)w^{I}_{x},w^{I}_{x}}+
C\wprod{(\zeta+|\bW_{x_1}|)w^{I},w^{I}} + C\|w^{II}\|_1^2,
\end{align*}where
the term $\wprod{|\bW_{x_1}|w^{I}_{x},w^{I}_{x}}$ may be treated in the same way
as was $\wprod{|\bW_{x_1}|w^{I},w^{I}}$ in \eqref{est-v}.
Using \eqref{perturb-eqs},
we write the first term in the right-hand side of \eqref{F1} as
\begin{align*}\wprod{W_{x_i},(A^0W_t)_{x_i}}=&\wprod{W_{x_i},[-A^jW_{x_j} +
(B^{jk}W_{x_k})_{x_j}+M_1\bW_{x_1}+(M^j_2\bW_{x_1})_{x_j}]_{x_i}}\\=&-\wprod{W_{x_i},A^jW_{x_ix_j}}
+\wprod{W_{x_i},-A^j_{x_i}W_{x_j}+(M_1\bW_{x_1})_{x_i}}
\\&-\wprod{W_{x_ix_j},[(B^{jk}W_{x_k})_{x_i}+(M_2^j\bW_{x_1})_{x_i}]}
\\&-\wprod{(\alpha_{x_1}/\alpha)W_{x_i},[(B^{1k}W_{x_k})_{x_i}+(M_2^1\bW_{x_1})_{x_i}]}
\\&-\iprod{W_{x_i},[(B^{1k}W_{x_k})_{x_i}+(M_2^1\bW_{x_1})_{x_i}]}
\\\le&-\frac 14\Big[\wprod{\omega(x_1)w^{I}_{x},w^{I}_{x}}+\theta\|w^{II}_{xx}\|_0^2\Big]\\&+C\Big[\zeta\|w^{I}\|_1^2
+C(c_*)\|w^{II}_{x}\|_0^2+\wprod{|\bW_{x_1}|w^{I},w^{I}}\Big]+I_b^1\end{align*}
where $I_b^1$ denotes the boundary terms
\begin{equation}\label{Ib1}\begin{aligned}I_b^1:&=\frac12
\iprod{W_{x_i},A^1W_{x_i}}
-\iprod{W_{x_i},[(B^{1k}W_{x_k})_{x_i}+(M_2^1\bW_{x_1})_{x_i}]}\\&=\frac12
\iprod{W_{x_i},A^1W_{x_i}}
-\iprod{w^{II}_{x_i},[(b^{1k}w^{II}_{x_k})_{x_i}+(m_2^1\bW_{x_1})_{x_i}]},\end{aligned}\end{equation}
and we have used (A3) for each fixed $i$ and $\xi_j =
(W_{x_i})_{x_j}$ to get
\begin{equation}\sum_{jk}\wprod{W_{x_ix_j},B^{jk}W_{x_kx_i}} \ge \theta
\sum_j\|W_{x_ix_j}\|^2_0 ,\end{equation} and estimates
\eqref{ineq-key1},\eqref{est-v} for $w^{I},w^{I}_{x}$,
and Young's inequality to obtain:
\begin{align*}
\wprod{W_{x},-A^j_{x}W_{x}+(M_1\bW_{x_1})_{x}}&\le
C\wprod{(\zeta+|\bW_{x_1}|)|W_{x}|,|W_{x}|+|W|}.
 \\
-\wprod{W_{xx}+(\alpha_{x_1}/\alpha)W_{x},(B^{jk}W_{x})_{x}}&\le\\
 -\theta\|w^{II}_{xx}\|_0^2 &+ C\langle
 |w^{II}_{xx}|+\omega(x_1)|w^{II}_{x}|,(\zeta+|\bW_{x_1}|)|w^{II}_{x}|\rangle
\\
-\wprod{W_{xx}+(\alpha_{x_1}/\alpha)W_{x},(M_2^j\bW_{x_1})_{x}}&\le\\
C\langle
|w^{II}_{xx}|&+\omega(x_1)|w^{II}_{x}|,(\zeta+|\bW_{x_1}|)(|W_{x}|+|W|)\rangle.
\end{align*}

Putting these estimates together into \eqref{F1}, we have obtained
\begin{align} \frac 12 \dt \wprod{A^0W_{x},W_{x}}  &+\frac 14\theta\|w^{II}_{xx}\|^2_{0}+\frac 14
\wprod{\omega(x_1)w^{I}_{x},w^{I}_{x}}\notag\\&\le
C\Big[\zeta\|w^{I}\|_1^2
+\wprod{|\bW_{x_1}|w^{I},w^{I}}+C(c_*)\|w^{II}\|_1^2\Big]+I_b^1.\label{ineq-key2}
\end{align}
Let us now treat the boundary term. First observe that using the
parabolic equations, noting that $A^0$ is the diagonal-block form,
we can estimate
\begin{align}\label{BCest1}(b^{jk}w^{II}_{x_k})_{x_j}(0,\tx,t)\le C\Big(|w^{II}_t|  +|W_{x_j}| + |W|
\Big)(0,\tx,t)
\end{align}
and thus for $i\not=1$
\begin{equation}\label{BCest2}\begin{aligned}\langle w^{II}_{x_i},[(b^{1k}w^{II}_{x_k})_{x_i}&+(m_2^1\bW_{x_1})_{x_i}]\rangle_0\\&\le \int_{\RR^d_0}|w^{II}_{x_ix_i}|\big(|W|+|w^{II}_{x_k}|\Big)
\\&\le C\int_{\RR^d_0}\Big(|W|^2+|w^{II}_{x}|^2+|w^{II}_{\tx \tx}|^2\Big)
\end{aligned}\end{equation}
and for $i=1$, using $b^{1k} = b^{k1}$, \eqref{BCest1}, and recalling here that we always use the sum convention,
\begin{equation}\label{BCest3}\begin{aligned}\sum_k(b^{1k}w^{II}_{x_k})_{x_1}
&=\frac 12\Big((b^{1k}w^{II}_{x_k})_{x_1}+(b^{j1}w^{II}_{x_1})_{x_j} + b_{x_1}^{1k}w^{II}_{x_k} - b_{x_j}^{j1}w^{II}_{x_1}\Big)
\\& =\frac 12\Big((b^{jk}w^{II}_{x_k})_{x_j}+ b_{x_1}^{1k}w^{II}_{x_k} - b_{x_j}^{j1}w^{II}_{x_1}-\sum_{j\not=1;\,k\not=1}(b^{jk}w^{II}_{x_k})_{x_j} \Big)
\\&
\le C\big(|w^{II}_t|  +|W|+|W_{x_j}|  + |w^{II}_{\tx\tx}|\Big)
.\end{aligned}\end{equation}
Therefore \begin{align*}\langle w^{II}_{x_1},&[(b^{1k}w^{II}_{x_k})_{x_1}+(m_2^1\bW_{x_1})_{x_1}]\rangle_0
\\&\le\epsilon\int_{\RR^d_0}|w^{I}_x|^2+C\int_{\RR^d_0}\Big(|w^{II}_t|^2+|W|^2+|w^{II}_{x}|^2+|w^{II}_{\tx\tx}|^2\Big)
\end{align*}

For the first term in $I^1_b$,
we consider each inflow/outflow case separately.
For the outflow case, since $A^1_{11}\le -\theta_1<0$, we get
$$A^1W_{x}\cdot W_{x}\le -\frac{
\theta_1}2|w^{I}_{x}|^2 + C|w^{II}_{x}|^2.$$ Therefore
\begin{align}\label{Ibd1}I_b^1 \le -\frac{\theta_1}2\int_{\RR^d_0}|w^{I}_x|^2+ \int_{\RR^d_0}\Big(|W|^2+|w^{II}_{x}|^2+|w^{II}_t|^2+
|w^{II}_{\tx\tx}|^2\Big).
\end{align}

Meanwhile, for the inflow case, since $A^1_{11}\ge \theta_1>0$, we have
$$|A^1W_{x}\cdot W_{x}|\le C|W_{x}|^2.$$
In this case, the invertibility of $A^1_{11}$ allows us to use
the hyperbolic equation to derive
\begin{align*}|w^{I}_{x_1}|&\le C(|w^{I}_t|
+ |w^{II}_{x}|+|w^{I}_{\tx}|).\end{align*}Therefore
we get \begin{align} I_b^1 \le
\int_{\RR^d_0}\Big(|W|^2+|W_t|^2+|w^{I}_{\tx}|^2+|w^{II}_x|^2+|w^{II}_{\tx\tx}|^2\Big).\end{align}

Now apply the standard Sobolev inequality
\begin{equation}|w(0)|^2 \le
C\|w\|_{L^2(\RR)}(\|w_{x}\|_{L^2(\RR)}+\|w\|_{L^2(\RR)})\end{equation}
to control the term $|w^{II}_{x_1}(0)|^2$ in $I_b^1$ in both cases. We get
\begin{equation}\label{Sob-ineq}\begin{aligned}&\int_{\RR^d_0}|w^{II}_{x_1}|^2
\le \epsilon'\|w^{II}_{xx}\|_0^2 + C\|w^{II}_{x}\|^2_0.\end{aligned}\end{equation}
Using this with $\epsilon'=\theta/8$, \eqref{Ib1}, and \eqref{Ibd1},
the estimate \eqref{ineq-key2} reads
\begin{equation}\label{ineq-key3}\begin{aligned}\dt
\wprod{A^0W_{x},W_{x}} &+\|w^{II}_{xx}\|^2_{0}+
\wprod{\omega(x_1)w^{I}_{x},w^{I}_{x} }\\&\le
C\Big(\zeta\|w^{I}\|_1^2 +\wprod{|\bW_{x_1}|w^{I},w^{I}}+
C(c_*)\|w^{II}\|_{1}^2\Big)+ I_b^1
\end{aligned}\end{equation}where the (new) boundary term $I_b^1$
satisfies
\begin{equation}\label{Ib1-out} I_b^1\le -\frac{\theta_1}2\int_{\RR^d_0}|w^{I}_x|^2+ C\int_{\RR^d_0}\Big(|W|^2+|w^{II}_{\tx}|^2+|w^{II}_t|^2+
|w^{II}_{\tx\tx}|^2\Big)\end{equation} for the outflow case, and
\begin{equation} \label{Ib1-in}I_b^1\le \int_{\RR^d_0}\Big(|W|^2+|W_t|^2+|W_{\tx}|^2+|w^{II}_{\tx\tx}|^2\Big)\end{equation}for the inflow case.

\subsubsection{Higher order ``Friedrichs-type'' estimate} For any fixed multi-index
$\alpha = (\alpha_{x_1},\cdots , \alpha_{x_d})$, $\alpha_1=0,1$,
$|\alpha|=k=2,...,s$, by computing $\dt \langle
A^0\partial_{x}^\alpha W,\partial_{x}^\alpha W\rangle$ and following
the same spirit as the above subsection, we easily obtain
\begin{equation}\label{F-higher1}\begin{aligned} \dt \langle
A^0&\partial_{x}^\alpha W,\partial_{x}^\alpha W\rangle +
\theta\|\partial^{\alpha+1}_{x}w^{II}\|_0^2+
\wprod{\omega(x_1)\partial^\alpha_{x}w^{I},\partial^{\alpha}_{x}w^{I}
}\\&\le C\Big(C(c_*)\|w^{II}\|_k^2 + \zeta\|w^{I}\|_k^2 +
\sum_{i=1}^{k-1}\wprod{|\bW_{x_1}|\partial_{x}^iw^{I},\partial_{x}^iw^{I}}\Big)+I_b^\alpha
\end{aligned}\end{equation} where \begin{align*}\partial_x^\alpha:&=\partial_{x_1}^{\alpha_1}\cdots\partial_{x_d}^{\alpha_d},\quad
\partial_x^{\alpha+1}:=\sum_j\partial_{x_1}^{\alpha_1}\cdots\partial_{x_d}^{\alpha_d} \partial_{x_j}, \quad \partial_x^i =\sum_{|\beta|=i}\partial_{x_1}^{\beta_1}\cdots\partial_{x_d}^{\beta_d}\end{align*}
 and the boundary term $I_b^\alpha$ satisfies
\begin{equation}\label{BCest-out} \begin{aligned}I_b^\alpha\le -\frac{\theta_1}2\int_{\RR^d_0}|\partial_x^\alpha w^{I}|^2+
C\int_{\RR^d_0}\Big(\sum_{i=1}^{[(k+1)/2]}|\partial_t^iw^{II}|^2+\sum_{i=0}^{k-1}|\partial^i_xw^{I}|^2+
\sum_{i=0}^k|\partial^i_{\tx}w^{II}|^2\Big)\end{aligned}\end{equation}
for the outflow case, and
\begin{equation} \label{BCest-in}I_b^\alpha\le \int_{\RR^d_0}\Big(\sum_{i=0}^k|\partial_t^iw^{I}|^2+\sum_{i=1}^{[(k+1)/2]}|\partial_t^iw^{II}|^2+
\sum_{i=0}^k|\partial^i_{\tx}W|^2\Big)\end{equation}for the inflow
case.

Now for $\alpha$ with $\alpha_1=2,...,s$ we observe that the
estimate \eqref{F-higher1} still holds. Indeed, using integration by
parts and computing $\dt \langle A^0\partial_{x}^\alpha
W,\partial_{x}^\alpha W\rangle$ as above
leaves the boundary terms
as
\begin{equation}\label{Ib-alpha}I_b^\alpha:=\frac12
\iprod{\partial^\alpha_x W,A^1\partial^\alpha_xW}
-\iprod{\partial^\alpha_xw^{II},\partial^\alpha_x[(b^{1k}w^{II}_{x_k})+(m_2^1\bW_{x_1})]}.\end{equation}
Then we can use the parabolic equations to solve $$w^{II}_{x_1x_1} = (b^{11})^{-1}\Big(A^0_2w^{II}_t+A^j_2W_{x_j} - (b^{jk}w^{II}_{x_k})_{x_j}
-b^{11}_{x_1}w^{II}_{x_1}- M_1\bW_{x_1} -
(m_2^j\bW_{x_1})_{x_j}\Big).$$
Using this we can reduce the order of derivative with respect to $x_1$ in $\partial_x^\alpha$ to one,
with the same spirit as \eqref{BCest2} and \eqref{BCest3}.
Finally we use the Sobolev embedding similar to \eqref{Sob-ineq} to obtain the
estimate for the normal derivative $\partial_{x_1}$,
 and get the estimate for $I_b^\alpha$ as
 claimed in \eqref{BCest-out} and \eqref{BCest-in}.

We recall next the following Kawashima-type estimate, presented in \cite{Z3}, to bound the term
$\|w^{I}\|_k^2$ appearing on the left hand side of \eqref{F-higher1}.

\subsubsection{``Kawashima-type'' estimate} Let $K(\xi)$ be the
skew-symmetry in \eqref{K1}. Using Plancherel's identity and the
equations \eqref{perturb-eqs}, we compute
\begin{equation}\label{K-est1} \begin{aligned} \frac 12 \dt\wprod{K(\partial_x)\partial^r_x W,
\partial_x^r}&= \frac 12 \dt \wprod{iK(\xi)(i\xi)^r\hat W,(i\xi)^r\hat
W}\\&= \wprod{iK(\xi)(i\xi)^r\hat W, (i\xi)^r\hat W_t}\\&= \wprod{
(i\xi)^r\hat W, -K(\xi)(A^0_+)^{-1}\sum_j\xi_jA^j_+(i\xi)^r\hat W}
\\&\quad + \wprod{iK(\xi)(i\xi)^r\hat W,(i\xi)^r\hat H},
\end{aligned}\end{equation}
where \begin{equation}\begin{aligned}
H:=&\sum_j\Big((A^0_+)^{-1}A^j_+ - (A^0)^{-1}A^j\Big)W_{x_j} \\&+
(A^0)^{-1}\Big(\sum_{jk}(B^{jk}W_{x_k})_{x_j} + M_1\bW_{x_1} +
\sum_j(M_2^j\bW_{x_1})_{x_j}\Big).
\end{aligned}\end{equation}

By using the fact that $|(A^0_+)^{-1}A^j_+ - (A^0)^{-1}A^j| =
\cO(\zeta + |\bW_{x_1}|)$, we can easily obtain
$$ \|\partial_x^rH\|_0^2 \le C\|w^{II}\|_{r+2}^2 + C\sum_{k=0}^{r+1}\wprod{(\zeta+|\bW_{x_1}|)\partial_x^{k}w^I,\partial_x^{k}w^I}.$$

Meanwhile, applying \eqref{K1} into the first term of the last line
in \eqref{K-est1}, we get
\begin{align*}\langle
(i\xi)^r\hat W, &-K(\xi)(A^0_+)^{-1}\sum_j\xi_jA^j_+(i\xi)^r\hat
W\rangle \\&\ge \theta\||\xi|^{r+1}\hat W\|_0^2 - C\||\xi|^{r+1}\hat
w^{II}\|_0^2\\&=\theta\|\partial_x^{r+1}w^{I}\|_0^2 -
C\|\partial_x^{r+1}w^{II}\|_0^2.\end{align*}

Putting these estimates together into \eqref{K-est1}, we have obtained
the high order ``Kawashima-type'' estimate:
\begin{equation}\label{K-est2}
\begin{aligned}
\dt\wprod{K(\partial_x)\partial_{x}^{r}W,\partial_{x}^{r}W}
\le&-\theta\|\partial_{x}^{r+1} w^{I}\|_0^2+
C\|w^{II}\|_{r+2}^2\\
&
+C\sum_{i=0}^{r+1}\wprod{(\zeta+|\bW_{x_1}|)\partial_{x}^{i}w^{I},\partial_{x}^{i}w^{I}}
\end{aligned}\end{equation}

\subsubsection{Final estimates} We are ready to conclude our result.
First combining the estimate \eqref{ineq-key3} with
\eqref{Fzeroth-est0}, we easily obtain
\begin{align*} \frac 12 \dt &\Big(\wprod{A^0W_{x},W_{x}} + M\wprod{A^0 W,W}\Big)
 \\\le& -\Big(\frac\theta8\|w^{II}_{xx}\|^2_{0}+\frac 14\wprod{\omega(x_1)w^{I}_{x},w^{I}_{x} }\Big)\\&+
C\Big(\zeta\|w^{I}\|_1^2  +\wprod{|\bW_{x_1}|w^{I},w^{I}}+
C(c_*)\|w^{II}\|_{1}^2\Big) + I_b^1
\\&-\frac M4\Big(\wprod{\omega(x_1) w^{I},w^{I}} +\theta
\|w^{II}_{x}\|_0^2\Big)+ CM\zeta\|w^{I}\|_0^2+ MC(c_*)\|w^{II}\|^2_0 + MI_b^0
\end{align*}

By choosing $M$ sufficiently large such that $M\theta \gg C(c_*)$,
and noting that $c_*\theta_1 |\bW_{x_1}|\le \omega(x_1)$, we get
\begin{equation}\label{F-combine01}\begin{aligned} \frac 12 \dt&
\Big(\wprod{A^0W_{x},W_{x}} + M\wprod{A^0 W,W}\Big)\\\le&
-\Big(\theta\|w^{II}\|^2_{2}+\wprod{\omega(x_1) w^{I},w^{I}}+
\wprod{\omega(x_1)w^{I}_{x},w^{I}_{x} }\Big)\\&+
C\Big(\zeta\|w^{I}\|_1^2+ C(c_*)\|w^{II}\|^2_0\Big) + I_b^1 +
MI_b^0.
\end{aligned}\end{equation}
We shall treat the boundary terms later.
Now we use the estimate
\eqref{K-est2} (for $r=0$) to absorb the term
$\|\partial_xw^{I}\|_0$ into the left hand side. Indeed, fixing
$c_*$ large as above, adding \eqref{F-combine01} with \eqref{K-est2}
times $\epsilon$, and choosing $\epsilon ,\zeta$ sufficiently small
such that $\epsilon C(c_*)\ll \theta, \epsilon \ll 1$ and $\zeta \ll
\epsilon \theta_2$, we obtain

\begin{align*} \frac 12 \dt \Big(&\wprod{A^0W_{x},W_{x}} +
M\wprod{A^0 W,W} + \epsilon \wprod{KW_{x},W} \Big) \notag\\\le&
-\Big(\theta\|w^{II}\|^2_{2}+\wprod{\omega(x_1) w^{I},w^{I}}+
\wprod{\omega(x_1)w^{I}_{x},w^{I}_{x} }\Big)\\&+
C\Big(\zeta\|w^{I}\|_1^2+ C(c_*)\|w^{II}\|^2_0\Big)
-\frac{\theta_2\epsilon}{2}\|w^{I}_{x}\|_0^2\\&+
C\epsilon\Big(\|w^{II}\|_{2}^2+\zeta\|w^{I}\|_0^2+\wprod{\omega(x_1)
w^{I},w^{I}}+ \wprod{\omega(x_1)w^{I}_{x},w^{I}_{x} }\Big)+ I_b^1
+ MI_b^0
\\\le &-\frac 12\Big(\theta\|w^{II}\|^2_{2} + \theta_2\epsilon\|w^{I}_{x}\|_0^2\Big)+
C(c_*)\|W\|_0^2+ I_b
\end{align*} where $I_b:=I_b^1 +
MI_b^0$.

In view of boundary terms $I_b^0$ and $I_b^1$, we treat the term $I_b$
in each inflow/outflow case separately. Recall the inequality
\eqref{Sob-ineq}, $\|w^{II}_{x_1}\|_{0,0}\le C\|w^{II}\|_2$. Thus,
using this, for the inflow case we have
\begin{equation}\begin{aligned}I_b^0&\le C(\|W\|_{0,0}^2
+ \|w^{II}_{x}\|_{0,0}\|w^{II}\|_{0,0})\le C(\|W\|_{0,0}^2 +
\|w^{II}_{\tx}\|_{0,0}^2+\epsilon \|w^{II}\|_2^2)
\end{aligned}\end{equation}
and for the outflow case,
\begin{equation}\begin{aligned}I_b^0&\le -\frac {\theta_1}{2}\|w^{I}\|_{0,0}^2 + C(\|w^{II}\|_{0,0}^2 +
\|w^{II}_{x}\|_{0,0}\|w^{II}\|_{0,0})\\&\le -\frac
{\theta_1}{2}\|w^{I}\|_{0,0}^2  + C(\|w^{II}\|_{0,0}^2 +
\|w^{II}_{\tx}\|_{0,0}^2+\epsilon
\|w^{II}\|_2^2).\end{aligned}\end{equation}

Therefore these together with \eqref{Ib1-out} and \eqref{Ib1-in},
using the good estimate of $\|w^{II}_{xx}\|_0^2$, yield
\begin{equation}\label{Ib-out} I_b\le
-\frac{\theta_1}2\int_{\RR^d_0}(|w^I|^2+|w^{I}_x|^2) +
C\int_{\RR^d_0}\Big(|w^{II}|^2+|w^{II}_{\tx}|^2+|w^{II}_t|^2+
|w^{II}_{\tx\tx}|^2\Big)\end{equation} for the outflow case, and
\begin{equation} \label{Ib-in}I_b^1\le \int_{\RR^d_0}\Big(|W|^2+|W_t|^2+|W_{\tx}|^2+|w^{II}_{\tx\tx}|^2\Big)\end{equation}
for the inflow case.

Now by Cauchy-Schwarz's inequality, $|K(\xi)|\le C|\xi|$, and
positive definiteness of $A^0$, it is easy to see that
\begin{equation}\begin{aligned}\cE:&=\wprod{A^0W_{x},W_{x}} + M\wprod{A^0 W,W} +
\epsilon \wprod{K(\partial_x)W,W} \sim \|W\|_{H^1_\alpha}^2 \sim
\|W\|_{H^1}^2.\end{aligned}\end{equation}  The last equivalence is
due to the fact that $\alpha$ is bounded above and below away from
zero.
Thus the above yields
\begin{align*}\dt \cE(W)(t)\le  - \theta_3 \cE(W)(t) + C(c_*)\Big(\|W(t)\|_{L^2}^2 +
|\CalB_1(t)|^2\Big),\end{align*} for some positive constant $\theta_3$,
which by the Gronwall inequality implies
\begin{equation}\label{energy-estimate}\|W(t)\|_{H^1}^2 \le Ce^{-\theta t}\|W_0\|_{H^1}^2 + C(c_*)\int_0^t e^{-\theta(t-\tau)}
\Big(\|W(\tau)\|_{L^2}^2 + |\CalB_1(\tau)|^2\Big)d\tau,\end{equation}
where $W(x,0)=W_0(x)$ and
\begin{equation}
|\CalB_1(\tau)|^2:= \int_{\RR^d_0}\Big(|W|^2+|W_t|^2+|W_{\tx}|^2+|w^{II}_{\tx\tx}|^2\Big)\end{equation} for the inflow case, and
\begin{equation}
|\CalB_1(\tau)|^2:= \int_{\RR^d_0}\Big(|w^{II}|^2+|w^{II}_{\tx}|^2+|w^{II}_t|^2+
|w^{II}_{\tx\tx}|^2\Big)\end{equation} for the outflow case.

Similarly, by induction, we can derive the same estimates for $W$ in
$H^s$. To do that, let us define
\begin{align*} \cE_1(W)&:= \wprod{A^0W_{x},W_{x}} + M\wprod{A^0 W,W} + \epsilon \wprod{KW_{x},W}
\\\cE_k(W)&:= \wprod{A^0\partial_{x}^k W,\partial_{x}^kW} + M\cE_{k-1}(W) + \epsilon \wprod{K\partial_{x}^kW,\partial_{x}^{k-1}W}
, \qquad k\le s
.\end{align*}

Then similarly by the Cauchy-Schwarz inequality,
$\cE_s(W) \sim
\|W\|_{H^s}^2$, and by induction, we obtain
\begin{align*}\dt \cE_s(W)(t)\le  - \theta_3 \cE_s(W)(t) + C(c_*)(\|W(t)\|_{L^2}^2+|\CalB_h(t)|^2),\end{align*} for some positive constant
$\theta_3$,
which by the Gronwall inequality yields
\begin{equation}\label{energy-estimate-higher}\|W(t)\|_{H^s}^2 \le Ce^{-\theta t}\|W_0\|_{H^s}^2 + C(c_*)\int_0^t
e^{-\theta(t-\tau)}(\|W(\tau)\|_{L^2}^2+|\CalB_h(\tau)|^2)d\tau,\end{equation}
where $W(x,0)=W_0(x)$,
and $\CalB_h$ are defined as in
\eqref{Bdry-out} and \eqref{Bdry-in}.

\subsubsection{The general case}
Following \cite{MaZ4,Z3}, the general
case that hypotheses (A1)-(A3) hold can easily be covered via
following simple observations. First, we may express matrix $A$ in
\eqref{perturb-eqs} as
\begin{equation}\label{formA}\begin{aligned}A^j(W+\bW) &= \hat A^j +
(\zeta + |\bW_{x_1}|)\begin{pmatrix}0 & \cO(1) \\
\cO(1) & \cO(1)\end{pmatrix},\end{aligned}\end{equation}
where $\hat A^j$ is a symmetric matrix obeying the same derivative
bounds as described for $A^j$, $\hat A^1$ identical to $A^1$ in the $11$ block and
obtained in other blocks $kl$ by
\begin{equation}\label{formAjk}\begin{aligned}A^1_{kl}(W+\bW) &=
A^1_{kl}(\bW)+A^1_{kl}(W+\bW) - A^1_{kl}(\bW)\\&= A^1_{kl}(W_+) +
\cO(|W_{x}|+|\bW_{x_1}|)\\&= A^1_{kl}(W_+) + \cO(\zeta+|\bW_{x_1}|)
\end{aligned}\end{equation}
and meanwhile, $\hat A^j$, $j\not=1$, obtained by $A^j= A^j(W_+) + \cO(\zeta+|\bW_{x_1}|)$,
similarly as in \eqref{formAjk}.

Replacing $A^j$ by $\hat A^j$ in the $k^{th}$ order
Friedrichs-type bounds above, we find that the resulting error terms
may be expressed as
$$\wprod{\partial_{x}^k\cO(\zeta +
|\bW_{x_1}|)|W|,|\partial_{x}^{k+1}w^{II}|},$$ plus lower order
terms, easily absorbed using Young's inequality, and boundary terms
$$\cO(\sum_{i=0}^k|\partial_{x}^iw^{II}(0)||\partial^{k}_{x}w^{I}(0)|)$$ resulting from the use of integration by
parts as we deal with the $12$-block. However these boundary terms
were already treated somewhere as before.
Hence we can recover the same Friedrichs-type estimates obtained
above. Thus we may relax $(A1')$ to $(A1)$.

Next, to relax $(A3')$ to $(A3)$, first we show that the symmetry condition $B^{jk}=B^{kj}$ is not necessary. Indeed, by writing \begin{align*}\sum_{jk}(B^{jk}W_{x_k})_{x_j} = \sum_{jk}\Big(\frac12(B^{jk}+B^{kj})W_{x_k}\Big)_{x_j} + \frac12\sum_{jk}(B^{jk}-B^{kj})_{x_j}W_{x_k},\end{align*}
we can just replace $B^{jk}$ by $\tilde B^{jk}:= \frac12(B^{jk}+B^{kj})$, satisfying the same $(A3')$, and thus still obtain the energy estimates as before, with a harmless error term (last term in the above identity).
Next notice that the term $g(\tilde W_{x})-g(\bW_{x_1})$
in the perturbation equation may be Taylor expanded as
$$\begin{pmatrix}0 \\
g_1(\tilde W_{x},\bW_{x_1})+g_1(\bW_{x_1},\tilde W_{x})\end{pmatrix}+\begin{pmatrix}0 \\
\cO(|W_{x}|^2)\end{pmatrix}$$
The first term, since it vanishes in the first component and since
$|\bar W_x|$ decays at plus spatial infinity,
yields by Young's inequality the estimate
$$
\Big\langle
\begin{pmatrix}0 \\
g_1(\tilde W_{x},\bW_{x_1})+g_1(\bW_{x_1},\tilde W_{x})\end{pmatrix},
\begin{pmatrix}w^{I}_{x} \\
w^{II}_{x}\end{pmatrix} \Big\rangle
\le
C\Big(\wprod{(\zeta+|\bW_{x_1}|)w^{I}_{x},w^{I}_{x}}
+\|w^{II}_{x}\|_{0}^2\Big)$$
which can be treated in the
Friedrichs-type estimates. The $(0,O(|W_{x}|^2)^T$ nonlinear term may
be treated as other source terms in the energy estimates.
Specifically, the
worst-case term \begin{align*}\Big<\partial_{x}^k W,&\partial_{x}^k\begin{pmatrix}0 \\
\cO(|W_{x}|^2)\end{pmatrix}\Big> \\&= -\wprod{\partial_{x}^{k+1}
w^{II},\partial_{x}^{k-1}\cO(|W_{x}|^2)}-\partial_{x}^k
w^{II}(0)\partial_{x}^{k-1}\cO(|W_{x}|^2)(0)\end{align*} may be bounded by
$$\|\partial_{x}^{k+1}
w^{II}\|_{L^2}\|W\|_{W^{2,\infty}}\|W\|_{H^k} -\partial_{x}^k
w^{II}(0)\partial_{x}^{k-1}\cO(|W_{x}|^2)(0).$$ The boundary term
will contribute to energy estimates
in the form \eqref{Ib-alpha} of $I_b^\alpha$, and
thus we may use the parabolic equations to get rid of this term as
we did in \eqref{BCest2}, \eqref{BCest3}.
Thus, we may relax $(A3')$ to $(A3)$,
completing the proof of the general case $(A1)-(A3)$ and the
proposition.\end{proof}

\subsection{Proof of nonlinear stability}
Defining the perturbation variable $U:= \tilde U - \bU$, we obtain
the nonlinear perturbation equations
\begin{equation}\label{per-eqs} U_t - LU = \sum_j
Q^j(U,U_x)_{x_j},\end{equation} where
\begin{equation}\label{newqbounds}
\begin{aligned}
Q^j(U,U_x)&=\cO(|U||U_x|+|U|^2)\\
Q^j(U,U_x)_{x_j}&= \cO(|U||U_{x}|+|U||U_{xx}|+|U_x|^2)\\
Q^j(U,U_x)_{x_jx_k}&=
\cO(|U||U_{xx}|+|U_x||U_{xx}|+|U_x|^2+|U||U_{xxx}|)
\end{aligned}
\end{equation}
so long as $|U|$ remains bounded.

For boundary conditions written in $U$-coordinates, (B) gives
\begin{equation}\label{BCs-in}
\begin{aligned}
h=\tilde h -\bar h&=
(\tilde W(U+\bar U)-\tilde W(\bar U))(0,\tx,t)\\
&=(\partial \tilde W/\partial \tilde U)(\bar U_0) U (0,\tx,t)+
\cO(|U(0,\tx,t)|^2).
\end{aligned}
\end{equation}
in inflow case, where $(\partial \tilde W/\partial \tilde U)(\bar U_0)$
is constant and invertible, and
\begin{equation}\label{BCs-out}
\begin{aligned}
h=\tilde h -\bar h&=
(\tilde w^{II}(U+\bar U)-\tilde w^{II}(\bar U))(0,\tx,t)\\
&= (\partial \tilde w^{II}/\partial \tilde U)(\bar U_0)
U(0,\tx,t) + \cO(|U(0,\tx,t)|^2)\\
&= m\begin{pmatrix} \bar b_1 & \bar b_2 \end{pmatrix}(\bar U_0)
U(0,\tx,t) + \cO(|U(0,\tx,t)|^2)
\\&= m B(\bar U_0)U(0,\tx,t) + \cO(|U(0,\tx,t)|^2)
\end{aligned}
\end{equation}for some invertible constant matrix $m$.

Applying Lemma \ref{lem-duhamel} to \eqref{per-eqs}, we obtain
\begin{equation}\label{Duhamel}
\begin{aligned}
  U(x,t)=& \cS(t) U_0 + \int_0^t \cS(t-s)\sum_j
\partial_{x_j}Q^j(U,U_x)ds + \Gamma U(0,\tx,t)
\end{aligned}
\end{equation} where $U(x,0) = U_0(x),$ \begin{equation}\label{GammaU}\Gamma U(0,\tx,t):=\int_0^t\int_{\RR^{d-1}}(\sum_jG_{y_j}B^{j1}+GA^1)(x,t-s;0,\tilde y)U(0,\tilde y,s)\,d\tilde yds,\end{equation} and
$G$ is the Green function of $\partial_t-L$.

\begin{proof}[Proof of Theorem \ref{theo-nonlin}]
Define \begin{equation}\label{zeta} \begin{aligned}\zeta(t):=\sup_s
&\Big(|U(s)|_{L^2_x}(1+s)^{\frac{d-1}4}+|U(s)|_{L^\infty_x}(1+s)^{\frac
d2}
\\&+(|U(s)|+|U_x(s)|+|\partial_{\tx}^2U(s)|)_{L^{2,\infty}_{\tx,x_1}}(1+s)^{\frac{d+1}4}
\Big).\end{aligned}
\end{equation}

 We shall prove here that for all $t\ge
0$ for which a solution exists with $\zeta(t)$ uniformly bounded by
some fixed, sufficiently small constant, there holds
\begin{equation}\label{zeta-est}
\zeta(t) \le C(|U_0|_{L^1\cap H^s}+E_0+\zeta(t)^2) .\end{equation}

This bound together with continuity of $\zeta(t)$ implies that
\begin{equation}\label{zeta-est1} \zeta(t) \le 2C(|U_0|_{L^1\cap H^s}+E_0)\end{equation}
for $t\ge0$, provided that $|U_0|_{L^1\cap H^s} +E_0< 1/4C^2$. This
would complete the proof of the bounds as claimed in the theorem,
and thus give the main theorem.

By standard short-time theory/local well-posedness in $H^s$, and the
standard principle of continuation, there exists a solution $U\in
H^s$ on the open time-interval for which $|U|_{H^s}$ remains
bounded, and on this interval $\zeta(t)$ is well-defined and
continuous. Now, let $[0,T)$ be the maximal interval on which
$|U|_{H^s}$ remains strictly bounded by some fixed, sufficiently
small constant $\delta>0$. By Proposition \ref{prop-energy-est}, and
the Sobolev embeding inequality $|U|_{W^{2,\infty}}\le C|U|_{H^s}$,
we have
\begin{equation}\label{Hs}\begin{aligned}|U(t)|_{H^s}^2 &\le Ce^{-\theta t}|U_0|_{H^s}^2
+ C \int_0^t
e^{-\theta(t-\tau)}\Big(|U(\tau)|_{L^2}^2+|\CalB_h(\tau)|^2\Big)d\tau\\&\le
C(|U_0|_{H^s}^2+ E_0^2 +\zeta(t)^2)(1+t)^{-(d-1)/2}.
\end{aligned}\end{equation}
and so the solution continues so long as $\zeta$ remains small, with
bound \eqref{zeta-est1}, yielding existence and the claimed bounds.

Thus, it remains to prove the claim \eqref{zeta-est}. First by \eqref{Duhamel}, we obtain
\begin{equation}\begin{aligned} |U(t)|_{L^2}\le& |\cS(t)U_0|_{L^2} +
\int_0^t|\cS_1(t-s)\partial_{x_j}Q^j(s)|_{L^2}ds \\&+ \int_0^t
|\cS_2(t-s)\partial_{x_j}Q^j(s)|_{L^2}ds +|\Gamma  U(0,\tx,t)|_{L^2}\\\le& I_1 + I_2+I_3+|\Gamma  U(0,\tx,t)|_{L^2}
\end{aligned}\end{equation}

where
$$\begin{aligned}I_1:&=|\cS(t) U_0|_{L^2}\le C (1+t)^{-\frac{d-1}{4}}|U_0|_{L^1\cap
H^3},\\
I_2:&=\int_0^t|\cS_1(t-s)\partial_{x_j}Q^j(s)|_{L^2}ds
\\&\le C\int_0^t (1+t-s)^{-\frac{d-1}{4}-\frac12}|Q^j(s)|_{L^1} + (1+s)^{-\frac{d-1}4}|Q^j(s)|_{L^{1,\infty}_{\tx,x_1}} ds
\\&\le C\int_0^t (1+t-s)^{-\frac{d-1}{4}-\frac12}|U|_{H^1}^2
+ (1+t-s)^{-\frac{d-1}4}\Big(|U|^2_{L^{2,\infty}_{\tx,x_1}}+|U_x|^2_{L^{2,\infty}_{\tx,x_1}}\Big)ds\\&\le C(|U_0|_{H^s}^2+\zeta(t)^2)\int_0^t
\Big[ (1+t-s)^{-\frac{d-1}{4}-\frac12}(1+s)^{-\frac{d-1}{2}}
\\&\quad\qquad+ (1+t-s)^{-\frac{d-1}4}(1+s)^{-\frac{d+1}2}\Big]ds\\&\le
C(1+t)^{-\frac{d-1}{4}}(|U_0|_{H^s}^2+\zeta(t)^2)
\end{aligned}$$
and
$$\begin{aligned} I_3:&=\int_0^t
|\cS_2(t-s)\partial_{x_j}Q^j(s)|_{L^2}ds\\&\le \int_0^t
e^{-\theta(t-s)}|\partial_{x_j}Q^j(s)|_{H^3}ds
\\&\le C\int_0^t
e^{-\theta(t-s)}(|U|_{L^\infty} + |U_x|_{L^\infty})|U|_{H^5}ds
\\&\le C\int_0^t
e^{-\theta(t-s)}|U|_{H^s}^2ds
\\&\le C(|U_0|_{H^s}^2+\zeta(t)^2) \int_0^t
e^{-\theta(t-s)}(1+s)^{-\frac{d-1}{2}}ds\\&\le
C(1+t)^{-\frac{d-1}{2}}(|U_0|_{H^s}^2+\zeta(t)^2).
\end{aligned}$$
Meanwhile, for the boundary term $|\Gamma  U(0,\tx,t)|_{L^2}$, we treat two cases separately. First for the inflow case, then by \eqref{BCs-in} we have $$|U(0,\tx,t)|\le C|h(\tx,t)| + \cO(|U(0,\tx,t)|^2),$$ and thus $|U(0,\tx,t)|\le C|h(\tx,t)|$, provided that $|h|$ is sufficiently small. Therefore under the
hypotheses on $h$ in Theorem \ref{theo-nonlin}, Proposition \ref{prop-BCs-est} yields $$|\Gamma  U (0,\cdot,\cdot)|_{L^2_x} \le CE_0 (1+t)^{-\frac{d-1}{4}}.$$

Now for the outflow case, recall that in this case $G(x,t;0,\tilde y) \equiv 0$. Thus \eqref{GammaU} simplifies to \begin{equation}\label{cBUest}\Gamma  U(0,\tx,t)=\int_0^t\int_{\RR^{d-1}}G_{y_1}(x,t-s;0,\tilde y)B^{11}U(0,\tilde y,s)\,d\tilde yds.\end{equation}
To deal with this term, we shall use Proposition \ref{prop-BCs-est} as in the inflow case.
In view of \eqref{BCs-out},
$$|B^{11}U(0,\tilde y,s)| \le C|h(\tilde y,t)|+\cO(|U(0,\tilde y,s)|^2),$$ and assumptions on $h$ are imposed as in Theorem \ref{theo-lin},
so that \eqref{hdecay} is satisfied.
To check the last term $\cO(|U(0)|^2),$ using the definition \eqref{zeta} of $\zeta(t)$, we have
$$\begin{aligned} |\cO(|U(0,\tilde y,s)|^2)|_{L^2}&\le C|U|_{L^{\infty}}|U|_{L^{2,\infty}_{\tilde x,x_1}} \le C\zeta^2(t)(1+s)^{-\frac d2 - \frac{d+1}{4}}\\|\cO(|U(0,\tilde y,s)|^2)|_{L^{\infty}}&\le C|U|_{L^{\infty}}^2\le C\zeta^2(t)(1+s)^{-d}\end{aligned} $$ and for the term $\cD_h$ with $h$ replaced by $\cO(|U(0,\tilde y,s)|^2)$, using the standard H{\"{o}}lder inequality to get
$$\begin{aligned} |\cD_h|_{L^1_{\tx}}&\le C(|U|_{L^{2,\infty}}^2+|U_x|^2_{L^{2,\infty}}+|U_{\tx\tx}|^2_{L^{2,\infty}})\le C\zeta^2(t)(1+s)^{-\frac {d+1}{2}}\\
|\cD_h|_{H^{[(d-1)/2]+5}_{\tx}}
&\le C|U|_{L^{\infty}}|U|_{H^s}\le C\zeta^2(t)(1+s)^{-d/2-(d-1)/4}.\end{aligned} $$
We remark here that Sobolev bounds \eqref{Hs} are not good enough for estimates of $\cD_h$ in $L^1$, requiring a decay at rate $(1+t)^{-d/2-\epsilon}$ for the two-dimensional case (see Proposition \ref{prop-BCs-est}). This is exactly why we have to keep track of $U_{\tx\tx}$ in $L^{2,\infty}$ norm in $\zeta(t)$ as well, to gain a bound as above for $\cD_h$.

Therefore applying Proposition \ref{prop-BCs-est}, we also obtain \eqref{cBUest} for the outflow case. Combining these above estimates yields \begin{equation} |U(t)|_{L^2}(1+t)^{\frac{d-1}{4}} \le
C(|U_0|_{L^1\cap H^s}+E_0+\zeta(t)^2) .\end{equation}

Next, we estimate \begin{equation}\label{estU}\begin{aligned}
|U(t)|_{L^{2,\infty}_{\tx,x_1}}\le&
|\cS(t)U_0|_{L^{2,\infty}_{\tx,x_1}} +
\int_0^t|\cS_1(t-s)\partial_{x_j}Q^j(s)|_{L^{2,\infty}_{\tx,x_1}}ds
\\&+ \int_0^t |\cS_2(t-s)\partial_{x_j}Q^j(s)|_{L^{2,\infty}_{\tx,x_1}}ds +|\Gamma  U(0,\tx,t)|_{L^{2,\infty}_{\tx,x_1}}\\\le& J_1 +
J_2+J_3+|\Gamma  U(0,\tx,t)|_{L^{2,\infty}_{\tx,x_1}}
\end{aligned}\end{equation}
where
$$\begin{aligned}
J_1:&=|\cS(t)U_0|_{L^{2,\infty}_{\tx,x_1}}\le C
(1+t)^{-\frac{d+1}{4}}|U_0|_{L^1\cap
H^{4}} \\
J_2:&=\int_0^t|\cS_1(t-s)\partial_{x_j}Q^j(s)|_{L^{2,\infty}_{\tx,x_1}}ds
\\&\le C\int_0^t (1+t-s)^{-\frac{d+1}{4}-\frac12}|Q^j(s)|_{L^1} + (1+s)^{-\frac{d+1}4}|Q^j(s)|_{L^{1,\infty}_{\tx,x_1}} ds
\\&\le C\int_0^t (1+t-s)^{-\frac{d+1}{4}-\frac12}|U|_{H^1}^2
+ (1+t-s)^{-\frac{d+1}4}\Big(|U|^2_{L^{2,\infty}_{\tx,x_1}}+|U_x|^2_{L^{2,\infty}_{\tx,x_1}}\Big)ds
\\&\le C(|U_0|_{H^s}^2+\zeta(t)^2)\int_0^t
(1+t-s)^{-\frac{d+1}{4}-\frac12}(1+s)^{-\frac{d-1}{2}} \\&\quad\qquad+
(1+t-s)^{-\frac{d+1}4}(1+s)^{-\frac{d+1}2}ds
\\&\le
C(1+t)^{-\frac{d+1}{4}}(|U_0|_{H^s}^2+\zeta(t)^2)
\end{aligned}$$
and (by Moser's inequality)
$$\begin{aligned}J_3:&=\int_0^t|\cS_2(t-s)\partial_{x_j}Q^j(s)|_{L^{2,\infty}_{\tx,x_1}}ds
\\&\le C\int_0^t e^{-\theta(t-s)}|\partial_{x_j}Q^j(s)|_{H^4}ds
\\&\le C\int_0^t e^{-\theta(t-s)}|U|_{L^{\infty}_{x}}|U|_{H^6}ds
\\&\le C(|U_0|_{H^s}^2+\zeta(t)^2)\int_0^t
e^{-\theta(t-s)}(1+s)^{-\frac{d}{2}}(1+s)^{-\frac{d-1}{4}}ds
\\&\le
C(1+t)^{-\frac{d+1}{4}}(|U_0|_{H^s}^2+\zeta(t)^2).
\end{aligned}$$
These estimates together with similar treatment for the boundary term yield \begin{equation}
|U(t)|_{L^{2,\infty}_{\tx,x_1}}(1+t)^{\frac{d+1}{4}} \le
C(|U_0|_{L^1\cap H^s}+E_0+\zeta(t)^2) .\end{equation}

Similarly, we have the same estimate for
$|U_x(t)|_{L^{2,\infty}_{\tx,x_1}}$. Indeed, we have
\begin{equation}\begin{aligned}
|U_x(t)|_{L^{2,\infty}_{\tx,x_1}}\le&
|\partial_x\cS(t)U_0|_{L^{2,\infty}_{\tx,x_1}} +
\int_0^t|\partial_x\cS_1(t-s)\partial_{x_j}Q^j(s)|_{L^{2,\infty}_{\tx,x_1}}ds
\\&+ \int_0^t |\partial_x\cS_2(t-s)\partial_{x_j}Q^j(s)|_{L^{2,\infty}_{\tx,x_1}}ds +|\partial_x\Gamma  U(0,\tx,t)|_{L^{2,\infty}_{\tx,x_1}}\\\le& K_1 +
K_2+K_3+|\partial_x\Gamma  U(0,\tx,t)|_{L^{2,\infty}_{\tx,x_1}}
\end{aligned}\end{equation}
where $K_2$ and $K_3$ are treated exactly in the same way as the
treatment of $J_2,J_3$, yet in the first term of $K_2$ it is a bit
better by a factor
$t^{-1/2}$. Similar bounds hold for $|U_{\tx\tx}|$ in $L^{2,\infty}$, noting that there are no higher derivatives
 in $x_1$ involved and thus similar to those in \eqref{estU}.

Finally, we estimate the $L^\infty$ norm of $U$. By Duhamel's
formula \eqref{Duhamel}, we obtain
\begin{equation}\begin{aligned} |U(t)|_{L^\infty}\le& |\cS(t)U_0|_{L^\infty} +
\int_0^t|\cS_1(t-s)\partial_{x_j}Q^j(s)|_{L^\infty}ds
\\&+ \int_0^t |\cS_2(t-s)\partial_{x_j}Q^j(s)|_{L^\infty}ds+|\Gamma  U(0,\tx,t)|_{L^\infty} \\\le& L_1 +
L_2+L_3 +|\Gamma  U(0,\tx,t)|_{L^\infty}
\end{aligned}\end{equation}
where the boundary term is treated in the same way as above, and for 
$|\gamma| =[(d-1)/2]+2$,
$$\begin{aligned}L_1:&=|\cS(t) U_0|_{L^\infty}\le C (1+t)^{-\frac{d}{2}}|U_0|_{L^1\cap
H^{|\gamma|+3}},\\
L_2:&=\int_0^t|\cS_1(t-s)\partial_{x_j}Q^j(s)|_{L^\infty}ds
\\&\le C\int_0^t (1+t-s)^{-\frac{d}{2}-\frac12}|Q^j(s)|_{L^1} + (1+s)^{-\frac{d}2}|Q^j(s)|_{L^{1,\infty}_{\tx,x_1}} ds
\\&\le C\int_0^t (1+t-s)^{-\frac{d}{2}-\frac12}|U|_{H^1}^2
+ (1+t-s)^{-\frac{d}2}\Big(|U|^2_{L^{2,\infty}_{\tx,x_1}}+|U_x|^2_{L^{2,\infty}_{\tx,x_1}}\Big)ds\\&\le C(|U_0|_{H^s}^2+\zeta(t)^2)\int_0^t
\Big[ (1+t-s)^{-\frac{d}{2}-\frac12}(1+s)^{-\frac{d-1}{2}}
\\&\qquad\quad+ (1+t-s)^{-\frac{d}2}(1+s)^{-\frac{d+1}2}\Big]ds \\&\le
C(1+t)^{-\frac{d}{2}}(|U_0|_{H^s}^2+\zeta(t)^2)
\end{aligned}$$
and (again by Moser's inequality),
$$\begin{aligned} L_3:&=\int_0^t
|\cS_2(t-s)\partial_{x_j}Q^j(s)|_{L^\infty}ds\\&\le\int_0^t
|\cS_2(t-s)\partial_{x_j}Q^j(s)|_{H^{|\gamma|}}ds\\&\le \int_0^t
e^{-\theta(t-s)}|\partial_{x}Q^j(s)|_{H^{|\gamma|+3}}ds
\\&\le C\int_0^t
e^{-\theta(t-s)}|U|_{L^\infty}|U|_{H^{|\gamma|+5}}ds
\\&\le C(|U_0|_{H^s}^2+\zeta(t)^2) \int_0^t
e^{-\theta(t-s)}(1+s)^{-\frac{d}{2}}(1+s)^{-\frac{d-1}{4}}ds \\&\le
C(1+t)^{-\frac{d}{2}}(|U_0|_{H^s}^2+\zeta(t)^2).
\end{aligned}$$

Therefore we have obtained\begin{equation}
|U(t)|_{L^{\infty}_{x}}(1+t)^{\frac{d}{2}} \le C(|U_0|_{L^1\cap
H^s}+E_0+\zeta(t)^2)\end{equation} and thus completed the proof of claim
\eqref{zeta-est}, and the main theorem.\end{proof}


\appendix


\section{Physical discussion in the isentropic case}\label{phys}

In this appendix, we revisit in slightly more detail the drag-reduction
problem sketched in Examples \ref{aeroexam}--\ref{aeroexamin},
in the simplified context of the two-dimensional isentropic case.
Following the notation of \cite{GMWZ5},
consider the two-dimensional isentropic compressible 
Navier--Stokes equations
\begin{align}
\rho_t+ (\rho u)_x +(\rho v)_y&=0,\label{eq:mass}\\
(\rho u)_t+ (\rho u^2)_x + (\rho uv)_y + p_x&=
(2\mu +\eta) u_{xx}+ \mu u_{yy} +(\mu+ \eta )v_{xy},\label{eq:momentumx}\\
(\rho v)_t+ (\rho uv)_x + (\rho v^2)_y + p_y&= \mu v_{xx}+ (2\mu
+\eta) v_{yy} + (\mu+\eta) u_{yx}\label{eq:momentumy}
\end{align}
on the half-space $y>0$, where $\rho$ is density, $u$ and $v$ are
velocities in $x$ and $y$ directions, and $p=p(\rho)$ is pressure,
and $\mu>|\eta|\ge0$ are coefficients of first (``dynamic'') and
second viscosity, making the standard monotone pressure
assumption $p'(\rho)>0$.

We imagine a porous airfoil lying along the $x$-axis, with constant
imposed normal velocity $v(0)=V$ and zero transverse relative velocity
$u(0)=0$ imposed at the airfoil surface, and seek a laminar
boundary-layer flow $(\rho, u,v)(y)$ with transverse
 relative velocity $u_\infty$ a short distance away the airfoil, with
$|V|$ much less than the sound speed $c_\infty$ and $|u_\infty|$ of an
order roughly comparable to $c_\infty$.

\subsection{Existence}
The possible boundary-layer solutions have been completely categorized 
in this case in Section 5.1 of \cite{GMWZ5}.
We here cite the relevant conclusions, referring to \cite{GMWZ5}
for the (straightforward) justifying computations.

\subsubsection{Outflow case ($V<0$)} In the outflow case, the scenario
described above corresponds to case (5.15) of \cite{GMWZ5},
in which it is found that the only solutions are purely {\it transverse}
flows
\begin{equation}\label{transverse}
(\rho, v)\equiv (\rho_0, V),\quad
u(y)= u_\infty(1- e^{\rho_0 V y/\mu}),
\end{equation}
varying only in the tranverse velocity $u$.
The drag force per unit length at the airfoil,
by Newton's law of viscosity, is
\begin{equation}\label{drag}
\mu \bar u_{y}|_{y=0}= u_\infty \rho_\infty |V|,
\end{equation}
since momentum $m:=\rho_0 V=\rho_\infty V$ is constant throughout
the layer, so that ($\rho_\infty$, $u_\infty$ being
imposed by ambient conditions away from the wing) 
{\it drag is proportional to the speed $|V|$ of the imposed normal velocity}.



\subsubsection{Inflow case ($V>0$)}
Consulting again \cite{GMWZ5} (p. 61), we find for $V>0$ with
specified $(\rho, u,v)(0)$ of the orders described above, 
the only solutions are purely {\it normal} flows,
\begin{equation}\label{normal}
u \equiv u(0),\quad (\rho,v)=(\rho, v)(y),
\end{equation}
varying only in the normal velocity $v$.
Thus, it is not possible to reconcile the velocity $u(0)$ at the airfoil
with the velocity $u_\infty >>c$ some distance away.

As discussed in \cite{MN}, the expected behavior in such a case consists
rather of a combination of a boundary-layer at $y=0$
and one or more elementary planar shock, rarefaction, or contact waves 
moving away from $y=0$: in this case a shear wave moving with normal
fluid velocity $V$ into the half-space, across which the transverse
velocity changes from zero to $u_\infty$.
That is, a characteristic layer analogous to the solid-boundary
case {\it detaches} from the airfoil and travels outward into the
flow field.
In this case, one would not expect drag reduction compared to
the solid-boundary case, but rather some increase.

\subsection{Stability}
If we consider one-dimensional stability, or stability
with respect to perturbations depending only on $y$, 
we find that the linearized eigenvalue equations decouple
into the constant-coefficient linearized eigenvalue equations for $(\rho, v)$
about a constant layer $(\rho, v)\equiv (\rho_0, V)$,
and the scalar linearized eigenvalue equation
\begin{equation}\label{c-d}
\lambda \bar \rho u+m u_y=\mu u_{yy}
\end{equation}
associated with the constant-coefficient convection-diffusion equation
$ \bar \rho u_t +m u=\mu u_{yy},  $
$m:=\bar \rho \bar v\equiv \rho_0 V$, $\bar \rho\equiv \rho_0$.
As the constant layer $(\rho_0, V)$
is stable by Corollary \ref{specstab} or direct
calculation (Fourier transform), and
\eqref{c-d} is stable by direct calculation,
we may thus conclude that purely transverse layers are {\it one-dimensionally
stable}.

Considered with respect to general perturbations, the equations do
not decouple, nor do they reduce to constant-coefficient form,
but to a second order system whose coefficients are quadratic polynomials
in $e^{\rho_0 Vy}$.
It would be very interesting to try to resolve the question of
spectral stability by direct solution using this special form,
or, alternatively, to perform a numerical study as done in \cite{HLyZ2}
for the multi-dimensional shock wave case.

\begin{remark}\label{gen1d}
\textup{
For general laminar boundary layers $(\bar \rho, \bar u, \bar v)(y)$,
the one-dimensional stability problem, now variable-coefficient,
does not completely decouple, but has triangular form, breaking
into a system in $(\rho, v)$ alone and an equation in $u$ forced
by $(\rho,v)$.
Stability with respect to general perturbations, therefore, 
is equivalent to stability with respect to perturbations of form 
$(\rho, 0, v)$ or $(0,u,0)$.
For perturbations $(\rho,u,v)=(0,u,0)$, the $u$ equation again becomes
\eqref{c-d}, with $\mu$, $m$ still constant, but $\bar \rho$ varying
in $y$.  Taking the real part of the complex $L^2$ inner product 
of $u$ against \eqref{c-d} gives
$$
\Re \lambda \|u\|_{L^2}^2+\|u_y\|_{L^2}^2=0,
$$
hence for $\Re \lambda \ge 0$, $u\equiv \const=0$.
Thus, the layer is one-dimensionally stable if and only if 
the normal part $(\bar \rho,\bar v)$
is stable with respect to perturbations $(\rho, v)$.
Stability of normal layers was studied in \cite{CHNZ} for
a $\gamma$-law gas $p(\rho)=a\rho^\gamma$, $1\le \gamma \le 3$,
with the conclusion that {\it all layers are one-dimensionally stable}, 
independent of amplitude, in the general inflow and compressive outflow cases.
Hence, we can make the same conclusion for full layers 
$(\bar \rho, \bar u,\bar v)$.
In the present context, this includes all cases except for suction
with supersonic velocity $|V|>c_\infty$, which in the 
notation of \cite{CHNZ} is of {\it expansive outflow} type
(expected also to be stable, but not considered in \cite{CHNZ}), 
since $|\bar v|$ is decreasing
with $y$, so that density $\bar \rho$ (since $m=\bar \rho \bar v\equiv \const$)
is increasing.
}
\end{remark}

\subsection{Discussion}
Note that we do not achieve by subsonic boundary suction
an exact laminar flow connecting the values $(u,v)=(0,V)$
at the wing to the values $(u_\infty,0)$ of the ambient
flow at infinity, but rather to an intermediate value $(u_\infty, V)$.
That is, we trade a large variation $u_\infty$ in shear for a
possibly small variation $V$ in normal velocity, which appears
now as a boundary condition for the outer, approximately Euler flow
away from the boundary layer.
Whether the full solution is stable appears to be
a question concerning also nonstationary Euler flow.
It is not clear either what is the optimal outflux velocity $V$.
From \eqref{drag} and the discussion just above, it appears desirable
to minimize $|V|$, since this minimizes both drag and 
the imbalance between flow $v_\infty$ just outside the boundary layer and
the ambient flow at infinity. 
On the other hand, we expect that stability becomes more delicate
in the characteristic limit $V\to 0^-$, in the sense that the
size of the basin of attraction of the boundary layer shrinks to zero
(recall, we have ignored throughout our analysis the size 
of the basin of attraction, taking perturbations as small
as needed without keeping track of constants).
These would be quite interesting issues for further investigation.




\end{document}